\newcommand{\eps}{\varepsilon}
\newcommand{\R}{\mathbb{R}}
\newcommand{\N}{\mathbb{N}}
\newcommand{\Z}{\mathbb{Z}}
\newcommand{\E}{\mathbb{E}}
\DeclareMathOperator*{\Var}{Var}
\newcommand{\balls}{n}
\newcommand{\bins}{m}
\newcommand{\abs}[1]{\left| {#1} \right|}
\newcommand{\ep}[2][]{\E_{{#1}}\left[ {#2} \right]}
\newcommand{\epcond}[3][]{\E_{{#1}}\left[ {#2} \, \middle| \, {#3} \right]}
\newcommand{\prb}[2][]{\Pr_{#1}\left[ #2 \right]}
\newcommand{\prbcond}[3][]{\Pr_{{#1}}\left[ {#2} \, \middle| \, {#3} \right]}
\newcommand{\ceil}[1]{\lceil {#1} \rceil}
\newcommand{\floor}[1]{\lfloor {#1} \rfloor}
\newcommand{\var}[1]{\Var\left[ {#1} \right]}
\newcommand{\indicator}[1]{\left[ {#1} \right]}
\newcommand\drop[1]{}
\newcommand{\set}[1]{\left\{ #1 \right\}}
\newcommand{\setbuilder}[2]{\set{ #1 \; \middle\vert \; #2 }}
\newtheorem{theorem}{Theorem}
\newtheorem{lemma}[theorem]{Lemma}
\newtheorem{corollary}[theorem]{Corollary}
\newtheorem{claim}{Claim}
\theoremstyle{definition}
\newtheorem{definition}{Definition}
\theoremstyle{remark}
\newtheorem*{remark}{Remark}
\title{Load Balancing with Dynamic Set of Balls and Bins}
\author{Anders Aamand\footnote{Basic Algorithms Research Copenhagen (BARC), University of Copenhagen.} \and Jakob B\ae k Tejs Knudsen$^*$ \and Mikkel Thorup$^*$ }
\date{\today}
\begin{document}
\maketitle

\begin{abstract}
  In dynamic load balancing, we wish to distribute balls into
  bins in an environment where both balls and bins can be added and removed.
  We want to minimize the maximum load of any bin but we also want to minimize
  the number of balls and bins that are affected when
  adding or removing a ball or a bin. We want a hashing-style solution
  where we given the ID of a ball can find its bin efficiently.

  We are given a user-specified balancing parameter $c=1+\eps$, where
  $\eps\in (0,1)$. Let $\balls$ and $\bins$ be the current number of
  balls and bins. Then we want no bin with load above
  $C=\ceil{c\balls/\bins}$, referred to as the {\em capacity} of the
  bins.

  We present a scheme where we can locate a ball checking
  $1+O(\log 1/\eps)$ bins in expectation.
  When inserting or deleting a ball, we expect to move
  $O(1/\eps)$ balls, and when inserting or deleting a bin,
  we expect to move $O(C/\eps)$ balls. Previous bounds were off by a factor $1/\eps$. 

  The above bounds are best possible when $C=O(1)$ but for larger $C$, we can do much better: Let
  \[f=\left\{\begin{array}{ll}
  \eps C&\textnormal{ if } C\leq \log 1/\eps \\
  \eps\sqrt{C}\cdot \sqrt{\log(1/(\eps\sqrt{C}))}&\textnormal{ if } \log 1/\eps\leq C<\tfrac{1}{2\eps^2}\\ 1& \textnormal{ if } C\geq \tfrac{1}{2\eps^2}\end{array}\right.\]
  We show that we expect to move $O(1/f)$ balls when
  inserting or deleting a ball, and $O(C/f)$ balls when
  inserting or deleting a bin. Moreover, when $C\geq \log 1/\eps$, we can search a ball checking only $O(1)$ bins in expectation.  
  
  For the bounds with larger $C$,  we first have to resolve a much simpler
  probabilistic problem. Place $\balls$ balls in $\bins$ bins of
  capacity $C$, one ball at the time. Each ball picks a uniformly
  random non-full bin. We show that in expectation and with high probability, the fraction of non-full bins is $\Theta(f)$. Then the expected number of bins that a new ball would have to visit to find one that is not full is $\Theta(1/f)$. As it turns out, this is also the complexity of an insertion in our more complicated scheme where both balls and bins can be added and removed.  
\end{abstract}
\setcounter{page}0
\newpage
  \section{Introduction}\label{sec:intro}
Load balancing in dynamic environments is a central problem in
designing several networking systems and web
services~\cite{chord,chord-theory}. We wish to allocate {\em
  clients\/} (also referred to as {\em balls}) to {\em servers\/}
(also referred to as {\em bins}) in such a way that none of the
servers gets overloaded. Here, the {\em load\/} of a server is the
number of clients allocated to it. We want a hashing-style solution
where we given the ID of a client can efficiently find its server.
Both clients and servers
may be added or removed in any order, and with such changes, we do not
want to move too many clients. Thus, while the dynamic allocation
algorithm has to always ensure a proper load balancing, it should aim
to minimize the number of clients moved after each change to the
system.  
For every update in the system, we need to change the allocation of clients to servers. 
For simplicity, we assume that the updates (ball and bin insertions and removals) do not happen simultaneously and will be operated one at  a time, so that
we have time to finish changing the allocation before
we get another update. Such allocation problems become even more
challenging when we face hard constraints in the capacity of each
server, that is, each server has a {\em capacity\/} and the load may
not exceed this capacity. Typically, we want capacities close to the
average loads.  

There is a vast literature on solutions in the much
simpler case where the set of servers is fixed and only the client set
is updated. For now, we focus on solutions that are known to work
in our fully-dynamic case where both clients and servers can be added
and removed in an arbitrary order.  This rules out solutions where
only the last added server may be removed\footnote{In particular, this rules out the
  external memory techniques \cite{Lar88} where blocks (playing the
  role of fixed capacity servers) can only be added to and removed
  from the top of the current memory.}. The above problem formulation
is very general, and does not assume anything about the ratio between
the number of clients $\balls$, and the number of servers $\bins$.
Processors are cheap, so one could for instance imagine systems with a large
number of servers. However, it is also conceivable having a system with many clients or a balanced system with $\balls\approx\bins$. 

The classic solution to the scenario where both clients and servers
can be added and removed is Consistent
Hashing~\cite{chord,chord-theory} where the current clients are
assigned in a random way to the current servers.  While consistent
hashing schemes minimize the expected number of movements, they may
result in hugely overloaded servers, and they do not allow for
explicit capacity constraints on the servers. The basic point is that
the load balancing of consistent hashing~\cite{chord-theory,chord} is
no better than a random assignment of clients to servers. The
same issue holds for Highest Random Weight
Hashing (popularly known as Rendezvous Hashing)~\cite{TR98:rendezvous}.
Hence, with
$\balls$ clients and $\bins$ servers, we expect good load
balancing if $\balls/\bins=\omega(\log \bins)$, but the balance is
lost with smaller loads, e.g., with $\balls\approx \bins$, we expect many
servers to be overloaded with $\Theta(\log \bins/ \log\log \bins)$
clients.

More recently, Mirrokni et al. \cite{MTZ18:consistent} presented an
algorithm that works with arbitrary capacity constraints on the
servers. For the purpose of load balancing, the system designer can
specify a balancing parameter $c=1+\eps$, guaranteeing that the
maximum load is at most $\ceil{c\balls/\bins}$.  While maintaining
this hard balancing constraint, they limit the expected number of
clients to be moved when clients or servers are inserted or
removed. From a more practical perspective, we think of the load
balancing parameter $c=1+\eps$ as a simple knob which captures the
tradeoff between load balancing and stability upon changes in the
system. This gives a more direct control to the system designer in
meeting explicit balancing constraints.

Even without capacity constraints, the obvious general lower bounds
for moves are as follows. When a client is added or removed, at least
we have to move that client.  When a server is added or removed, at
least we have to move the clients belonging to it. On the average, we
therefore have to move least ${\balls \over \bins}$ clients when a server is
added or removed.

With the algorithm from \cite{MTZ18:consistent}, while guaranteeing a
balancing parameter $c=1+\eps\leq 2$, when a client is added or
removed, the expected number of clients moved is $O({1\over
  \eps^2})$. When a server is added or removed, the expected number of
clients moved is $O({\balls \over \eps^2\bins })$. These numbers are only a factor
$O({1\over \eps^2})$ worse than the general lower bounds without
capacity constrains.  For balancing parameter $c\geq 2$, the expected
number of moves is increased by a factor $1+O(\frac{\log c}c)$ over
the lower bounds. This implies that for superconstant
$c$, we only expect to pay a negligible cost in extra moves.

Focusing on the challenging case where $c=1+\eps \leq 2$, we present an algorithm which reduces the number
of moves by a factor $1/\eps$.  When inserting or deleting a ball, we
expect to move $O(1/\eps)$ balls, and when inserting or deleting a
bin, we expect to move $O(C/\eps)$ balls. To search a ball we only need
to consider $O(\log (1+1/\eps))$ ``consecutive'' bins.

With $C:=cn/m$, these bounds are essentially best possible when $C=O(1)$ is a constant. However, for larger $C$, we can do even better. In order to explain, this we first have to consider the following much simpler probabilistic problem: Consider placing $\balls$ balls in $\bins$ bins, each of capacity $C=(1+\eps)n/m$, one ball at the time, where each ball picks a uniformly random non-full bin. We are interested in the number of non-full bins both in expectation and with concentration bounds. To our surprise, this relatively simple problem does not seem to have been analyzed before, and so, we believe our bounds to be of independent interest. To state our bounds, we define  
\begin{align}\label{eq:formula-for-f}
  f=\left\{\begin{array}{ll}
  \eps C&\textnormal{ if } C\leq \log 1/\eps \\
  \eps\sqrt{C}\cdot \sqrt{\log(1/(\eps\sqrt{C}))}&\textnormal{ if } \log 1/\eps\leq C<\tfrac{1}{2\eps^2}\\ 1& \textnormal{ if } C\geq \tfrac{1}{2\eps^2}\end{array}\right.,
  \end{align}
whenever $0<\eps\leq 1$ and $C\geq1$ is integral.
We are going to prove the following result
  \begin{theorem}\label{thm:main0}
Let $n,m\in \N$ and $0<\eps<1$ be such that $C=(1+\eps)n/m$ is integral. Moreover assume that that $1/\eps=m^{o(1)}$. Suppose we distribute $n$ balls sequentially into $m$ bins each of capacity $C$, for each ball choosing a uniformly random non-full bin. The expected fraction of non-full bins is $\Theta(f)$.
    \end{theorem}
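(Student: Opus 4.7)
By exchangeability of the bins the expected fraction of non-full bins equals $\Pr[X_{1}<C]$, where $X_{1}$ is the final load of a fixed bin. My plan is to evaluate this marginal through the mean-field limit of the process. Let $b_{k}(s)$ denote the expected fraction of bins of load $k$ after $s$ balls and write $q(s):=\sum_{k<C}b_{k}(s)$ for the non-full fraction. Since a ball hits any particular non-full bin with probability $1/(m\,q(s))$, the mean-field equations read $db_{k}/ds=(b_{k-1}-b_{k})/(m\,q(s))$ for $0\le k<C$ (with $b_{-1}\equiv 0$). The time change $d\tau=ds/(m\,q)$ linearises these to $db_{k}/d\tau=b_{k-1}-b_{k}$, whose solution is the Poisson profile $b_{k}(\tau)=e^{-\tau}\tau^{k}/k!$ for $k\le C-1$. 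Hence $q(\tau)=\Pr[\mathrm{Po}(\tau)<C]$, and a Fubini-style exchange shows that the average load at parameter $\tau$ equals $\mathbb{E}[\min(\mathrm{Po}(\tau),C)]$. Setting the latter equal to $n/m=C/(1+\eps)$ pins down the stopping parameter $\tau^{\ast}$, and the predicted non-full fraction is $F^{\ast}:=\Pr[\mathrm{Po}(\tau^{\ast})<C]$.

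\textbf{Matching $F^{\ast}$ to $f$ in three regimes.}
The defining relation rearranges to $\mathbb{E}[(C-\mathrm{Po}(\tau^{\ast}))^{+}]=C\eps/(1+\eps)$. When $C$ is large the Poisson CLT with $t:=(C-\tau^{\ast})/\sqrt{\tau^{\ast}}$ reduces this to $\sqrt{\tau^{\ast}}\,g(t)=C\eps/(1+\eps)$, where $g(t):=\phi(t)+t\,\Phi(t)$, while $F^{\ast}\approx\Phi(t)$. In Regime~3 ($\eps\sqrt{C}=\Omega(1)$) this yields $t=\Theta(1)$ and $F^{\ast}=\Theta(1)=f$. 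In Regime~2 the right side tends to zero, forcing $t\to-\infty$; using the asymptotics $g(-s)\sim\phi(s)/s^{2}$ and $\Phi(-s)\sim\phi(s)/s$ as $s\to\infty$ one solves $s=\Theta\bigl(\sqrt{\log(1/(\eps\sqrt{C}))}\bigr)$ and obtains $F^{\ast}=\Theta(\eps\sqrt{C}\cdot s)=\Theta(f)$. In Regime~1 the CLT breaks down because $\tau^{\ast}\sim\log(1/\eps)\gg C$; instead I would argue directly that the Poisson weights $e^{-\tau^{\ast}}(\tau^{\ast})^{k}/k!$ for $k\le C-1$ increase in $k$ with ratio $\tau^{\ast}/k\gg 1$, so both $F^{\ast}$ and $C\eps/(1+\eps)=\mathbb{E}[(C-\mathrm{Po}(\tau^{\ast}))^{+}]$ are dominated, up to constant factors, by their common $k=C-1$ term. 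This gives $F^{\ast}=\Theta(\eps C)=\Theta(f)$.

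\textbf{From fluid limit to expectation; main obstacle.}
Finally, to convert $F^{\ast}$ into a bound on $\mathbb{E}[N_{n}]/m$, I would use that the bin loads in this capacitated process form a negatively associated family (each ball increments exactly one coordinate), whence the indicators $\mathbf{1}[X_{i}<C]$ are also negatively associated and $N_{n}$ enjoys Chernoff-type concentration, giving $|N_{n}-\mathbb{E}[N_{n}]|=O(\sqrt{m\log m})$ with high probability; the assumption $1/\eps=m^{o(1)}$ guarantees this fluctuation is of lower order than the target mean $m f$ in every regime, so that $\mathbb{E}[N_{n}]/m=\Theta(F^{\ast})=\Theta(f)$. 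I expect the main obstacle to lie in Regime~2, where $f$ is at an intermediate scale and I must track the Poisson-CLT error, the logarithmic correction inside $f$, and the fluctuations of the stochastic process around the ODE trajectory all tightly enough that none of these errors swallows the leading $\Theta(f)$ contribution.
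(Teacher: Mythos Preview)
Your approach and the paper's are the same at the core: both recognise that after a time change each bin's load is a truncated Poisson with rate $\tau^{\ast}$ determined by $\E[\min(\mathrm{Po}(\tau^{\ast}),C)]=n/m$, and both then estimate $\Pr[\mathrm{Po}(\tau^{\ast})<C]$ in the three regimes. The paper makes your fluid limit rigorous by an explicit coupling: throw an infinite stream of balls into \emph{uncapacitated} bins and let $T$ be the first time the truncated loads $Y_j^{(T)}:=\min(X_j^{(T)},C)$ sum to $n$; then the capacitated load vector is exactly $(Y_j^{(T)})_j$, and your $\tau^{\ast}$ is $T/m$ in the large-$m$ limit. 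For the regime analysis the paper works directly with Poisson point-probability ratios $\Pr[A_k]/\Pr[A_{k-1}]$ rather than the Gaussian asymptotics you use; either route works, and you are right that Regime~2 is where the bookkeeping is most delicate.

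The one genuine gap is your concentration step. The claim that the loads are negatively associated because ``each ball increments exactly one coordinate'' is the standard proof for \emph{independent} multinomial allocations; here the destination of ball $i$ depends on which bins were filled by balls $1,\ldots,i-1$, so the increments are not independent and that proof does not apply. Whether NA nevertheless holds for this capacitated process is not obvious and would need a separate argument. The paper avoids the issue entirely: it proves concentration of the stopping time $T$ via Azuma (each uncapacitated ball changes $\E[T\mid\mathcal{F}_i]$ by at most $2/\eps$), and separately concentration of $\sum_j Y_j^{(N)}$ for any fixed $N$ (Lipschitz constant~$1$). Together these let one replace $T$ by a deterministic $N=\E[T]\pm O(\sqrt{m}(\log n)/\eps^{2})$, read off $\Pr[Y_j^{(N)}<C]$ from the fixed-time Poisson/binomial analysis --- exactly your $F^{\ast}$ --- and transfer back. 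If you swap your NA step for this coupling-plus-Azuma device, your argument goes through.
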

How does this result relate to our dynamic load allocation problem? We can think of the distribution scheme in the theorem as the algorithmically \emph{weakest} way to assign the balls to the capacitated bins. Here, by algorithmically weak, we mean that it  cannot be implemented in the dynamic setting where balls and bins can come and go. However, it is still helpful to think of it as the mathematically \emph{ideal} way of solving dynamic load allocation with bounded loads in the following sense. Imagine that an insertion of a ball is carried out by repeatedly choosing a random bin until we find a non-full one where we place the ball. Then we avoid all the unpleasant dependencies between the loads of the bins visited during the insertion that arise in algorithmically stronger schemes. For example, one can compare to a scheme like linear probing where the cascading effect of balls causes heavy dependencies between the loads of bins visited during a search or an insertion. It follows from~\Cref{thm:main0} that in the simple scheme above,  the expected number of bins visited when making an insertion is $O(1/f)$. The main contribution of this paper is to present a much stronger scheme which supports general insertions and deletions of both balls and bins, and which, nonetheless, achieves complexity bounds that are analogous to those in the mathematically ideal scheme above. To be precise, with our scheme, we expect to move $O(1/f)$ balls when inserting or deleting a ball, and $O(C/f)$ balls when inserting or deleting a bin and this is tight. Similar bounds holds on the number of bins visited when performing any of these updates. Our main technical challenge is handling all the intricate dependencies that arise in the much more complicated probabilistic setting in our scheme.

\paragraph{Applications.}
Consistent hashing has found numerous applications~\cite{OV11,GF12}
and early work in this area \cite{chord-theory,chordSIGComm,chord} has
been cited more than ten thousand times. To highlight the wide
variety of areas in which similar allocation problems might arise, we mention a few more important references to applications: content-addressable
networks \cite{CAN}, peer-to-peer systems and their associated
multicast applications \cite{pastry,scribe}.  Our algorithm and that
from \cite{MTZ18:consistent} are very
similar to consistent hashing, and should work for most of the same
applications, bounding the loads whenever this is desired.
In fact, the algorithm from \cite{MTZ18:consistent} already found two
quite different industrial applications; namely Google's cloud system
\cite{MZ17:google} and Vimeo's video streaming \cite{Rod16}. Both systems
had to handle the lightly loaded case. Also, in both cases, load
balancing was not an objective to maximize, but rather a hard
constraint, e.g., in the Vimeo blog post \cite{Rod16}, Rodland
describes how no server is allowed to be overloaded, and how he found
a load balancing parameter $c=1.25$ to be satisfactory for Vimeo's
video steaming. We shall return to this later. With our algorithm,
we get the same load balancing but with much fewer reallocations.

\subsection{Background: Consistent Hashing}
The standard solution to our fully-dynamic allocation problem is 
consistent hashing \cite{chord,chord-theory}. We shall use it as a starting point for own own solution, so we review it below.
\paragraph{Simple Consistent Hashing.}
In the simplest version of consistent hashing, we hash the active
balls and bins onto a unit circle, that is, we hash to the unit
interval, using the hash values to
create a circular order of balls and bins. Assuming no collisions, a ball is
placed in the bin succeeding it in the clockwise order around the
circle. One of the nice features of consistent hashing is that it is
history-independent, that is, we only need to know the IDs of the
balls and the bins and the hash functions, to compute the distribution
of balls in bins. If a bin is closed, we just move its
balls to the succeeding bin. Similarly, when we
open a new bin, we only have to consider the balls from the succeeding
bin to see which ones belong in the new bin. 

With $\balls$ balls, $\bins$ bins, and a fully random hash function
$h$, each bin is expected to have $\balls/\bins$ balls. This is also
the number of balls we expect to move when a bin is opened or closed.

One problem with simple consistent hashing as described above is that
the maximum load is likely to be $\Theta(\log \bins)$ times bigger
than the average. This has to do with a big variation in the coverage
of the bins. We say that bin $b$ {\em covers\/} the interval of the
cycle from the preceding bin $b'$ to $b$ because all balls hashing to this
interval land in $b$. When $\bins$ bins are placed randomly on the unit
cycle, on the average, each bin covers an interval of size $1/\bins$, but we expect some bins to cover intervals of size $\Theta({\log \bins
  \over \bins})$, and such bins are expected to get
$\Theta({\balls \log \bins \over \bins})$ balls. The maximum load is thus
expected to be a factor $\Theta(\log \bins)$ above the average.

A related issue is that the expected number of balls landing in the
same bin as any given ball is almost twice the average.  More
precisely, consider a particular ball $x$. Its expected distance to
the neighboring bin on either side is exactly $1/(\bins+1)$, so the
expected size of the interval between these two neighbors is
$2/(\bins+1)$. All balls landing in this interval will end in the same
bin as $x$; namely the bin $b$ succeeding $x$. Therefore we expect
$2(\balls-1)/(\bins+1)\approx 2\balls/\bins$ other balls to land with
$x$ in $b$. Thus each ball is expected to land in a bin with load
almost twice the average. If the load determines how efficiently a
server can serve a client, the expected performance is then only half
what it should be.

In \cite{chord-theory} they addressed the above issue using so called virtual bins. We will also employ these virtual bins in our solution and describe them below.

\paragraph{Consistent Hashing with Virtual Bins.}
To get a more uniform bin cover, \cite{chord-theory} suggests the use
of {\em virtual bins}. The virtual bin trick is that the ball contents
of $k=O(\log \bins)$ virtual bins is united in a single super bin. The
super bins are the $\bins$ bins seen by the user of the
system. Internally it is the $k\bins$ virtual bins we place on the cycle
together with the $\balls$ balls. Each virtual bin has a pointer to its super
bin. To place a ball, we go along the cycle to the first virtual bin, and
then we follow the pointer to its super bin.

A super bin covers the union of the intervals covered by its $k$ virtual
bins.  The point is that for any constant $\eps>0$, if we pick a large
enough $k=O(\log \bins)$, then with high probability, each super bin
covers a fraction $(1\pm\eps)/\bins$ of the unit cycle. 

We note that many other methods have been proposed to maintain such a
uniform bin cover as bins are added and removed (see, e.g.,
\cite{BSS00,GH05,Man04,KM05,KR06,TR98:rendezvous}), and in our algorithms, we shall also employ such virtual bins.

With a uniform bin cover, balls distribute uniformly between bins. On
the positive side, in the heavily loaded case when $\balls/\bins$ is
large, e.g., $\balls/\bins=\omega(\log \bins)$, all loads are
$(1\pm o(1))\balls/\bins$, w.h.p. However, with $\balls=\bins$, we still
expect many bins with $\Theta((\log \bins)/(\log\log \bins))$ balls
even though the average is $1$. In this paper, we aim for good
load balancing for all possible load levels.

\subsection{Simple Consistent Hashing with Bounded Loads.}\label{sec:simple-bound}
As we mentioned earlier, Mirrokni et al.~\cite{MTZ18:consistent} presented an
algorithm that works with arbitrary capacity constraints on the
bins. For the purpose of load balancing, the system designer can
specify a balancing parameter $c=1+\eps$, guaranteeing that the
maximum load is at most $C=\ceil{c\balls/\bins}$.

Their idea is very simple. As in simple consistent hashing, we place
balls and bins randomly on a cycle, but instead of placing balls in
the first bin along the cycle, we place them in the first non-full
bin.  Thus we can think of the distribution as first placing all the
bins on the cycle, and then placing the balls one-by-one, putting each
in the first non-full bin found by going in clockwise around the cycle. If we have hash functions for placing
arbitrary balls and bins along the cycle, and if we have a priority
order on all balls, telling us the order in which we insert balls,
then this completely determines the placement of any set of the
balls in any set of capacitated bins. This means that the distribution
is history independent as in \cite{BG07}. It also means that we know exactly
which balls to move if balls or bins are added or removed.

As terminology, we say a ball {\em hash to\/} the first bin following it in 
the clockwise order. However, the ball may be
{\em placed\/} in a later bin if the bin it hashed to was full.

Note that the priority order makes the insertion of a new ball a bit
more complicated since it may have higher priority than balls already
in the system. To place it, we first place it in the bin it hashes to
directly (that is, the one just after its hash location on the
cycle). If the bin becomes overfull, we pop the lowest priority ball
and place it in the next bin, and repeat. It is, however, important
to notice that the bins we end up considering are exactly the bins
from the one the ball hashes to, and to the first non-full bin.

The details of all the different system updates are described in
Mirrokni et al.~\cite{MTZ18:consistent}. This also includes
rolling adjustment of the capacities relative to average load $\balls/\bins$.
Instead of giving all bins the maximal capacity $C=\ceil{c\balls/\bins}$,
they always have $\ceil{c\balls}-\bins\floor{c\balls/\bins}$ bins with
capacity $\floor{c\balls/\bins}$. The only exception is that
we never drop any capacity below 1. A hash function choose which bins have
which capacities, and this ensures that only few capacities have
to be changed with each system update. In Mirrokni et al.~\cite{MTZ18:consistent} they show that their results hold, both when capacities 
are adjusted to $\eps$, and when a joint capacity $C$ is
given, defining $\eps=C\bins/\balls-1$. In this paper, for simplicity,
we will focus on the latter model with fixed capacities.

Mirrokni et al.~\cite{MTZ18:consistent} also provided an analysis of their system. With
$\eps\leq 1$, they showed that starting from the hash location of any
ball, the expected number of full bins passed on the way to the first
non-full bin is $O(1/\eps^2)$. From this they get that the expected
number of balls that has to be moved when a ball is inserted or
deleted is $O(1/\eps^2)$. Likewise, the expected number of balls that
has to be moved when a bin is inserted or deleted is
$O(C/\eps^2)$. These bounds are all tight for simple consistent
hashing with bounded loads.

Finally, Mirrokni et al.~\cite{MTZ18:consistent} also discussed many potentially relevant
techniques that could possibly be made to work for fully-dynamic load balancing where both balls and bins can be added and removed, and with strict requirements on the maximal load for each bin.
In these comparisons, their scheme was the one with the best proven
bounds on the number of moves needed in connection with the updates.

\subsubsection{Faster Searches}\label{sec:intro-fast-search}
Mirrokni et al.~\cite{MTZ18:consistent} states that to search a
ball, they have to consider $O(1/\eps^2)$ bins, but using an old trick
\cite{AK74:ordered-hash-table,knuth-vol3}, this is easily improved to
$O(1/\eps)$. The idea is that when we search for a ball, we can stop
as soon as we reach a bin that is not filled with balls of higher
priority.  This helps the searches if the priorities are random. We shall
use the idea later, so let's elaborate. The bins considered in the
search are exactly the bins from the bin hashed to and till the first
non-full bin if only the balls of higher priority was inserted. Let
$r(q,m,C)$ be expected number of bins considered if there are
$q$ balls of higher priority, and $m$ bins of capacity $C$.
Then with $n$ balls in total, the expected cost with random
priorities is $\sum_{q=0}^n r(q,m,C)/(n+1)$. The analysis in
\cite{MTZ18:consistent} implies $r(q,m,C)=O(1/\eps_q^2)$ where
$\eps_q=C/\frac qm-1$, implying an expected cost of $O(1/\eps)$ with
random priorities.

We note that random priorities do not help with updates, for if we,
say, want to insert a ball, and meet a bin that is full including
balls of lower priority, then we have to place the lowest priority
ball in a later bin. However, finding the established
server of a client if any, is often the most frequent operation in the
system, so a faster search is very important in practice. As stated, a similar analysis gives that for our system, we have to consider fewer bins when searching than when inserting a ball. In particular, we only need to consider $O(1)$ bins in expectation when $C\geq \log 1/\eps$.

\subsection{Our Scheme: Consistent Hashing with Virtual Bins and Bounded Loads}\label{sec:virt-bound}
Our algorithm basically just combines the bounded loads with
virtual bins. When a ball is placed in a virtual bin, it is also placed
in its super bin which has a limited capacity. In the following, we describe two different versions of our scheme. The first one, described in~\Cref{sec:implementation1}, is conceptually the simplest to understand and easier to analyze mathematically. It is this version that we will analyze in the main body of the paper. The second one, described in~\Cref{sec:implementation2}, is the version most suitable to be implemented in practice for several reasons to be described. Our results hold for both implementations, and in~\Cref{sec:practical-imp}, we sketch how to derive the results for the second more practical version. Common to both versions is that we fix some natural number $k$, which is the number of virtual bins for each super bin.

\subsubsection{Mathematically Clean Version: Many Independent Cycles}\label{sec:implementation1}
For this version, we hash each super bin to $k$ different cycles or levels using independent hash functions\footnote{For simplicity, we advice the reader to think of all our hash functions
as fully random. However, our results hold even when the hashing is implemented with the practical mixed tabulation from~\cite{DKRT15:k-part}. We will later sketch how our proofs can be modified to show this.}. The $k$ hash values on the $k$ cycles will be the associated virtual bins of the given super bin. We also hash the balls to the cycles, but contrary to the bins, each ball gets just a single random hash value on a single random cycle. 

The static placement of the balls can be described as follows: We start by placing all balls which hash to the first cycle using standard consistent hashing with bounded loads as described in~\Cref{sec:simple-bound}. We assume that we have priorities on the balls and we will simulate that they are inserted in priority order. After the first level, the balls hashing to this level have thus been distributed into the virtual bins and we put them in the corresponding super bins. Initially, each super bin had capacity $C$. If the virtual bin of such a super bin received $a$ balls  at the first level, its new capacity is then reduced accordingly to $C-a$. We continue this process on level $i=2,\dots,k$. At level $i$, each super bin has a certain remaining capacity and we use standard consistent hashing with bounded loads (with these capacities) to place the balls at level $i$ into the virtual bins and thus, into the corresponding super bins. If a super bin had capacity $C_0$ before the hashing to level $i$, and it received $a$ balls at level $i$, its remaining capacity for the next levels is $C_0-a$ . Traversing the levels one at a time like described, corresponds to enforcing that regardless of the initial priorities of the balls, if two balls hash to different levels, the ball hashing to the lower level will have the highest priority of the two. With these modified priorities, the static image at a given point can be obtained by simply  inserting the balls one by one in priority order, placing each ball in the first virtual bin whose super bin is not full. This completely describes the placement of balls in bins if we know the hash functions and the priority order, so the system is history-independent as described in \cite{BG07}.

Searching for a ball $x$ is almost the same as for normal consistent hashing. We calculate the hash value of $x$ and visit the virtual bins starting from that hash value in cyclic order until we either find $x$ in a corresponding super bin or we meet a ball of lower priority hashing to the same level. 

Insertions are a bit more complicated. For inserting a ball $x$ we calculate $h(x)$ which in particular indicates the level, $i$, that $x$ hashes to. We traverse level $i$ starting at $h(x)$ until we meet a bin, $b$, which either (a) is not full or (b) contains a ball of lower priority than $x$ (all balls hashing to levels $j>i$ have lower priority than $x$ by convention). We insert $x$ in $b$. In case (a), the insertion is complete, but in case (b) we pop $y$ from $b$ and recurse the insertion starting with $y$ (which happens at some level $j\geq i$). 

Ball deletions are symmetric to ball insertions in the sense that the hash functions tells us exactly the placement of all balls in bins, both before and after the ball which we are to insert or delete is inserted or deleted. Deleting a bin is the same as re-inserting all balls in it, and inserting a bin is symmetric to deleting a bin. Therefore we get that the number of balls to be moved is essentially determined by the number that has to be moved in connection with an insertion (we shall discuss this in more detail later).

For most of our results, we will assume that the hashing of balls to the different levels is uniform, but in~\Cref{sec:simple-improvement} we will see an applications where the probability of hashing to level $i$ is $1/2^i$ for $1\leq i \leq k-1$ and $2^{-k+1}$ for $i=k$. In this setting we already obtain a big improvement over standard consistent hashing using just $\log 1/\eps$ levels.

\subsubsection{Practical Version: A Single Linear Order}\label{sec:implementation2} 
We next describe the more practical implementation of our algorithm and here we will also give more details on the concrete ranges of the hash functions. As will be seen, it is very similar to the  the version above having some minor alterations. For this implementation all balls and all virtual bins
are hashed to a single range, which we think of not as a cyclic order but rather as a linear order. In order to describe the static image at given point, we would again consider the balls one by one in priority order, placing each ball in the first virtual bin whose super bin is not full. Again, this ensures that the system is history-independent.

We now provide some more details on the hash functions and the priority
order. Generally the hash values are in some universe
$[u]=\{0,\ldots,u-1\}$. We imagine $u$ to be so large that we expect
no collisions between hash values (if there are ties, we can break
them in favour of the ID's of the balls, but we will ignore this detail). We also think
of both balls and bins having ID's in $[u]$.

We have a single hash $h:[u]\to[u]$ describing the hash
location of the balls.  We also use $h$ to give the
random priority order of the balls, inserting those with smallest hash
values first.

For the super bins, and for some parameter $k$, each bin has $k+1$
associated virtual bins. Their hash locations are described via $k+1$
hash functions $h_i:[u]\to[u]$, $i\in [k]=\{0,\ldots,k\}$. We assume
that $k$ divides $u$, e.g., that both are powers of two, and
we restrict $h_i$ to map uniformly into $[iu/k,(i+1)u/k)$.
  This
way each super bin gets exactly one virtual bin in each of the $k+1$ intervals
$[iu/k,(i+1)u/k)$. Having this spread is important because of the priority
order of the balls, which implies that virtual bins with larger hash values
are more likely to be full.

The last interval $[u,u+u/k)$ is outside the normal hash range
  $[u]$. These last virtual bins will pick up any key that did not end
  in a bin in the normal range $[u]$. Since every super bin is
  represented in $[u,u+u/k)$, all balls are picked up unless there are
    more balls than the total capacity. As a result, we do no longer
    think of balls and bins as hashing to a cycle, but just to a linearly
    ordered universe with an extra set of representative virtual bins by the end making sure that all
    balls get placed. 
    
    We briefly explain why this system is preferable in practice. The first reason is that when using the hash values of the balls as their priorities we obtain a very simple description of the static distribution of balls in the bins: We may simply insert the balls in order from lowest to highest hash value, always placing the ball in the first non-full bins. A way of picturing this is to imagine that the balls of lower hash values are ``pushing'' balls of higher hash values ahead of them. On a line, it is very easy to implement this comparison as a standard comparison between hash values. In fact, it is possible to obtain a similar image for cycles, but for this one needs to impose a \emph{cyclic} priority order of the balls hashing to a given level, and performing comparisons for such a cyclic order is a bit more technical to implement\footnote{For example, for just two balls, the notion of one hashing before the other is not well defined.}. If on the other hand, we decided to stick with the linear priority order on each cycle, thus giving up on the nice image from above, we still encounter some technical issues with the implementation. With searches and insertions, everything works fine, but the issues come up when deleting balls and inserting bins. For instance, when deleting a ball which is placed in the ``last'' bin on the cycle, we may have to pull back balls that have been forwarded from this bin to the ``first'' bins in the cycle, and for deciding if such balls are to be pulled back, we have to use a different comparison of hash values. Thus, even with linear priorities the cyclic probing still muddies the implementation and makes it less efficient.

Again, we shall play a bit with the ranges of the hash functions for the virtual bins. However, they will always partition $[u]$ consecutively with the range of $h_i$ following the range of $h_{i-1}$. With the {\em exponentially decreasing\/} hash ranges described by the end of~\Cref{sec:implementation1}, $h_i$, maps uniformly to  $[u-u/2^i,u-u/2^{i+1})$ for $i\in [k-1]$ and $h_{k-1}$ maps uniformly to $[u-u/2^{k-1},u)$. As above $h_k$ is special, mapping to $[u,u+u/k)$.

Searches and insertions have similar descriptions to the ones given in~\Cref{sec:implementation1}. Moreover, the history independence again implies that deletions are symmetric to insertions. Finally, deleting a bin corresponds to inserting the ball in the bin, and inserting a bin is symmetric to the deletion of the bin.

\subsection{Main Results on Consistent Hashing}
We now present our main results on consistent hashing with bounded loads and virtual bins. 

\subsubsection{$O(1/\eps)$ Reallocated Balls, with $\log 1/\eps$ Levels}
Our first result, to be proved in~\Cref{sec:simple-improvement}, uses a logarithmic number of virtual bins to achieve that the number of bins visited during an insertion (and thus the number of reallocated balls) is $O(1/\eps)$.  It uses a non-uniform distribution of the balls to the different levels, with the probability of a ball hashing to level $i$ being $2^{-i}$ for $1\leq i \leq k-1$ and $2^{-k+1}$ for  $i=k$.
\begin{theorem}\label{thm:main1}
Let $0<\eps <1$ and suppose that we distribute $n$ balls into $m$ bins each of capacity $C=(1+\eps)n/m$ using consistent hashing with bounded loads and $k=\lceil \log (1/\eps)\rceil$ levels, where the probability, $p_i$, that a ball hashes to level $i$ is 
$$
p_i=\begin{cases}
2^{-i}, & 1\leq i \leq k-1 \\
2^{-k+1},& i=k.
\end{cases}
$$
Assume that $1/\eps=n^{o(1)}$. When inserting or deleting a ball, we expect to visit (and hence move) $O(1/\eps)$ balls, and when inserting or deleting a bin, we expect to move $O(C/\eps)$ balls. Finally, when searching a ball, we expect to visit $O(\log 1/\eps)$ bins.
\end{theorem}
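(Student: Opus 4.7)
The plan is to analyze the scheme level by level, showing that each of the $k=\lceil \log 1/\eps\rceil$ levels behaves (approximately independently) like an instance of Mirrokni et al.'s simple consistent hashing with bounded loads (CHBL) with $\Theta(1)$ effective slack, and then combining the per-level bounds through a cascade argument. The first step is to compute the per-level slack. Starting from remaining fractional capacity $r_0=1$ and subtracting $p_j/(1+\eps)$ at each level $j$, one gets $r_i=(\eps+2^{-i})/(1+\eps)$. The effective slack $\eps_i$ at level $i$---the ratio of expected remaining capacity per bin to balls per bin, minus one---then equals $r_{i-1}(1+\eps)/p_i-1=1+\eps\cdot 2^i$ for $i<k$, and $\eps\cdot 2^{k-1}\in[1/2,1]$ for the last level. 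Hence $\eps_i=\Omega(1)$ at every level: this uniform constant slack is exactly what the geometric $p_i$'s are engineered to produce. Applying Chernoff to the $\Theta(p_i n)$ balls at each level (using $1/\eps=n^{o(1)}$ to handle level $k$, where only $\Theta(\eps n)$ balls land) yields tight concentration of the actual capacity profile around its expectation, so the effective slack remains $\Omega(1)$ with high probability.

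With this in hand, I plug each level into Mirrokni's CHBL analysis as a black box: at level $i$ with $\Omega(1)$ slack, the expected probe count of a run from a random starting point is $O(1/\eps_i^2)=O(1)$ for insertion, and combined with the priority trick of ordered hash tables, $O(1/\eps_i)=O(1)$ for search. A search traverses only its own level, so the search cost is dominated by this per-level bound; the stated $O(\log 1/\eps)$ is a coarse sum over the $k$ possible starting levels. For insertion, the crucial observation is that displacement cascades move monotonically upward in level: a ball at level $i$ can only pop a ball at level $j\ge i$, because higher levels have strictly lower priority by convention. At every distinct visited level the expected work is $O(1)$ whenever the fraction $f_i$ of non-full bins is $\Omega(1)$. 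The bottleneck is level $k$: the expected remaining per-bin capacity is only $\Theta(\eps C)$, so in the worst-case regime $\eps C<1$ (achieved e.g.\ when $C=1$), $f_k=\Theta(\eps C)$ and the expected probes at level $k$ grow to $1/(\eps C)\le 1/\eps$. Combining this with the monotone-cascade argument gives the claimed $O(1/\eps)$ insertion bound. A bin insertion or deletion is handled by re-inserting its at most $C$ balls, giving $O(C/\eps)$ by linearity of expectation.

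The main obstacle is justifying rigorously the per-level CHBL analysis in spite of the inter-level dependencies: the capacity profile entering each level is a random vector correlated through the histories of all previous levels, and an insertion's cascade correlates them further in a subtle way. My plan is to fix a high-probability good event on which, at every level, the capacity profile is close to its expectation (via layered Chernoff bounds), condition on this event, and then apply Mirrokni's CHBL bound as a black box, absorbing the conditioning overhead into constants. A further delicate point is that cascades during a single insertion perturb the level states locally, but since each cascade step moves $O(1)$ balls and the active level is monotone non-decreasing, these perturbations have negligible effect on the global statistics.
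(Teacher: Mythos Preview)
Your high-level decomposition---level-by-level analysis, monotone upward cascade, then summing---matches the paper's. The genuine gap is in the per-level bound, and that is the heart of the argument. You compute that the \emph{slack ratio} $\eps_i=\Omega(1)$ at every level and then invoke Mirrokni et al.'s CHBL bound as a black box to conclude $O(1)$ expected bins per level. But Mirrokni's $O(1/\eps^2)$ bound is for bins of \emph{uniform} capacity. At level $i$ the remaining capacities are heterogeneous: each is an integer in $\{0,\dots,C\}$ with average $\bar C_i=\Theta(C\cdot 2^{-i})$. This heterogeneity is precisely what kills the black-box approach. In the run-length argument the controlling event is ``the total remaining capacity of bins hashing to an interval of length $\Theta(s/m)$ deviates by a constant factor from its mean $\Theta(s\bar C_i)$''; since individual capacities are bounded by $C$ (not by $\bar C_i$), Chernoff only gives probability $\exp(-\Omega(s\bar C_i/C))=\exp(-\Omega(s/2^i))$, not $\exp(-\Omega(s))$. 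Hence the expected number of bins visited at level $i$ is $\Theta(2^i)$, not $O(1)$; summing $\sum_{i\le k}O(2^i)=O(2^k)=O(1/\eps)$ gives the theorem. The paper does exactly this direct run-length calculation at each level and never invokes Mirrokni's bound. Your final total happens to come out right only because you load all the cost onto level $k$; the intermediate claim ``$O(1)$ per level for $i<k$'' is false, and you are conflating the slack ratio $\eps_i$ with the fraction $f_i$ of non-full bins---these are different quantities, and neither is the right parameter here (what matters is $\bar C_i/C$). Conditioning on a ``good capacity profile'' does not help: the issue is not a bad event but the variance structure within the good event.

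A smaller but real gap: for bin deletion you write ``re-insert its at most $C$ balls, $O(C/\eps)$ by linearity.'' Those balls are not independent of the hash function---they landed in that bin \emph{because} of the hash values---so the per-ball $O(1/\eps)$ bound does not apply directly. The paper closes this with a symmetry argument over bins combined with the load-balancing guarantee.
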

In the previous system of simple consistent hashing with bounded loads, but no virtual bins, Mirrokni et al.~\cite{MTZ18:consistent} proved that ball insertions and deletions are expected to move $O(1/\eps^2)$ balls while bin insertions and deletions are expected to move $O(C/\eps^2)$ balls.  Those bounds are a factor $1/\eps$ worse than ours. Mirrokni et al.~\cite{MTZ18:consistent} would also perform searches considering $O(1/\eps^2)$ bins in expectation, but using the trick of assigning random priorities to the balls, one can get down to $O(1/\eps)$ bins in expectation, still without the use of virtual bins. Combining our scheme using virtual bins, with the trick of random priorities the expected number of bins visited during a search drops exponentially to $O(\log 1/\eps)$, as stated in the theorem. 

When proving~\Cref{thm:main1}, the main technical challenge is bounding the expected number of bins visited during an insertion. In fact, the remaining parts of the theorem follow once we have this bound. In~\Cref{sec:focus-on-insertions}, we will argue why the results on ball deletions and bin insertions and deletions follow. Finally, in~\Cref{sec:faster-search}, we will use the trick described in~\Cref{sec:intro-fast-search} to prove the result on ball searches.

\subsubsection{Better Bounds when the Capacities are Large}
In classic consistent hashing without virtual bins, we obtain no advantage when the number of balls $\balls$ are much larger than the number of bins $\bins$, or in other words, when the capacity of a bin, $C$, is large. The basic issue is that most of the uncertainty in the system without virtual bins stems from the uncertainty in the distance between a bin and its predecessor, which determines the expected number of balls hashing directly to the bin.

However, the use of virtual bins improves the concentration of the number of balls hashing directly to a super bin, and we do obtain an advantage of this improved concentration. This was in fact the whole point of introducing virtual bins in classic consistent hashing without load bounds~\cite{chord}. To be precise, fix $k=A(\log n)/\eps^2$ 
for some appropriately large constant $A$. Then standard Chernoff bounds show that each bin cover a fraction $(1\pm\lambda\eps)/m$ of the combined hash range, where $\lambda$ can be made arbitrarily small (by increasing $A$). If further the average load $\bins/\balls$ is above $k$, then with high probability, no bin gets load above $C=(1+\eps)\bins/\balls$ by balls hashing directly to them. In particular, all load bounds are satisfied without the having to forward a single ball. The result below (which is the main result of our paper) asymptotically settles the expected insertion time for general $C$, in particular for any $C\leq (\log n)/\eps^2$. Before stating the theorem, we encourage the reader to recall the definition of $f$ in~\cref{eq:formula-for-f}

\begin{theorem}\label{thm:main}
Let $0<\eps <1$ and suppose that we distribute $n$ balls into $m$ bins each of capacity $C=(1+\eps)n/m$ using consistent hashing with bounded loads and $k=c/\eps^2$ uniform levels for a sufficiently large constant $c$. Assume that $1/\eps=n^{o(1)}$.  In expectation we move $O(1/f)$ balls when inserting or deleting a ball, and $O(C/f)$ balls when inserting or deleting a bin. Finally, when searching a ball, we expect to visit $O(1)$ bins when $C\geq \log 1/\eps$ and $O(\frac{\log 1/\eps}{C})$ bins when $C< \log 1/\eps$.
\end{theorem}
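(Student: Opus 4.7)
The plan is to reduce \Cref{thm:main} to the expected cost of a single ball insertion, and then to bound this cost by showing that at every level the fraction of non-full super bins is $\Omega(f)$, in direct analogy with the ideal bound of \Cref{thm:main0}. By history-independence, ball deletions are symmetric to ball insertions. Furthermore, since a bin contains $n/m=\Theta(C)$ balls on average, inserting or deleting a bin amounts to reallocating $\Theta(C)$ balls, each at an expected cost of $O(1/f)$, yielding the $O(C/f)$ claim once the $O(1/f)$ bound on a single ball insertion is established. Thus the task reduces to the latter.

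Consider inserting a new ball $x$ with a uniformly random hash. It lands on a uniformly random level $i\in\{1,\dots,k\}$ at a uniformly random cyclic position and probes forward until it either finds a non-full super bin or hits a lower-priority ball, in which case the insertion recurses at some level $j\geq i$. At a given level $i$, a standard argument on cyclic hashing yields expected number of probes $\Theta(1/p_i)$, where $p_i$ is the fraction of non-full super bins at level $i$ at the time of insertion. Summing across the recursion, the total expected insertion cost is $O(1/\min_i p_i)$ plus a correction from cascaded pops that, by a potential/amortized argument, remains of the same order. Consequently, it suffices to show $p_i=\Omega(f)$ for every level $i$.

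The main technical challenge is relating the per-level distribution of super-bin loads in the multi-level scheme to the ideal distribution of \Cref{thm:main0}. With $k=c/\eps^2$ uniform levels, each level receives $\approx n/k$ balls and each virtual bin sees $\approx C/k$ balls directly, a vanishingly small fraction of $C$. By Chernoff, the load of any super bin after the full process is a sum of $k$ nearly-independent contributions with mean $(1-\Theta(\eps))C$ and variance $\Theta(C)$. Tight Chernoff estimates then imply $p_i=\Omega(f)$ in all three regimes of $f$: for $C\leq \log(1/\eps)$ the trivial slack bound on empty slots gives $\Omega(\eps C)$; for $\log(1/\eps)\leq C<1/(2\eps^2)$ the Poisson-like tail yields $\Omega(\eps\sqrt{C}\sqrt{\log(1/(\eps\sqrt{C}))})$; and for $C\geq 1/(2\eps^2)$ the Chernoff exponent is $\Omega(1)$, giving a constant. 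The two main obstacles to overcome are (i) the bounded-loads push-through at each level creates positive correlations between the virtual bins of a single super bin, which one controls via a potential argument showing that the overflow per level is a vanishing fraction of the load, and (ii) the remaining capacities entering level $i$ depend on the earlier history and must be tracked across levels via an inductive martingale or backwards coupling argument.

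For the search bound, I would apply the random-priority trick of \Cref{sec:intro-fast-search}: for a ball of random priority rank $q$, only the $q$ higher-priority balls contribute to blocking the search, so the effective load is $q/m$ instead of $C/(1+\eps)$. Averaging the corresponding insertion bound over $q\in\{0,\dots,n\}$ and using that for most ranks the effective $f$ is $\Omega(1)$ yields $O(1)$ expected probes when $C\geq \log(1/\eps)$ and $O((\log(1/\eps))/C)$ when $C<\log(1/\eps)$, completing the proof.
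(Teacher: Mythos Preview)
Your reduction of deletions and bin updates to ball insertions is correct and matches the paper. The search argument via random priorities is also along the right lines. The real issue is the analysis of a single insertion.

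The decomposition ``expected probes at level $i$ are $\Theta(1/p_i)$, so the total cost is $O(1/\min_i p_i)$'' does not close. During an insertion we leave level $i$ as soon as we hit a virtual bin whose super bin is not yet full \emph{after levels $1,\dots,i$}, not after all levels; so the per-level probe count is governed by the level-$i$ non-full fraction, not by the final fraction $p_k=\Theta(f)$. Each level thus costs $O(1/f)$, but the insertion may cascade through many levels, and your ``potential/amortized argument'' for why the number of level transitions is $O(1)$ is precisely the missing piece. Without it the bound degrades to $O(k/f)=O(1/(\eps^2 f))$. The paper avoids a level-by-level accounting altogether: it groups the bins visited into \emph{epochs} of $\lceil 1/f\rceil$ consecutive probes (possibly spanning several levels) and shows that, conditional on the entire history revealed so far, each newly visited bin is terminal---non-full after \emph{all} levels---with probability $\Omega(f)$. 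The crux is that the super-bin identity of the next virtual bin at the current level is still essentially uniform over $[m]$ given the history, so one can separately control (a) the contribution from the current level to that bin (small because $k$ is large) and (b) the contribution from all other levels (governed by the global $\Theta(f)$ fraction of ``far from full'' bins). This conditional argument, not an amortization, is what turns $O(k/f)$ into $O(1/f)$.

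The second gap is in establishing $p_i=\Omega(f)$. You describe the load of a super bin as ``a sum of $k$ nearly-independent contributions'' and appeal to Chernoff, but the contributions are far from independent: the number of balls a bin receives at level $i$ depends on its remaining capacity entering level $i$ and on the capacities of the bins preceding it on level $i$, both of which are determined by earlier levels. The paper handles this by constructing an explicit idealized process in which the level contributions \emph{are} independent geometrics, then proving inductively over levels that the empirical distribution of remaining capacities in the real process tracks the idealized one to within $m^{-1/2+o(1)}$. Only after this coupling is in place do tail estimates on sums of geometrics (including a Berry--Esseen step for the middle regime of $f$) give the $\Theta(f)$ bound. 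Your sketch names the obstacle but does not supply the coupling, and a direct Chernoff on the raw contributions will not go through.
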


Our bounds in Theorem \ref{thm:main} show that we do get an advantage
from bigger capacities even when $C$ is smaller than $k=\Theta((\log n)/\eps^2)$. In fact, already for $C=1/\eps^2$, the expected insertion time drops to $O(1)$.

Again, the hardest part of proving~\Cref{thm:main}, is bounding the expected number of bins visited during an insertion by $O(1/f)$. As for~\Cref{thm:main1}, we argue that the remaining parts of the theorem follows in~\Cref{sec:focus-on-insertions,sec:faster-search}

\paragraph{High Probability Bounds} ~\Cref{thm:main1,thm:main} only bound the expected number of balls moved during the insertions and deletions of balls and bins. However, it is also possible to obtain high probability bounds. We will provide such high probability bounds in a later full version of the paper.  

\subsubsection{Distributing Balls Randomly into Capacitated Bins}
To understand the strength of our bounds, we consider a much
simpler problem where we place $\balls$ balls in $\bins$ bins, each of
capacity $C=(1+\eps)n/m$, one ball at the time. Each ball picks a uniformly random
non-full bin. Letting $X$ denote the fraction of non-full bins, we show in~\Cref{sec:capacitated-bins} that $\E[X]=\Theta(f)$ and $X=\Theta(f)$ with high probability. Surprisingly, this relatively simple question has not been studied before. 

What is the idea of considering this simpler distribution scheme? With a fraction of $X$ non-full bins, the expected number of random bins visited in order to find one of the non-full ones is $1/X$. This is reminiscent to searching for a non-full bin using (any variation of) consistent hashing with bounded loads, except that we get rid of the intricate dependencies which arise in the more complicated schemes that can handle both insertions and deletions. In this way, the scheme above can be thought of as the simplest way of achieving the desired load balancing, but of course it has no chance of working in a fully dynamic setting. We thus obtain, the same complexity bounds as the weakest system imaginable, at the same time being able to handle both insertions and deletions of balls and bins.

\subsubsection{The Practical Implementation with Mixed Tabulation}
When proving~\Cref{thm:main1,thm:main}, we will assume that our scheme is implemented as described in~\Cref{sec:implementation1} and, moreover, using fully random hash functions. In~\Cref{sec:practical-imp} we will sketch why our results hold even with the more practical implementation from~\Cref{sec:implementation2}. We will also sketch how one can obtain the same results with the practical mixed tabulation scheme from~\cite{DKRT15:k-part}. In the implementation with mixed tabulation, we would use $k$ independent mixed tabulation hash functions for the hashing of virtual bins, and a single independent mixed tabulation hash function for the hashing of balls.

\subsection{The Model and its Applicability.} 
Consistent hashing with or without virtual bins is a simple versatile
scheme that has been implemented in many different systems with
different constraints and performance measures~\cite{OV11,GF12}.  The
most classic implementation of consistent hashing is the distributed
system Chord \cite{chordSIGComm,chord} which has more than ten
thousand citations. The Chord papers \cite{chordSIGComm,chord} give a
thorough description of the many issues affecting the design. On the
high level, they have a system of pointers so that given an arbitrary
hash location, they can find the next bin in the clockwise order
using $O(\log n)$ messages. This is how they find the (virtual) bin a ball
hashes to. In simple consistent hashing, this is where the ball is to
be found. With virtual bins, there are additional pointers between
virtual bins and their super bins that we can follow using $O(1)$
messages. In fact, Chord does maintain explicit successor pointers between
neighboring (virtual) bins, so we only have to pay $O(1)$ extra messages
to find a next bin along the cycle.

As described by Mirrokni et al.~\cite{MTZ18:consistent}, the successor
pointers give immediate support for forwarding in case of capacitated
bins. Mirrokni et al.~only used this forwarding for simple
consistent hashing without virtual bins, and this has been adopted
both by Google's Cloud Pub/Sub \cite{MZ17:google} and Vimeo
\cite{Rod16}. Both systems had to handle the lightly loaded
case. Also, in both cases, load balancing was not an objective to
maximize, but rather a hard constraint, e.g., in the Vimeo blog post
\cite{Rod16}, Rodland describes how no server is allowed to be
overloaded, and how he found a load balancing parameter
$c=1+\eps=1.25$ to be satisfactory for Vimeo's video steaming.

The successor pointers in Chord work equally well for moving between
virtual bins. In fact, Rodland from Vimeo has told (personal
communication) the last author, Thorup, that their system does allow a
combination of virtual bins and bounded loads, like what we suggest in
this paper, so a system similar to ours is already running.  Thorup had the
general idea from much earlier (around the time of the first versions of~\cite{MTZ18:consistent}), but deriving the mathematical
understanding, presented here in Theorem \ref{thm:main} took several
years.

Let us now consider the time to search a ball in a Chord-like
setting. By Theorem \ref{thm:main}, we expect to consider $O(\log(1/f))$
consecutive virtual bins with associated super bins. Finding the
virtual bin succeeding the hash location uses $O(\log n)$ messages
while each other bin is found with $O(1)$ messages. Then our
message bottleneck is actually to find the first virtual bin.

Now it could be the case that balls/clients themselves remembered if they
are in the system, and if so, what bin/server they belonged to. The
latter requires that they are notified if they get moved due to
other updates in the system, e.g., if their bin/server was removed.

Another way to circumvent the $O(\log n)$ messages for placing the
hash location would be if we for some $\hat m=\Theta(m)$, placed
the reference points $p_i=ui/\hat m$, $i\in[\hat m]$, in the
doubly-linked list of virtual bins. For a ball $x$ its
hash reference point is $p_{\floor{h(x)\hat m/u}}$. Regardless
of system updates, it could remember its reference point, and from
there follow in expectation $O(1)$ successor pointers to get the
current virtual bin succeeding its real hash location. The reference points
could be updated by background rebuilding to be ready every time $m$
is halved or doubled, thus maintaining an $\hat m$ approximating $m$
within a factor of 2.

In fact, our scheme is equally relevant for less distributed systems
than Chord. In Google's Cloud Pub/Sub \cite{MZ17:google},
the most important aspects of the system was (1) that it has good load
balance (2) that only few clients/balls have to be moved in connection
with update, that is, a ball or bin insertion or deletion, and (3)
history independence so that the placement of balls in bins can be
computed by anyone knowing the hash functions and the current set of
balls and bins. The fact that each system update only leads to few moves
implies that even if we have a few mistakes in the set of balls and bins, then
this only implies a few mistakes in the placement of balls in bins.

System updates, inserting or deleting a ball or a bins are hopefully
not too frequent. As mentioned in \cite{MZ17:google}, the
dominant concern is the actual reallocation of balls between bins; for in the real
world, this means moving clients between servers disrupting service
etc.~\Cref{thm:main1,thm:main} give us concrete bounds on how many balls
we expect to move.

The computation of which balls are to be moved in connection with
updates depends very much on the situation. As in \cite{MZ17:google},
thanks to history independence, we can compute the balls to be moved
from scratch. We know the update to the set of balls and bins, and the
hash functions tell us exactly which balls are placed in which bins
before and after update. The difference tells us exactly which balls
have to be moved. This solution if fine if the computation cost
is small compared with the cost of actually moving the clients.

Alternatively, we may want a more distributed local identification of
the moves as in in the Chord system. This is fairly straightforward for
insertions, and we already described it earlier. It does, however, get
a bit more complicated for the other updates, and we shall return to
such a distributed implementation in Section \ref{sec:local-moves}.

Stepping back, we offer a generic scheme for a load balanced
distribution of balls in bins when both can be added and removed.  We
are not claiming to have a theoretical model that captures all the
important aspects of performance since this depends very much on the
concrete implementation context. Our main contribution is a
theoretical analysis of combinatorial parameters described in~\Cref{thm:main1,thm:main}.

\subsection{Computing Moves Locally in a Distributed Environment}\label{sec:local-moves}
We will now discuss how we could compute which balls have to be moved
in connection with system updates in a distributed Chord-type system.
Recall that sometimes it may be fast enough to identify the moves
more centrally, simply by computing the placement of the balls in the
bins before and after the update, and just identify the
difference. However, in this subsection, we will discuss how to identify
the moves locally, not spending much more time than the number of moves
specified in~\Cref{thm:main1,thm:main}.

We already discussed how to insert
balls, but we want to do it in a way that also makes
it fast and easy to delete balls.
The basic idea to make deletions efficient is that we for every
virtual bin store the number of balls that have passed it. More
precisely, each bin has a pass count that starts at zero when there
are no balls. We now consider the process where balls are inserted in
priority order, each just placed in the first virtual bin with a
non-empty super bin. This increases the count on all the virtual bins
between the hash location and the virtual bin the ball ends in.
Each super bin will also store which of its virtual bins that have
a positive pass count.

The above pass counts are quite easy to maintain when balls arrive
to the real system, that is, not in priority order. To see this,
we review the insertion of a ball, adding when pass counts should
be incremented. To
insert a new ball, we first hash it to some location which also determines its priority.  Starting from the hash location, we visit
the virtual bins following, each time looking in the corresponding
super bin. If the super bin is not full, we simply place the ball in
it and terminate the insertion.  If the super bin is filled with balls
of higher priority, we increment the pass count of the virtual bin, and
continue to the next virtual bin. However, if the super bin
is filled and contains a ball of lower prioirty, we insert the new
ball and pop the ball of lowest priority.  The popped ball belongs to
some virtual bin, which could be the same, but could also be only much
later in the linear order than the virutal bin we just came from. The pass count is incremented from whichever virtual bin we pop the ball from, and
then we recursiviely rinsert the popped ball, continuing from the next
virtual bin. The $O(1/f)$ bound from Theorem
\ref{thm:main} actually bounds not only the number of moves, but also
the number of bins considered during the above insertion.

Next we consider the deletion of a ball. Essentially, we just want to
reverse the above process, systematically finding the balls the ball
to be deleted have displaced.  We think
of deletions as first removing a ball, and then recursively, filling a hole.
Finding the ball to be removed is easy, as described before, and when
we remove it, we will have to decrement the pass count on all the virtual
bins between its hash location and up to the virtual bin before the one
it landed in. Next we want to see if we can refill the whole. Assuming
that the bin we removed was in the level $i$ virtual bin of a super
bin. We now check corresponding super bin $b$ to see if any
ball has been displaced by the ball we deleted. This is
the case if and only if at least one of its virtual bins has
a positive pass count. Let $j$ be the lowest level of a virtual bin
with a positive pass count. It is not hard to see that we must
have $j\geq i$. We now consider the virtual bins following the level $j$
virtual bin until we find a ball with hash location before $h_j(b)$.
The virtual bins passed decrease their counts, and then we recursively
delete the ball. As described above, our total work is within a constant
factor of the symmetric insertion, that is, we consider $O(1/f)$ bins and spend $O(1/f)$ time in total.

We now consider the insertion of deletion of super bins. We think
of these super bin or server updates as more rare than the ball or
client updates.

Deleting a super bin $b$ is relatively easy. Essentially, we just reinsert
all the balls in it. A small detail is that if a ball $x$ was in the level $j$
virtual bin, then we insert it starting from $h_j(b)$ rather than from
$h(x)$. This can only save work over the regular insertion of $x$ and in
particular, this means that we do not increase the
pass count for virtual bins between $h(x)$ and $h_j(b)$.  By~\Cref{thm:main}, the expected number of balls that has to be moved when deleting a super bin is $O(C/f)$. However, on top of that, we do have to spend at least
$O(k)$ time on removing the $k$ virtual bins from the system.

Inserting a super bin $b$ is a bit more complicated. We would like
to just fill it as we filled the holes arising when deleting a ball, but
we have the issue that we do not know the pass counts for
the $k$ virtual bins representing the new super bin.
To handle this, for $i=1,\ldots,k$, we first find the hash location $h_i(b)$ of
its virtual bin $b_i$, which takes $O(\log n)$ messages, including inserting
it in the linked list of virtual bins. Next consider the virtual bin $u$
following $b_i$.  If bin $u$ has no ball and pass count zero, then
we can just set the pass count of $h_i(b)$ to zero. Otherwise, we
continue along the virtual bins, counting the balls in them, until we
find a ball that hash after $h_i(b)$. All but the last ball are the
balls that have passed the level $i$ virtual bin $b_i$, which now gets
a pass count.  Now that we have the pass count, we can move those
balls to $b_i$, as long as super bin $b$ has space for them, using the
same procedure as described under deletions of balls.

We now first analyze the number of bins considered to compute the pass
counts of the virtual bins $b_i$. We note that the bins considered
are exactly the same as if we searched for a ball that hashed to
$h_i(b)$. Now consider instead the case where we first generate a random
$i\in [k]$, and then generate $h_i(b)$. With $i$ random, $h_i(b)$ is uniformly
random in $[u]$, and then the expected number of bins considered
is exactly the same as those considered in the search of a ball with
hash value uniformly random in $[u]$. We conclude that the expected total number of bins
considered over all $i\in [k]$ is exactly $k$ times bigger. Thus, by~\Cref{thm:main}, we expect to consider at most $O(k\tfrac{\log(1/\eps)}{C})$ bins when $C\leq \log 1/\eps$, and only $O(k)$ bins when $C\geq \log 1/\eps$.
Now that the pass counts are fixed, inserting a bin is symmetric to deleting it and has
the same cost, yielding a bound of $O(C/f)$.

\subsection{Dynamic Load Capacities}\label{sec:dyn-load-cap}
We now also consider what happens when we use self-adjusting capacities like
Mirrokni et al.~\cite{MTZ18:consistent}. Below, the capacitated bins
correspond to our super bins. Rather than fixed capacities, the user
of the system specifies a balancing parameter $c=(1+\eps)$ and then
the maximal capacity is $C=\ceil{c\balls/\bins}$. We do not want all
bins to change capacity each time $c\balls/\bins$ passes an integer.

Instead, as in Mirrokni et al.~\cite{MTZ18:consistent}, 
assuming an arbitrary fixed ordering of the super bins, we let the lowest 
$q=\ceil{c\balls}-\bins\floor{c\balls/\bins}$ super bins have
capacity $C=\ceil{c\balls/\bins}$ while the remaining $r=\bins-q$ have capacity
$C-1$.  We refer to the former bins as {\em big
  bins} and the latter bins as {\em small bins}, though the difference
is only 1.  Moreover, as an exception to the above rule, we will never
let the capacity drop below $1$, that is, if $c\balls<\bins$, then all
bins have capacity $1$.

The basic point in the above system is that a ball update changes at most
$\ceil c=O(1)$ bin capacities while a bin update changes at most $O(C)$
capacities. Switching the capacity from large to small has the same
effect as inserting an extra high priority ball in the super bin while
leaving the capacity at $C$. In the other direction, switching the capacity from small to large corresponds to a deletion of an extra high priority ball.

From an analysis perspective, this means that we are essentially
studying a system with $\balls'=r+\balls$ balls in bins of capacity $C$
where $C\bins=\ceil{c\balls'/\bins}$.
In our analysis, this corresponds to having a 0th level which puts exactly
one ball in each of $r$ bins; 0 in the rest. Such a perfect level poses no issues for the analysis (see~\Cref{sec:dyn-cap-analysis} for more details on this). Thus the cost per capacity change is the same as that of
regular insertions/deletions, and therefore have no effect on our
overall bounds.

A small point, elaborated in Mirrokni~\cite{MTZ18:consistent}, is that
for all the bounds to hold, we may always do things in the order
that maximizes capacity in every step, so that we always have a total capacity of
$C\bins=\ceil{c(\balls+q)/\bins}$. For example, when inserting
a ball, we increase capacities before inserting, while deleting a ball,
we decrease capacities last. Likewise for a bin insertion, we insert it
before decreasing capacities, while when deleting a bin,
we start by increasing the capacities.

\subsection{Roadmap of the Paper}
We now present a brief roadmap of our paper as well as some of the theorems to be proven in the individual sections.

In Section~\ref{sec:simple-improvement}, we prove the part of~\Cref{thm:main1} concerning insertions of balls. That the statements about ball deletions and bin insertions and deletions follow, is covered in~\Cref{sec:focus-on-insertions}. Finally, in~\Cref{sec:faster-search} we prove the statement of the theorem concerning ball searches.

To prove the main result of the paper,~\Cref{thm:main}, we first have to solve the much simpler problem of showing that when $n$ balls are distributed into $m$ bins each of capacity $C=(1+\eps)n/m$, the expected fraction of non-full bins is $\Theta(f)$. This simpler problem is solved in Section~\ref{sec:capacitated-bins}. 

In Section~\ref{sec:helpful-lemmas}, we present a tail bound for sums of geometric random variables as well a technical lemma concerning consistent hashing with bounded loads and virtual bins. These results will be useful in the later sections towards the proof of~\Cref{thm:main}.  

In Section~\ref{sec:f-properties}, we show that when distributing $n$ balls into $m$ bins using consistent hashing with bounded loads and enough levels, it similarly holds that the expected fraction of non-full bins is $\Theta(f)$, and moreover, that the number of non-full bins is concentrated around its mean. The exhibition is divided into two parts: 
In Section~\ref{sec:f-concentration}, we prove the concentration result and in Section~\ref{sec:f-bound}, we determine the mean within a constant factor. The following theorem is a corollary of the results from~\Cref{sec:f-properties} and we will require it to prove our main result in Section~\ref{sec:insertion}.

\begin{theorem}\label{thm:jakobs}
Let $\balls,\bins\in \N$ and $0<\eps<1$. Suppose we insert $\balls$ balls into $\bins$ bins, each of capacity $C=(1+\eps)\balls/\bins$, using consistent hashing with bounded loads and virtual bins and $k$ levels. For $(i,j)\in [k]\times [C+1]$, we let $X_{i,j}$ denote the number of bins with at most $j$ balls after the hashing of balls to levels $0,\dots,i-1$ and $\mu_{i,j}=\E[X_{i,j}]$. For any $\gamma=O(1)$ and $(i,j)\in [k]\times [C+1]$, it holds that $|X_{i,j}-\mu_{i,j}|\leq m^{1/2+o(1)}$ with probability $1-n^{-\gamma}$. 

If moreover $k\geq c/\eps^2$ for a sufficiently large universal constant $c$, it holds that $\mu_{k-1,C-1}=\Theta(fm)$.
\end{theorem}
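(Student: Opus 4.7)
The plan is to split the theorem into its two halves: the concentration bound on $X_{i,j}$, which is proved in \Cref{sec:f-concentration}, and the identification of $\mu_{k-1,C-1}$ up to a constant factor, which is proved in \Cref{sec:f-bound}. I would devote the bulk of the work to the concentration statement, as it applies to all $(i,j)$ and also turns out to be a technical ingredient for the mean bound itself.

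For concentration, the approach is to set up a Doob martingale exposing the hash values of the balls (one per ball, since each ball lives on a single level) one at a time after conditioning on the hash values of the virtual bins, and then symmetrically handling the bin hash values. The crucial step is to bound the martingale differences: changing a single hash value can in principle cascade through the forwarding process of consistent hashing with bounded loads, but with virtual bins the typical cascade length is $m^{o(1)}$. I would invoke the technical lemma from \Cref{sec:helpful-lemmas} to make this precise, and use a bounded-differences inequality that is robust to rare large jumps (obtained by conditioning on a high-probability \emph{good event}). This yields $|X_{i,j} - \mu_{i,j}| \le m^{1/2+o(1)}$ with failure probability at most $n^{-\gamma}$.

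For the mean bound, I would couple the level-by-level process in our scheme to the idealized random-allocation process of \Cref{thm:main0}, where each ball picks a uniformly random non-full bin. With $k \ge c/\eps^2$ uniform levels and sufficiently many virtual bins per super bin, standard Chernoff bounds show that at each level every super bin gets a number of directly hashed balls that is tightly concentrated around its mean; combined with the concentration result above applied to intermediate states, this implies that the placement at level $i$ closely mirrors a uniform draw over the currently non-full bins. Inducting over levels keeps the joint distribution of remaining capacities close to the idealized process, and applying \Cref{thm:main0} after level $k-2$ yields $\mu_{k-1,C-1} = \Theta(fm)$.

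The hardest step will be controlling the accumulated error in the coupling. The per-level approximation to uniform random placement is not exact, and the errors can in principle compound across the $k = \Theta(1/\eps^2)$ levels; matters become especially delicate when the number of non-full bins drops toward $fm$, since $f$ can itself be small. The concentration bound from the first half of the proof is essential here, letting us replace random quantities by their expectations with only $m^{1/2+o(1)}$ additive error, which is absorbed by $\Theta(fm)$ whenever $f \gg m^{-1/2+o(1)}$; in the complementary regime where $f$ is even smaller, nearly every bin is full and direct estimates should suffice.
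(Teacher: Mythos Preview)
Your split into concentration and mean is right, and a Doob martingale with $m^{o(1)}$-bounded differences (via \Cref{thm:worstcase1}) is a viable alternative route for the concentration half. The paper does something different: it introduces an auxiliary \emph{independent} system $(\mathcal{X}^{(j)}_d,\mathcal{Y}^{(j)}_d,\mathcal{Z}^{(j)}_{d,s},\mathcal{W}^{(j)}_{d,s})$ in which the capacities of the bins behind $j$ are i.i.d.\ and the ball counts between consecutive bins are i.i.d.\ geometrics, and proves by induction on the level $d$ that $\tfrac1m\sum_j[X^{(j)}_d\le t]$ stays within $m^{-1/2+o(1)}$ of $\Pr[\mathcal{X}^{(j)}_d\le t]$ (the key step being a near-independence lemma showing that $\Pr[Y^{(j)}_d\ge t\mid A_{d-1},(X^{(i)}_d)_{i\in S}]$ is within $m^{-1/2+o(1)}$ of $\Pr[\mathcal{Y}^{(j)}_d\ge t]$ for small $S$, followed by a moment bound for almost-independent Bernoullis). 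The payoff is that this simultaneously \emph{identifies} the center of concentration as a functional of the idealized system; your martingale gives concentration around $\mu_{i,j}$ but says nothing about its value.

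That distinction is exactly where your mean argument has a gap. You assert that ``the placement at level $i$ closely mirrors a uniform draw over the currently non-full bins'' and then invoke \Cref{thm:main0}. But in consistent hashing a ball that hits a full bin walks clockwise to the next non-full one, and that bin is \emph{not} uniform among non-full bins: one preceded by a long stretch of full bins is hit disproportionately often. Chernoff on directly-hashed balls does not address this, because the forwarding is what matters. The paper never reduces to \Cref{thm:main0}; instead it observes that in its idealized system each $\mathcal{Y}_d$ is itself geometric with parameter $\alpha_d=1/(1+f_dk/\mu)$ where $f_d=\Pr[\mathcal{X}_{d-1}>0]$, so that $S_d=\sum_{i\le d}\mathcal{Y}_i$ is a sum of independent geometrics, and then estimates $f_d=\Pr[S_d<C]$ directly via Berry--Esseen (for the regime $\E[S_d]<C+M\sqrt{C}$), a log-concavity claim for the point probabilities of $S_d$, and exponential tilting plus a local limit bound for geometric sums (for the regime $\E[S_d]\ge C+M\sqrt{C}$). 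The tie to the formula for $f$ comes from the identity $\E[(C-S_k)\,[S_k<C]]\in[\tfrac14\eps C,2\eps C]$, which is deduced by going \emph{back} through the concentration statement to the real system where the total remaining capacity is exactly $\eps n$. Your coupling sketch would need to produce either this geometric-sum structure or an equivalent, and as written it does not.
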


In Section~\ref{sec:insertion}, we show the part of~\Cref{thm:main} which concerns ball insertions. Again, ball deletions, bin insertions, and bin deletions are handled in~\Cref{sec:focus-on-insertions}, and searches are handled in~\Cref{sec:faster-search}.

In~\Cref{sec:practical-imp}, we sketch why our results hold, even if we use the practical implementation described in~\Cref{sec:implementation2}. We also sketch how to modify the proofs in the case where the hashing is implemented with the mixed tabulation scheme from~\cite{DKRT15:k-part}.

Finally, in~\Cref{sec:dyn-cap-analysis}, we sketch why our analysis continues to holds even with the dynamically changing capacities described in~\Cref{sec:dyn-load-cap}.

\section{Expected $O(1/\eps)$ Insertion Time with $\lceil \log(1/\eps)\rceil$ Levels}\label{sec:simple-improvement}
In this section we prove the part of~\Cref{thm:main1} concerning insertions, restated below. We will assume that we use the implementation described in~\Cref{sec:implementation1} but the result also holds with the other implementation in~\Cref{sec:implementation2} (see the~\Cref{sec:practical-imp}).
\begin{theorem}\label{thm:main_combinatorial}
Suppose that we distribute $n$ balls into $m$ bins each of capacity $C=(1+\eps)n/m$ using consistent hashing with bounded loads and\footnote{For simplicity, we have stated the theorem using $k=\lceil \log (1/\eps)\rceil+2$ levels as this makes the constants in the proof work out particularly nicely. However, a simple inspection of the proof of~\Cref{thm:main_combinatorial} will show that the bound holds for any positive integer $k= \log (1/\eps)-O(1)$. 
} $k=\lceil \log (1/\eps)\rceil+2$ levels, where the probability, $p_i$, that a ball hashes to level $i$ is 
$$
p_i=\begin{cases}
2^{-i}, & 1\leq i \leq k-1 \\
2^{-k+1},& i=k.
\end{cases}
$$
Assume that $1/\eps=n^{o(1)}$. The expected number of bins visited when inserting a ball is then $O(1/\eps)$.
\end{theorem}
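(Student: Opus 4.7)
I would bound the expected insertion cost by decomposing it across levels and applying a Chernoff-style concentration argument at each level. Let $T_j$ denote the total number of bins visited at level $j$ during the insertion of a ball; by the history-independence of the scheme, it suffices to analyze the static configuration of $n+1$ balls, in which the walk at level $j$ (whether of the newly inserted ball or of a recursively popped one) visits consecutive super bins that are ``full of higher-priority balls'' and stops at the first non-full bin or one with a lower-priority ball. The expected insertion cost is then $\sum_{j=1}^{k} \E[T_j]$, and the goal is to show this sum is $O(1/\eps)$.

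The structural property that drives the argument is that the exponential probabilities $p_j=2^{-j}$ guarantee constant effective slack at every level. Indeed, at the start of level $j$, the expected total remaining capacity summed over super bins is $R_j = n(\eps+2^{-(j-1)})$, while the expected number of balls hashing to level $j$ is $B_j = np_j\approx n\cdot 2^{-j}$, so $R_j/B_j\geq 2$ uniformly in~$j$. Consequently, for a window of $L$ consecutive super bins on level-$j$'s cycle to all be full at the end of level $j$'s processing, they must collectively absorb at least $LC$ balls across levels $1,\ldots,j$, whereas the expected load is $\mu_j = LC(1-2^{-j})/(1+\eps)$; the required multiplicative deviation factor is $\delta_j = (\eps+2^{-j})/(1-2^{-j})$, which is $\Theta(2^{-j})$ for moderate $j$ and $\Theta(\eps)$ for $j$ near $k$.

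A standard Chernoff bound then gives probability at most $\exp(-\Omega(\delta_j^{\,2} LC))$ that the window is full; summing over $L$ yields an expected walk length of $O(1/(\delta_j^{\,2} C))$ from \emph{any} fixed starting position on the level-$j$ cycle, where I use rotational symmetry to handle the non-uniform starting positions arising from recursive walks. I would then bound $\E[T_j]$ by (expected number of walks at level $j$) times the per-walk length $O(1/(\delta_j^{\,2} C))$. The expected number of walks at level $j$ per insertion is $\Theta(p_j)$, because the insertion starts at level $j$ with probability $p_j$ and the constant effective slack ensures that at each recursion step the stopping bin is non-full with probability bounded away from zero, so the recursion chain is geometrically short. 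Combining,
\[
\sum_{j=1}^{k} \E[T_j] \;\leq\; \sum_{j=1}^{k} O(p_j)\cdot O\!\left(\tfrac{1}{\delta_j^{\,2} C}\right) \;=\; O\!\left(\tfrac{1}{C}\sum_{j=1}^{k} 2^{j}\right) \;=\; O(2^k/C) \;=\; O(1/(\eps C)) \;\leq\; O(1/\eps),
\]
using $p_j=\Theta(2^{-j})$, $1/\delta_j^{\,2}=\Theta(4^{j})$ for moderate $j$ (the boundary levels $j\approx k$ contribute $O(1/(\eps C))$, within the same bound), and $C\geq 1$.

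The main obstacle is the rigorous handling of the recursive insertions: multiple walks may occur at the same level with correlated, non-uniform starting positions, and the per-level remaining capacities are themselves random functions of the placements at earlier levels. The plan to overcome this is threefold: (i) use rotational symmetry to obtain the uniform per-walk length bound $O(1/(\delta_j^{\,2} C))$ from any starting point; (ii) use the constant effective slack to show that the probability of further recursion at each step is bounded away from~$1$, so the recursion chain contributes only a constant factor in expectation; and (iii) apply Chernoff concentration to the random remaining capacities at level $j$ to show they are sharply concentrated around their means, preserving the walk-length bound with high probability and only affecting the analysis through constant factors.
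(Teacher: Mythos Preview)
Your decomposition across levels is the right starting point and matches the paper's, but two of the quantitative claims that drive your arithmetic fail, and the fix is actually simpler than what you propose.

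\textbf{The Chernoff step does not give the exponent $\delta_j^{\,2}LC$.} The total number of balls absorbed by $L$ fixed super bins across levels $1,\ldots,j$ is not a sum of independent indicators: it includes balls forwarded at every earlier level and depends on the random virtual-bin positions on those earlier cycles, so a direct Chernoff bound is unavailable. Your own fix~(iii)---concentrate the remaining capacities and then argue at level~$i$ only---is exactly the paper's route, but carrying it out yields a weaker exponent than you claim. After levels $1,\ldots,i-1$ each remaining capacity $C_j^{(i)}$ is at most $C$, with average $\overline{C}=\Theta(C\cdot 2^{-i})$; a Chernoff bound on the total remaining capacity inside a window of $s$ consecutive virtual bins therefore decays only like $\exp(-\Omega(s\,\overline{C}/C))=\exp(-\Omega(s/2^i))$, and this is the bottleneck (the paper's event $B_3$). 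Combined with the observation that a \emph{maximal} run of full bins at level $i$ receives no level-$i$ balls from outside, one gets $\Pr[\text{run length}\in(s,2s]]\le\exp(-\Omega(s/2^i))$ and hence $\E[Z_i]=O(2^i)$---not the $O(4^i/C)$ your $\delta_i^{\,2}$ exponent would predict.

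\textbf{The $p_j$ weighting is unjustified and, fortunately, unnecessary.} You need the expected number of walks at level $j$ to be $O(p_j)$, which would require the recursion to terminate at each step with probability bounded above $1/2$. But termination requires the stopping bin to be non-full after \emph{all} $k$ levels; when $C=O(1)$ the fraction of such bins is only $\Theta(\eps)$, so at early levels the termination probability is far below $1/2$. Without the $p_j$ factor your walk-length bound $O(4^j/C)$ sums to $O(4^k/C)=O(1/(\eps^2 C))$, which for $C=O(1)$ is $O(1/\eps^2)$---no improvement over the scheme without virtual bins. The paper sidesteps this entirely: since the insertion visits each level at most once and $\E[Z_i]=O(2^i)$ holds unconditionally (it is $0$ whenever level $i$ is not reached), one simply sums $\sum_{i=1}^{k}O(2^i)=O(2^k)=O(1/\eps)$ with no need to track how often each level is entered.
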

We remark that one way to implement the above hashing is by using an auxiliary hash function $s:U \to [2^{k-1}]$. Letting $h_1,\dots,h_k$ denote the hash functions distributing balls at level $1,\dots,k$, the hash value of a key $x\in U$ is then given by $h_{i+1}(x)$, where $i$ is the number of leading $0$'s of $s(x)$.

\begin{proof}
Let $Z$ denote the number of virtual bins visited in total and $Z_i$ denote the number of virtual bins visited at level $i\in [k]$. Then $Z=\sum_{i=1}^kZ_i$. We will show that $\E[Z_i]=O(2^i)$ from which it follows that $\E[Z]=O(2^k)=O(1/\eps)$. 

First, it follows from a standard Chernoff bound that if $X_i$ is the number of balls hashing to level $i$ and $\mu_i=\E[X_i]=p_in$, then for $\delta\leq 1$,
$$
\Pr[|X_i-\mu_i|\geq \delta \mu_i]\leq \exp(-\delta^2\mu_i/3)
$$
Thus , it holds that $|X_i-\mu_i|=O(\sqrt{\mu_i\log n})$ with probability at least $1-n^{-3}$. Similarly, if $X_{<i}=\sum_{j<i}X_j$ and $\mu_{<i}=\sum_{j<i}\mu_j$, it holds that $|X_{<i}-\mu_{<i}|=O(\sqrt{\mu_{<i}\log n})$ with the same high probability.

For each $j\in [m]$, we define $C_j^{(i)}$ to be the remaining capacity of bin $j$ after the distribution of balls to levels $1,\dots,i-1$. Then $\sum_{j\in[m]}C_j^{(i)}=(1+\eps)n- X_{<i}$, so it follows from the above that with probability $1-O(n^{-2})$,
$$
\sum_{j\in[m]}C_j^{(i)}\geq (1+\eps) n-\mu_{<i}-O\left(\sqrt{\mu_{<i}\log n}\right)= (\eps+2^{-i+1})n-O(\sqrt{n\log n}).
$$
For $i<k$ we have that $\mu_i=2^{-i}n$, so it follows that, $\sum_{j\in[m]}C_j^{(i)}\geq 2X_i$ with probability $1-O(n^{-2})$, where we used the assumption that $1/\eps=n^{o(1)}$. In the case $i=k$, we instead have that 
$$
X_k\leq 2^{-k+1}n+O(\sqrt{n\log n})\leq \eps n/2+(\sqrt{n\log n}),
$$
with probability at least $1-O(n^{-2})$, so again $\sum_{j\in[m]}C_j^{(i)}=X_k+\eps n\geq 2X_k$, again using that $1/\eps=n^{o(1)}$.

Now fix $i\in[k]$, and write $C_j=C_j^{(i)}$ for simplicity. Let $\mathcal{E}$ denote the event that $p_in/2\leq X_i \leq 2p_in$ and that  $\sum_{j\in[m]}C_j^{(i)}\geq 2X_k$. Then $\Pr[\mathcal{E}^c]=O(n^{-2})$, so that 
$$
\E[Z_i]\leq \E[Z_i \mid \mathcal{E}]+\E[Z_i \mid \mathcal{E}^c]\Pr[\mathcal{E}^c]\leq \E[Z_i \mid \mathcal{E}]+O(mn^{-2})=\E[Z_i \mid \mathcal{E}]+O(1).
$$ 
Thus, it will suffice to show that $\E[Z_i \mid \mathcal{E}]=O(2^i)$. Let $b$ be the first bin visited at level $i$, i.e., during the insertion we at some level $j<i$ arrived at bin $b$ and $b$ is not full after the hashing of balls to level $1,\dots,i-1$. Let $I$ be a maximal interval at level $i$ containing $b$ and satisfying that all bins lying in $I$ are full at level $i$. Let $R$ denote the number of bins in $I$ excluding $b$. Then $Z_i\leq R+1$. We will show that $\E[R]=O(2^i)$ (for notational convenience the conditioning on $\mathcal{E}$ has been left out). Let $s\in \N$ be given and let $A_s$ denote the even that $s+1\leq R \leq 2s$. We are now going to provide an upper bound on $\Pr[A_s]$. Let $I_1^{-}$ and $I_1^+$ be the intervals respectively ending and starting at $b$ and of lengths $\frac{s}{3m}$. Similarly, let $I_2^{-}$ and $I_2^+$ be the intervals respectively ending and starting at $b$ and of lengths $\frac{3s}{m}$. Let $I_1=I^{-}_1\cup I^{+}_1$ and $I_2=I^{-}_2\cup I^{+}_2$. Finally, partition $I_2$ into $54$ intervals of equal lengths, $J_1,\dots,J_{54}$. Let $a$ be such that $(1-a)/(1+a)=5/6$ (or $a=1/11$) and $\overline{C}=\frac{1}{m}\sum_{j\in [m]}C_j$. We claim that if $A^s$ holds then either of the following events must be true
\begin{itemize}
\item[$B_1$:] $I_2^{-}$ or $I_2^{+}$ contains at most $2s$ virtual bins different from $b$.
\item[$B_2$:] $I_1^{-}$ or $I_1^{+}$ contains at least $s/2$ virtual bins different from $b$.
\item[$B_3$:] The total capacity of bins different than $b$ hashing to $J_j$ is at most $\frac{(1-a)s\overline{C}}{9}$ for some $1\leq \ell\leq 54$.
\item[$B_4$:] The total number of balls hashing to $J_\ell$ is at least $\frac{(1+a)s\overline{C}}{18}$ for some $1\leq \ell\leq 54$.
\end{itemize}
To see this, suppose that $A_s$ occurs but neither of $B_1,B_2,B_3$ occurs. We show that then $B_4$ must occur. As $R\leq 2s$ and $B_1$ did not occur, $I\subseteq I_2$. As $R\geq s+1$ and $B_2$ did not occur, either $I_1^{-}\subseteq I$ or $I_1^{+}\subseteq I$. Letting $\ell$ denote the number of $j$ such that $J_j\subseteq I$ it therefore follows that $\ell\geq 3$. Since $B_3$ did not occur, the total capacity of bins hashing to $I$ is at least $\frac{\ell\overline{C}(1-a)s}{9}$. Finally, since all balls which ends up in a bin in $I$ must have hashed to $I$ it follows that the total number of balls hashing to $I$ is at least $\frac{\ell\overline{C}(1-a)s}{9}$. In particular, for some $1\leq \ell\leq 54$, at least $\frac{\ell\overline{C}(1-a)s}{9(\ell+2)}$ balls must hash to $J_\ell$. But 
$$
\frac{\ell\overline{C}(1-a)s}{9(\ell+2)}\geq \frac{\overline{C}(1-a)s}{15}=\frac{\overline{C}(1+a)s}{18},
$$
so we conclude that $B_4$ holds.

Simple Chernoff bounds gives that inequality give that $\Pr[B_1]=\exp(-\Omega(s))$ and $\Pr[B_2]=\exp(-\Omega(s))$.
To bound $\Pr[B_3]$, let $\ell\in [54]$ be fixed and define $Y_j$ to be the indicator for bin $j$ hashing to $J_\ell$. Further, define $Y=\sum_{j\in [m]}Y_j$. Then $\E[Y]=\frac{\overline{C}s}{9}$. This time however, we only have that $|Y_j|\leq C$, so applying Chernoff we obtain that 
$$
\Pr[B_3]=\Pr[Y\leq (1-a)\E[Y]]=\exp\left(-\Omega\left(\frac{\E[Y]}{C}\right)\right)=\exp\left(-\Omega\left(\frac{s}{2^i}\right)\right)
$$
For $B_4$, note that since we conditioned on $\mathcal{E}$, the expected number of balls hashing to an interval $J_\ell$ is $ \frac{X_is}{9m}\leq \frac{\overline{C}s}{18}$. Thus, another Chernoff bound yields that $\Pr[B_4]=\exp(-\Omega(s\overline C))$.  Note that $\overline{C}\geq 1/2^i$, so that we in particular have that $\Pr[B_4]=\exp(-\Omega(s/2^i))$. Combining our bounds, it follows that for $s\geq 2^i$,
$$
\Pr[A_s]=\exp\left(-\Omega\left(\frac{s}{2^i}\right)\right).
$$
Now we can upper bound
$$
\E[R]\leq 2^i+\sum_{j=0}^\infty \Pr[A_{2^{i+j}}]2^{i+j+1}=2^i+2^{i+1}\sum_{j=1}^\infty \exp\left(-\Omega(2^{j})\right)2^j=O(2^i),
$$
as desired. This completes the proof.
\end{proof}

\section{Balls into Capacitated Bins}\label{sec:capacitated-bins}
In this section we prove~\Cref{thm:main0}. Let us start by recalling the setting of the theorem. We let $n,m\in \N$ and $\eps$ be given with $0<\eps <1$ and suppose that we sequentially distribute $n$ balls into $m$ bins, each of capacity $C=(1+\eps)n/m$. For simplicity, we assume that $n,m$ and $\eps$ are such that $C$ is a positive integer. Each ball is placed in a uniformly random non-full bin, where a bin is \emph{full} if it contains precisely $C$ balls. The theorem claims that if $1/\eps=m^{o(1)}$, then the expected fraction of non-full bins is $\Theta(f)$, where,
$$
f=\begin{cases}
\eps C, &C\leq \log(1/\eps) \\
\eps\sqrt{C\log\left(\frac{1}{\eps\sqrt C}\right)}, &\log(1/\eps)<C\leq  \frac{1}{2\eps^2} \\
1, &\frac{1}{2\eps^2}\leq C.
\end{cases}
$$

To prove the theorem, we will take an alternative viewpoint on the distribution process. Instead of picking a non-full bin for each ball, we disregard the capacities and instead pick a uniformly random bin (full or non-full). Then a bin may receive more than $C$ balls but if it does, we view it as having exactly $C$ balls. To be precise, for $j \in [m]$, and $i\in\Z_{\geq 0}$, we denote by $X_j^{(i)}$ the number of balls in bin $j$ after $i$ balls have been placed. We further define $Y_j^{(i)}=\min(X_j^{(i)},C)$. Let $T\in \N$ be minimal such that $\sum_{j\in [m]} Y_j^{(T)}=n$. Note that $T$ is a random variable with $T\geq n$ and that $\Pr[T<\infty]=1$. Further note that when the $n$ balls are distributed into the $m$ bins as in~\Cref{thm:main0}, the joint distribution of balls in bins has the same distribution as $(Y_j^{(T)})_{i\in [m]}$. We will first prove concentration bounds on $T$ and for this, we require Azuma's inequality.

\begin{theorem}[Azuma's inequality~\cite{Azuma67}]
Suppose that $(X_i)_{i=0}^k$ is a martingale satisfying that $|X_{i+1}-X_{i}|\leq s_i$ almost surely for all $i=0,\dots,k-1$. Let $s=\sum_{i=1}^k s_i^2$. Then for any $t >0$ it holds that
\begin{align}
\Pr(|X_k-X_0|\geq t) \leq 2\exp \left( \frac{-t^2}{2s}\right).
\end{align}
 \end{theorem}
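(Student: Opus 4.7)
The plan is to prove Azuma's inequality via the standard Chernoff/moment generating function approach combined with Hoeffding's lemma. First, I would set $D_i = X_i - X_{i-1}$ so that $X_k - X_0 = \sum_{i=1}^k D_i$; the martingale property gives $\E[D_i \mid \mathcal{F}_{i-1}] = 0$ (where $\mathcal{F}_i$ is the natural filtration), and the hypothesis gives $|D_i| \leq s_i$. For any $\lambda > 0$, Markov's inequality yields
\begin{equation*}
\Pr(X_k - X_0 \geq t) \;\leq\; e^{-\lambda t}\,\E\!\left[\exp\!\Big(\lambda \sum_{i=1}^k D_i\Big)\right].
\end{equation*}

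Next I would control the moment generating function by conditioning on $\mathcal{F}_{k-1}$ and iterating. The tower property gives
\begin{equation*}
\E\!\left[e^{\lambda \sum_{i=1}^k D_i}\right] \;=\; \E\!\left[e^{\lambda \sum_{i=1}^{k-1} D_i}\,\E\!\big[e^{\lambda D_k}\mid \mathcal{F}_{k-1}\big]\right].
\end{equation*}
The key ingredient is Hoeffding's lemma: if $Y$ is a random variable with $\E[Y]=0$ and $|Y|\leq s$, then $\E[e^{\lambda Y}]\leq e^{\lambda^2 s^2 / 2}$. I would prove this by writing $Y$ as a convex combination $Y = \tfrac{s-Y}{2s}(-s) + \tfrac{s+Y}{2s}(s)$ and applying convexity of $x\mapsto e^{\lambda x}$, then taking expectations and optimizing the resulting bound on $\tfrac{1}{2}e^{-\lambda s}+\tfrac{1}{2}e^{\lambda s}$ (or more directly, using the standard elementary inequality $\tfrac{1}{2}(e^a+e^{-a})\leq e^{a^2/2}$). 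Applied conditionally, this gives $\E[e^{\lambda D_k}\mid \mathcal{F}_{k-1}] \leq e^{\lambda^2 s_k^2 / 2}$ almost surely.

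Plugging this back and iterating across $i=k,k-1,\dots,1$ telescopes to
\begin{equation*}
\E\!\left[e^{\lambda(X_k - X_0)}\right] \;\leq\; \exp\!\Big(\tfrac{\lambda^2}{2}\sum_{i=1}^k s_i^2\Big) \;=\; e^{\lambda^2 s / 2}.
\end{equation*}
Thus $\Pr(X_k - X_0 \geq t) \leq \exp(\lambda^2 s/2 - \lambda t)$; choosing the minimizer $\lambda = t/s$ gives the one-sided bound $\exp(-t^2/(2s))$. Finally, applying the same argument to the martingale $(-X_i)_{i=0}^k$ (which also has bounded increments of the same magnitudes) yields the matching lower-tail bound $\Pr(X_k - X_0 \leq -t) \leq \exp(-t^2/(2s))$, and a union bound gives the stated factor of $2$.

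The only non-routine step is Hoeffding's lemma itself; everything else (Markov, tower property, optimization over $\lambda$, symmetry) is mechanical. I would not expect substantive obstacles beyond supplying that lemma cleanly, since the result is entirely classical and the proof is self-contained from the martingale difference bound.
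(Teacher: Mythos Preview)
Your proof is correct and is the standard Chernoff--Hoeffding argument for Azuma's inequality. Note, however, that the paper does not actually prove this theorem: it is stated as a classical result with a citation to~\cite{Azuma67} and then applied in the proofs of Lemmas~\ref{lemma:ballsconc} and~\ref{lemma:ballsconc2}. So there is no ``paper's own proof'' to compare against; your write-up simply supplies the well-known argument that the paper takes for granted.
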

 
The concentration bound on $T$ is as in the following lemma.
\begin{lemma}\label{lemma:ballsconc}
For any $N\geq 2Cm$ and any $t>0$ it holds that
$$
\Pr[|T-\E[T]|\geq t]\leq 2\exp\left(- \frac{t^2\eps^2}{8N}\right)+m\exp(-N/(8m)).
$$
\end{lemma}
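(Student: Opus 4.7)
The plan is to combine a Chernoff-type bound that rules out $T>N$ with Azuma's inequality applied to the Doob martingale of $S_N$, and to transfer the resulting concentration of $S_N$ into concentration of $T$ via the duality $\{T\le N\}=\{S_N\ge n\}$.

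\textbf{The second summand.} I first show $\Pr[T>N]\le m\exp(-N/(8m))$, which already accounts for the additive second summand in the stated bound. For $N\ge 2Cm$, each count $X_j^{(N)}$ in the unbounded random-bin process is $\mathrm{Binomial}(N,1/m)$ with mean at least $2C$, so the multiplicative Chernoff bound gives $\Pr[X_j^{(N)}<C]\le \exp(-N/(8m))$. A union bound over $j\in[m]$ yields $\Pr[\exists j:X_j^{(N)}<C]\le m\exp(-N/(8m))$, and on the complementary event $S_N=Cm>n$, so $T\le N$.

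\textbf{Azuma on $S_{N'}$.} I view $S_{N'}$ as a function $\Phi(Z_1,\ldots,Z_{N'})$ of the i.i.d.\ uniform bin choices, for any $N'\le N$. Flipping a single coordinate $Z_k$ changes two bin counts $X_j^{(N')},X_{j'}^{(N')}$ by $\mp 1$, and since $Y_\ell=\min(X_\ell,C)$ changes by at most $1$ in each bin and the two changes have opposite signs, $|\Delta \Phi|\le 2$. Applied to the Doob martingale of $\Phi$, Azuma's inequality gives
\[
\Pr\bigl[\,|S_{N'}-\E[S_{N'}]|\ge s\,\bigr]\ \le\ 2\exp\!\bigl(-s^2/(8N')\bigr),
\]
and since $N'\le N$ the denominator is at most $8N$.

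\textbf{Transfer from $S$ to $T$.} The key deterministic observation is that whenever $i<T$ (equivalently $S_i<n$), the number of full bins is at most $S_i/C<n/C = m/(1+\eps)$, so the fraction of non-full bins at step $i$ is at least $\eps/(1+\eps)\ge \eps/2$; hence $\E[S_{i+1}]-\E[S_i]\ge (\eps/2)\Pr[i<T]$. For the upper tail, set $N_+=\lceil \E[T]+t\rceil -1$; if $N_+>N$ then $\Pr[T\ge \E[T]+t]\le \Pr[T>N]$ is already absorbed by the second summand, otherwise summing the increment inequality over $i\in [\E[T],N_+]$ together with a Step-1 tail estimate on $\Pr[T>N_+]$ shows $\E[S_{N_+}]\ge n+t\eps/2$. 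The upper tail then follows from
\[
\Pr[T\ge \E[T]+t]\ =\ \Pr[S_{N_+}<n]\ \le\ \Pr\bigl[\,S_{N_+}-\E[S_{N_+}]\le -t\eps/2\,\bigr]\ \le\ \exp\!\bigl(-t^2\eps^2/(8N)\bigr).
\]
The lower tail, using $N_-=\lceil \E[T]-t\rceil$ and the analogous upper estimate $\E[S_{N_-}]\le n-t\eps/2$, is handled symmetrically.

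\textbf{Main obstacle.} The principal technical point is establishing the separation $|\E[S_{N_\pm}]-n|\ge t\eps/2$ in the last step: propagating the per-step lower bound $\eps/2$ throughout a length-$\Theta(t)$ window around $\E[T]$ requires uniform control of $\Pr[i<T]$ on that window, and this is precisely where the Chernoff tail bound from Step~1 is re-used to ensure that $\E[\min(T,N_+)]$ differs from $\E[T]$ only by a negligible amount. Once this separation is in place, the rest of the argument is a routine plug-in into Azuma.
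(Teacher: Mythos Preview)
Your approach diverges from the paper's: rather than applying Azuma to $S_N$ and then transferring to $T$, the paper applies Azuma directly to the Doob martingale $X_i=\E[T\mid\mathcal F_i]$ of $T$ itself, showing $|X_{i+1}-X_i|\le(1+\eps)/\eps\le 2/\eps$ by a coupling argument (inserting one extra bin choice can only decrease $T$ by $1$; removing it can increase $T$ by at most the expected wait for a non-full bin, which is $\le(1+\eps)/\eps$ since at least an $\eps/(1+\eps)$ fraction of bins are non-full while $S_i<n$). Azuma then gives $\Pr[|X_N-\E[T]|\ge t]\le 2\exp(-t^2\eps^2/(8N))$, and the additive $m\exp(-N/(8m))$ term comes from $\Pr[X_N\neq T]\le\Pr[T>N]$, exactly as in your Step~1.

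Your transfer step contains a genuine gap. Write $g(i)=\E[S_i]$. Summing your increment inequality over \emph{any} range gives at best $g(N_+)\ge(\eps/2)\sum_{i<N_+}\Pr[T>i]=(\eps/2)\,\E[\min(T,N_+)]\le(\eps/2)\E[T]$, which can be far below $n$ (e.g.\ when $C$ is large and $\E[T]\approx n$), let alone $n+t\eps/2$. The Step-1 Chernoff bound controls $\E[(T-N_+)_+]$ and hence shows $\E[\min(T,N_+)]\approx\E[T]$, but that does not close the gap. More fundamentally, on the window $i\in[\E[T],N_+]$ the factor $\Pr[T>i]$ may be tiny (this is exactly the concentration you are trying to prove), so the increment bound is vacuous there; and you have no independent reason that $g(\lceil\E[T]\rceil)\ge n$ to anchor the sum. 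The lower tail is worse: the ``analogous upper estimate'' $g(N_-)\le n-t\eps/2$ would require an \emph{upper} bound on increments of order $\eps$, but the only upper bound available is the trivial $g(i+1)-g(i)\le 1$, yielding $g(N_-)\le N_-=\E[T]-t$, which is $\le n-t\eps/2$ only when $\E[T]\le n+t(1-\eps/2)$ --- not something you know. (A minor side remark: your bounded-difference constant is actually $|\Delta\Phi|\le 1$, not $2$, since the two coordinate changes have opposite signs; with $2$ the exponent would be off by a factor~$4$.)
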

\begin{proof}
For $i\in \Z_{\geq 0}$, we define $S_i\in [m]$ to be the randomly chosen bin for the $i$'th ball. We further define $\mathcal{F}_i=\sigma(S_1,\dots,S_i)$ to be the $\sigma$-algebra generated by the random choices of bins for the first $i$ balls. Finally, we put $X_i=\E[T| \mathcal{F}_i]$. Then $(X_i)_{i=0}^\infty$ is a martingale with $X_0=\E[T]$. Now the random variable $X_i$ is the expected value of $T$ conditioned on the placements of the first $i$ balls. We are going to prove that $|X_{i+1}-X_i|\leq \frac{1+\eps}{\eps}$ for each $i\geq 0$. 
To see this, fix $i$ and write $n'=\sum_{j\in [m]} Y_j^{(i)}$. If $n'\geq n$, then $X_i=X_{i+1}=T$, so we may assume that $n'<n$, i.e., after distributing the first $i$ balls we are still not done distributing the $n$ balls into the capacitated bins. 
In this case, it trivially holds that $X_{i+1}\leq X_i+1$ with equality holding if and only if the $(i+1)$'st ball is placed in a bin which is already full. 
On the other hand, we claim that $X_i\leq X_{i+1}+\frac{1+\eps}{\eps}$. To see this, let $T'$ be minimal such that $\sum_{j\in [m]} Y_j^{(T')}=n-1$ and let $T_i'=\max (T',i)$. From the assumption $n'<n$ it follows that $T_i'<T$ and we may write
$$
X_i=\E[T_i'| \mathcal{F}_i]+\E[T-T_i'| \mathcal{F}_i].
$$
Consider now any sequence of ball placements $s=(s_1,\dots, s_\ell)\in [m]^\ell$ with $\ell>i$ satisfying that if $(S_1,\dots, S_\ell)=s$, then $T=\ell$. Then, for any $s'\in [m]^\ell$ differing from $s$ in at most the $(i+1)$'st coordinate, it holds that if $(S_1,\dots, S_\ell)=s'$, then $T_i'\leq \ell$. From this it follows that $\E[T_i'| \mathcal{F}_i]\leq X_{i+1}$. 
We further claim that $\E[T-T_i'| \mathcal{F}_i]\leq \frac{1+\eps}{\eps}$. To see this, note that when placing $n$ balls into $m$ bins of capacity $C=(1+\eps)n/m$, at least $\frac{\eps}{1+\eps}$ bins will be non-full regardless of the positions of the balls. Now $T-T_i'$ counts the number of times we have to select a random bin until we find a non-full bin. 
Therefore, $T-T_i'$ will be geometrically distributed with parameter $p\geq \frac{\eps}{1+\eps}$, and it follows that $\E[T-T_i'\mid \mathcal{F}_i]\leq \frac{1+\eps}{\eps}$. Combining our bounds, we conclude that 
$$
|X_{i+1}-X_i|\leq \frac{1+\eps}{\eps}\leq \frac{2}{\eps}.
$$
Plugging into Azuma's inequality, we see that for any $N\geq 0$ and any $t>0$, it holds that 
$$
\Pr[|X_N-\E[T]|\geq t]=\Pr[|X_N-X_0|\geq t]\leq 2\exp\left(- \frac{t^2\eps^2}{8N}\right).
$$
Thus, for any $N\geq 0$,
$$
\Pr[|T-\E[T]|\geq t]\leq \Pr[|X_N-\E[T]|\geq t]+\Pr[X_N\neq T]\leq 2\exp\left(- \frac{t^2\eps^2}{8N}\right)+\Pr[N<T].
$$
Suppose $N\geq 2Cm$. By a standard Chernoff bound it follows if $N$ balls are distributed at random into $m$ bins, the probability that a given bin receives less than $C$ balls is upper bounded by $\exp(-N/(8m))$. Thus, we can trivially upper bound $\Pr[N<T]\leq m\exp(-N/(8m))$. Combining our bounds,
$$
\Pr[|T-\E[T]|\geq t]\leq 2\exp\left(- \frac{t^2\eps^2}{8N}\right)+m\exp(-N/(8m)),
$$
as desired.
\end{proof}
Curiously,~\Cref{lemma:ballsconc} does not tell us anything about the value of $\E[T]$ and in fact, we will not need it when proving~\Cref{thm:main0}. The bound in~\Cref{lemma:ballsconc} is a bit unwieldy, so below we state a corollary which is better suited for applications.
\begin{corollary}\label{corrolary:ballsconc}
Let $\gamma=O(1)$. If $C>\frac{3(1+\eps)(1+\gamma) \log n}{\eps^2}$, then $\Pr[T=n]=1-O(n^{-\gamma})$. Otherwise $|T-\E[T]|=O\left(\frac{\sqrt{m}\log n }{\eps^2} \right)$ with probability $1-O(n^{-\gamma})$, where the implicit constant in the $O$-notation depends on $\gamma$.
\end{corollary}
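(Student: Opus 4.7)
The proof proposal naturally splits into the two cases of the statement.

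\textbf{Case 1: $C$ large, i.e., $C > 3(1+\eps)(1+\gamma)\log n/\eps^2$.} Here the plan is to argue directly that the unconstrained process $(X_j^{(i)})$ already respects the capacities after $n$ steps, i.e.\ $X_j^{(n)} \le C$ for every $j$, so $T = n$. For any fixed bin $j$, $X_j^{(n)}$ is $\mathrm{Bin}(n,1/m)$ with mean $\mu = n/m = C/(1+\eps)$, and the event $T > n$ equals the existence of some $j$ with $X_j^{(n)} > C = (1+\eps)\mu$. A standard multiplicative Chernoff bound gives
\[
\Pr[X_j^{(n)} > (1+\eps)\mu] \le \exp(-\eps^2 \mu/3) = \exp\!\left(-\frac{\eps^2 C}{3(1+\eps)}\right),
\]
and the hypothesis on $C$ makes this at most $n^{-(1+\gamma)}$. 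A union bound over the $m \le n$ bins then yields $\Pr[T > n] \le m\,n^{-(1+\gamma)} = O(n^{-\gamma})$.

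\textbf{Case 2: $C \le 3(1+\eps)(1+\gamma)\log n/\eps^2$.} Here I would feed the right parameters into Lemma~\ref{lemma:ballsconc}. The two error terms there are $2\exp(-t^2\eps^2/(8N))$ and $m\exp(-N/(8m))$; I pick $N$ to kill the second term and then $t$ to kill the first. Concretely, set
\[
N = \max\!\big(2Cm,\; 8(1+\gamma)m\log n\big).
\]
Under the case hypothesis $2Cm \le 6(1+\eps)(1+\gamma)m\log n/\eps^2$, so $N = O(m\log n/\eps^2)$ and simultaneously $m\exp(-N/(8m)) \le m\cdot n^{-(1+\gamma)} = O(n^{-\gamma})$. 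Next choose
\[
t = \frac{C_\gamma \sqrt{m}\,\log n}{\eps^2}
\]
for a sufficiently large constant $C_\gamma$ depending on $\gamma$. Then $t^2\eps^2/(8N) = \Omega_\gamma(\log n)$, so the first error term is also $O(n^{-\gamma})$. Applying Lemma~\ref{lemma:ballsconc} gives $|T - \E[T]| \le t = O(\sqrt{m}\log n/\eps^2)$ with probability $1 - O(n^{-\gamma})$, which is exactly what the corollary claims.

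\textbf{Expected obstacle.} There is essentially no obstacle beyond bookkeeping: both cases are immediate once one picks the parameters carefully. The only mild care is in Case~2, where one must check that the lower bound $N \ge 2Cm$ required by Lemma~\ref{lemma:ballsconc} is compatible with the desired upper bound $N = O(m\log n/\eps^2)$; this is guaranteed precisely by the assumption that Case~1 fails, i.e.\ $C \le 3(1+\eps)(1+\gamma)\log n/\eps^2$. Thus Cases~1 and~2 dovetail exactly at the threshold stated in the corollary.
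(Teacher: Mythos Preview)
Your proposal is correct and matches the paper's proof essentially line for line: the same Chernoff bound and union bound in Case~1, and in Case~2 the same choice $N=\max(2Cm,\,8(1+\gamma)m\log n)$ plugged into Lemma~\ref{lemma:ballsconc}, followed by the observation that $N=O(m\log n/\eps^2)$ under the case hypothesis. There is nothing to add.
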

\begin{proof}
Suppose first that $C>\frac{3(1+\eps)(1+\gamma) \log n}{\eps^2}$. Consider throwing $n$ balls into $m$ bins uniformly at random. Let $X$ denote the number of balls landing in a given bin and $\mu=\E[X]=C/(1+\eps)$. Then a standard Chernoff bound shows that
$$
\Pr[X> C]=\Pr[X> (1+\eps)\mu]\leq \exp(-\eps^2\mu/3)\leq n^{-\gamma-1},
$$
so the probability that any bin receives more than $C$ ball is $O(n^{-\gamma})$ by a union bound. In particular $T=n$ with probability $1-O(n^{-\gamma})$.

Suppose on the other hand that $C\leq\frac{3(1+\eps)(1+\gamma) \log n}{\eps^2} \leq \frac{6(1+\gamma) \log n}{\eps^2}$. Applying~\Cref{lemma:ballsconc} with $N=\max(2Cm,8m(\gamma+1)\log n)$, we obtain that 
$$
\Pr[|T-\E[T]|\geq t]=2\exp\left(- \frac{t^2\eps^2}{8 N}\right)+n^{-\gamma}.
$$
In particular $|T-\E[T]|=O(\sqrt{N\log n}/\eps)$ with probability $1-O(n^{-\gamma})$. The desired bound follows by observing that $N=O(\frac{m\log n}{\eps^2})$.
\end{proof}
We need one further Lemma before proving~\Cref{thm:main0}.
\begin{lemma}\label{lemma:ballsconc2}
Let $k\geq 0$ be fixed and define $Z=\sum_{j\in [m]}Y_j^{(k)}$. Then for any $t>0$,
$$
\Pr[|Z-\E[Z]|\geq t]\leq 2\exp\left(-\frac{t^2}{2k}\right).
$$
\end{lemma}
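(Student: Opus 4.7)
The plan is to apply Azuma's inequality to a Doob martingale, exactly as in the proof of~\Cref{lemma:ballsconc}, but with much tighter increment bounds. Let $S_1,\dots,S_k$ denote the independent uniformly random bin choices for the $k$ balls, let $\mathcal{F}_i=\sigma(S_1,\dots,S_i)$, and define $X_i=\E[Z\mid \mathcal{F}_i]$ for $i=0,\dots,k$. Then $(X_i)$ is a martingale with $X_0=\E[Z]$ and $X_k=Z$, and the bound will follow from Azuma once the increments are controlled.

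The crux is to show that $|X_{i+1}-X_i|\leq 1$ almost surely. I would argue via bounded differences. Writing $Z=g(S_1,\dots,S_k)$ with
\[
g(s_1,\dots,s_k)=\sum_{j\in [m]}\min\bigl(|\{\ell : s_\ell=j\}|,C\bigr),
\]
I observe that changing a single coordinate of the input, say from $a$ to $b$, decreases the count at bin $a$ by one and increases the count at bin $b$ by one. Hence each of the two affected $\min(\cdot,C)$ terms moves by at most one, and with opposite signs, so $g$ changes by at most $1$ in absolute value. Conditioning on $\mathcal{F}_i$ and integrating out $S_{i+2},\dots,S_k$, this bounded-differences property transfers to the map
\[
h(s)=\E\bigl[g(S_1,\dots,S_i,s,S_{i+2},\dots,S_k)\,\big|\,\mathcal{F}_i\bigr],
\]
whose oscillation is therefore at most $1$. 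Since $X_{i+1}-X_i = h(S_{i+1}) - \E[h(S_{i+1})\mid \mathcal{F}_i]$, its absolute value is bounded by the oscillation of $h$, hence by $1$.

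With increment bounds $s_i=1$ for every $i$, Azuma's inequality yields $\sum_i s_i^2 = k$ and the stated bound $2\exp(-t^2/(2k))$. There is no real obstacle: the construction of the Doob martingale, the one-swap bounded-differences analysis, and the final invocation of Azuma are all standard. The only point needing care is to verify that the capping by $\min(\cdot,C)$ does not destroy the unit bounded-differences property, but this reduces to the simple two-bin case analysis above.
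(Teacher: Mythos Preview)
Your proof is correct and takes essentially the same approach as the paper: form the Doob martingale $Z_i=\E[Z\mid \mathcal{F}_i]$, verify $|Z_{i+1}-Z_i|\leq 1$, and apply Azuma. The paper merely asserts the unit increment bound as ``easy to check,'' whereas you have spelled out the bounded-differences argument explicitly.
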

\begin{proof}
Let $S_1,S_2,\dots$ and $\mathcal{F}_1,\mathcal{F}_2,\dots$ be defines as in the proof of~\Cref{lemma:ballsconc}. For $0\leq i\leq k$, we define $Z_i=\E[Z\mid \mathcal{F}_i]$ so that $Z_0=\E[Z]$ and $Z_k=Z$. Now it is easy to check that for $0\leq i<k$ it holds that $|Z_{i+1}-Z_i|\leq 1$. Thus the desired result follows from Azuma's inequality.
\end{proof}

We will next prove~\Cref{thm:main0}.
\begin{proof}[Proof of~\Cref{thm:main0}]
Note first, that if $\eps=\Omega(1)$, then $f=\Theta(1)$, regardless of the relationship between $\eps$ and $C$. When placing $n$ balls into $m$ bins, each of capacity $C=(1+\eps)n/m$, the fraction of non-full bins is at least $\eps/(1+\eps)$, regardless where the balls are placed. In the case $\eps=\Omega(1)$, this is $\Theta(1)$, so~\Cref{thm:main0} is trivial. In the following, we may therefore assume that $\eps$ smaller than a sufficiently small constant. 

We will again consider the alternative viewpoint where we throw an infinite sequence of balls uniformly at random into the bins. As before, we define $X_{j}^{(i)}$ to be the number of balls in bin $j$ after throwing $i$ balls, $Y_j^{(i)}=\min(X_j^{(i)},C)$ and $T=\min(i\in \N:\sum_{j\in [m]}Y_j^{(i)}=n)$. 

Let $\gamma>1$ be a constant to be fixed. We are going to split the argument into three cases.
\paragraph{Case 1: $C\leq \gamma \log(1/\eps)$.} We will show that in this case, the expected fraction of non-full bins is $\Theta(\eps C)$. To do this, we first show the following technical claim.
\begin{claim}
If $C\leq \gamma \log(1/\eps)$, then $\E[T]=(1+\Omega(1))n$.
\end{claim}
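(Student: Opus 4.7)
The plan is to lower bound $\E[T]$ by considering the auxiliary process where we throw $N := (1+c_0)n$ balls uniformly at random into the $m$ bins, for a suitable constant $c_0 > 0$ depending only on $\gamma$, and showing that with probability at least $1/2$ fewer than $n$ of these have been successfully placed (capacitated). Since $\sum_j Y_j^{(i)}$ is non-decreasing in $i$, we have $\{T > N\} = \{Z_N < n\}$ for $Z_N := \sum_{j\in [m]} \min(X_j^{(N)}, C)$, so $T \geq n$ almost surely gives $\E[T] \geq N\Pr(T > N) + n\Pr(T \leq N) = n + c_0 n\Pr(Z_N < n) \geq n + c_0 n/2 = (1+\Omega(1))n$.

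To implement the strategy, let $X \sim \mathrm{Bin}(N, 1/m)$ be the marginal count in any single bin, with mean $\mu = (1+c_0)C/(1+\eps) > C$. Using $\E[\min(X,C)] = C - \E[(C-X)_+]$ and $n = mC/(1+\eps)$, we get
$$
n - \E[Z_N] \;=\; m\bigl(\E[(C-X)_+] - \eps C/(1+\eps)\bigr),
$$
so the crux is to show that $\E[(C-X)_+]$ dominates $\eps C$. Choose $c_0$ such that $\gamma\cdot g(c_0) < 1$, where $g(x) := x - \log(1+x)$ is the standard Poisson large-deviation rate function; since $g(x) \leq x^2/2$ for $x \geq 0$, the choice $c_0 = 1/\sqrt{\gamma}$ gives $\gamma g(c_0) \leq 1/2$. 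Stirling's formula applied to the binomial pmf yields, for each fixed $j \geq 0$ and some constant $c_2 > 0$,
$$
\Pr(X = C-j) \;\geq\; \frac{c_2}{\sqrt{C}}\exp\bigl(-Cg(c_0) - j\rho_0\bigr),
$$
where $\rho_0 := \log((1+c_0)/(1+\eps)) = \Theta(c_0)$. Summing the resulting geometric series gives $\E[(C-X)_+] = \Omega(e^{-Cg(c_0)}/\sqrt{C})$. Since $C \leq \gamma\log(1/\eps)$, $e^{-Cg(c_0)} \geq \eps^{\gamma g(c_0)} = \eps^{1-\Omega(1)}$, which dwarfs $\eps C = O(\eps\log(1/\eps))$ once $\eps$ is below a threshold depending on $\gamma$. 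Hence $\E[(C-X)_+] \geq (1+c_1)\eps C/(1+\eps)$ for some constant $c_1 > 0$, so $n - \E[Z_N] = \Omega(\eps n)$.

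Finally, applying~\Cref{lemma:ballsconc2} with $k = N$ gives $|Z_N - \E[Z_N]| \leq \sqrt{2N\log 4}$ with probability at least $1/2$. Under the assumptions $1/\eps = m^{o(1)}$ and $n \geq m$, we have $\eps n = \omega(\sqrt{N})$, so $n - \E[Z_N]$ exceeds the concentration radius and $\Pr(Z_N < n) \geq 1/2$, as required. The main technical obstacle is establishing the displayed local lower bound on $\Pr(X = C-j)$ with the correct large-deviation rate $Cg(c_0)$: the standardized deviation $(C-\mu)/\sqrt{\mu}$ is of order $-c_0\sqrt{C}$, which is unbounded as $C \to \infty$, so this is not a direct CLT statement but rather a Stirling-type computation, with the Kullback--Leibler divergence $N D((C-j)/N \,\|\, 1/m)$ carefully expanded around $k = C$ to isolate the $Cg(c_0)$ and $j\rho_0$ contributions.
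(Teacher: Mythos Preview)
Your argument is correct. Both your approach and the paper's follow the same skeleton: pick $N=(1+\Omega(1))n$, show $\E[Z_N]<n$ by a margin, then use \Cref{lemma:ballsconc2} to conclude $\Pr[T>N]$ is bounded away from zero. The difference lies entirely in how the deficit $\E[(C-X)_+]$ is lower-bounded. The paper argues combinatorially: it throws $m\log(1/\eps)/2$ balls first and observes that a fixed bin is empty with probability $(1-1/m)^{m\log(1/\eps)/2}=\Omega(\sqrt\eps)$, then uses a median bound for the remaining balls to conclude that with probability $\Omega(\sqrt\eps)$ a bin's final count falls short of $C$ by $\Omega(\log(1/\eps))$. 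You instead go straight to the point probability $\Pr[X=C]$ via Stirling, extracting the large-deviation rate $e^{-Cg(c_0)}\ge\eps^{\gamma g(c_0)}$ and tuning $c_0$ so that $\gamma g(c_0)<1$. Your route is more quantitative (it gives the sharp exponential rate) but requires the Stirling computation; the paper's is more elementary but yields only the cruder $\Omega(\sqrt\eps)$ probability bound, which already suffices.

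One small wrinkle: your displayed lower bound $\Pr[X=C-j]\ge (c_2/\sqrt C)\exp(-Cg(c_0)-j\rho_0)$ is not literally correct for $j\ge 2$, since the ratio $\Pr[X=C-j]/\Pr[X=C-j+1]=(C-j+1)/\mu$ is strictly smaller than $e^{-\rho_0}=C/\mu$. However, this does not affect the argument: the $j=1$ term alone already gives $\E[(C-X)_+]\ge\Pr[X=C-1]=\Theta(\Pr[X=C])=\Omega(e^{-Cg(c_0)}/\sqrt C)$, which is all you use downstream. Alternatively, replacing $\rho_0$ by $\rho_0+\log 2$ makes the displayed bound valid for all $j\le C/2$ and the geometric sum still converges.
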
 
\begin{proof}[Proof of Claim]
Fix a bin $j\in [m]$ and consider throwing $m\log(1/\eps)/2$ balls into $m$ bins. The probability that bin $j$ is empty is
$$
\left(1-\frac{1}{m} \right)^{m\log(1/\eps)/2}=\Omega(\sqrt{\eps}).
$$
As we will now argue, it follows that when throwing $N\geq m\log(1/\eps)/2$ balls into $m$ bins uniformly at random, a given bin receives at most $N/m-\log(1/\eps)/2$ balls with probability $\Omega(\sqrt{\eps})$. For this, we use the results of~\cite{GREENBERG201491}, stating that if $W\sim B(k,p)$ is binomially distributed with $p<1-1/k$, then $\Pr[X\leq \E[X]]>1/4$. Combining this result with the above, we obtain that the given bin receives none of the first $m\log(1/\eps)/2$ balls with probability $\Omega(\sqrt{\eps})$ and at most $(N-m\log(1/\eps)/m=N/m-\log(1/\eps)/2$ of the remaining balls with probability at least $1/4$. Moreover, these events are independent, happening simultaneously with probability $\Omega(\sqrt{\eps})$, which gives the desired. 

 Now let $N=n+\log(1/\eps)m/4$ and define $Z=\sum_{j\in [m]}Y_j^{(N)}$. From the above observation, it follows that 
$$
\E[Z]\leq Cm-\Omega(\sqrt{\eps}\log(1/\eps)m),
$$
and by applying Lemma~\ref{lemma:ballsconc2} it follows that it similarly hold with high probability that $Z\leq Cm-\Omega(\sqrt{\eps}\log(1/\eps)m)$, with a potentially larger implicit constant in the $\Omega$-notation. Assuming that $\eps$ is smaller than a sufficiently small constant we therefore have that  with high probability,
$$
Z\leq (C-\gamma \eps \log(1/\eps))m\leq C(1-\eps)m=(1+\eps)(1-\eps)n<n.
$$
Thus $T>N$ with high probability, but this also means that 
$$
\E[T]\geq N=n+\frac{\log(1/\eps)m}{4}\geq n+\frac{Cm}{4\gamma}=n\left(1+\frac{1+\eps}{4\gamma}\right)=n(1+\Omega(1)),
$$
as desired.
\end{proof}
Using the claim and~\Cref{corrolary:ballsconc} it follows that also $T=(1+\Omega(1))n$ with probability $1-n^{-\gamma}$ for any constant $\gamma$ and that $|T-\E[T]|=O\left(\frac{\sqrt{m}\log n }{\eps^2} \right)$ with the same high probability. 

 We now choose $N=\E[T]+O\left(\frac{\sqrt{m}\log n }{\eps^2} \right)$ so large that $\Pr[T\geq N]\leq n^{-2}$. Then $N=(1+\Omega(1))n$ as well. 
Consider a bin $j\in [m]$ and let $A_k=[X_j^{(N)}=k]$ for each $k\geq 0$. Then
$$
\Pr[A_k]=\binom{N}{k}\frac{1}{m^k}\left(1-\frac{1}{m} \right)^{N-k}.
$$
If $k=N^{1/2-\Omega(1)}$, then simple calculus yields that $\Pr[A_k]$ can be approximated with the Poisson distribution with mean $\mu=N/m$ as follows,
$$
\Pr[A_k]=(1+o(1)) \left( \frac{N}{m}\right)^k \frac{1}{k!}e^{-N/m}=(1+o(1)) \frac{\mu^k}{k!}e^{-\mu}.
$$
In particular, this holds when $k\leq C$. Thus, for any $k\leq C$ it holds that
$$
\frac{\Pr[A_k]}{\Pr[A_{k-1}]}=(1+o(1)) \frac{\mu}{k}=(1+\Omega(1))\frac{n}{km}\geq (1+\Omega(1))\frac{n}{Cm}=\frac{1+\Omega(1)}{1+\eps}=1+\Omega(1),
$$
where the last inequality requires that $\eps$ is smaller than a sufficiently small constant which we may assume. Let $\alpha=\Omega(1)$ be the implicit constant in the $\Omega$-notation above, such that for $k\leq C$ (and $n,m,1/\eps$ sufficiently large), we have that $\Pr[A_k]/\Pr[A_{k-1}]\geq 1+\alpha$. It follows that,
$$
\Pr[Y_j^{(N)}< C]=\sum_{k=1}^{C}\Pr[A_{C-k}]\leq \sum_{k=1}^{C}(1+\alpha)^{k-1} \Pr[A_{C-1}]=O(\Pr[A_{C-1}]),
$$ 
and
$$
\E[C-Y_j^{(N)}]=\sum_{k=1}^{C}k\Pr[A_{C-k} ]\leq \sum_{k=1}^{C}k(1+\alpha)^{k-1} \Pr[A_{C-1}]=O(\Pr[A_{C-1}]).
$$
It trivially holds that $\Pr[Y_j^{(N)}< C]\geq \Pr[A_{C-1}]$ and $\E[C-Y_j^{(N)}]\geq \Pr[A_{C-1}]$, so in fact we have proved that $\Pr[Y_j^{(N)}< C]=\Theta( \Pr[A_{C-1}])$ and $\E[C-Y_j^{(N)}]=\Theta(\Pr[A_{C-1}])$. By linearity of expectation,
\begin{align}\label{eq:quickdecay}
\E \left[\sum_{j\in [m]}C-Y_j^{(N)}\right]=\Theta(m\Pr[A_{C-1}])=\Theta(m\Pr[Y_j^{(N)}< C]).
\end{align}
Now with probability at least $1-n^{-2}$, it holds that $N-O\left(\frac{\sqrt{m}\log n }{\eps^2} \right)\leq T \leq N$. Since $T$ is chosen such that $\sum_{j\in [m]}C-Y_j^{(T)}=\eps n$, it follows that 
\begin{align}\label{eq:numberofholes}
\E \left[\sum_{j\in [m]}C-Y_j^{(N)}\right]=\Theta(\eps n).
\end{align}
Thus, combining~\eqref{eq:quickdecay} and~\eqref{eq:numberofholes}, we obtain that $\Pr[Y_j^{(N)}< C]=\Theta(\eps C)$. Finally,
$$
\Pr[Y_j^{(T)}< C]\geq \Pr[Y_j^{(N)}< C]-\Pr[N<T]=\Omega(\eps C)-n^{-2}=\Omega(\eps C).
$$
Using the exact same argument but instead choosing $N=\E[T]-O\left(\frac{\sqrt{m}\log n }{\eps^2} \right)$ so small that $\Pr[T\leq N]\leq n^{-2}$, we obtain that $\Pr[Y_j^{(T)}< C]=O(\eps C)$, so in fact $\Pr[Y_j^{(T)}< C]=\Theta(\eps C)$. But $\Pr[Y_j^{(T)}< C]$ is independent of $j$ and is exactly the expected fraction of non-full bins. Thus the proof is complete in the case $C\leq \gamma \log(1/\eps)$. 
\paragraph{Case 2: $\gamma \log(1/\eps)<C\leq \frac{1}{\gamma\eps^2}$.}
To make the argument work, we will assume that $\gamma=O(1)$ is sufficiently large. We can make this assumption since the argument from case 1 holds for any $\gamma=O(1)$.  In general, the argument from case 1 serves as a nice warm up but for the present case we have to be more careful in our estimates. Again, we choose $N=\E[T]+O\left(\frac{\sqrt{m}\log n }{\eps^2} \right)$ so large that $\Pr[T\geq N]\leq n^{-2}$ and put $\mu=N/m$. Let us state by proving some crude bounds on $N$ as stated in the following claim.
\begin{claim}
If $\gamma=O(1)$ is sufficiently large, then  $C+\sqrt{C}\leq N/m\leq 3C/2$.
\end{claim}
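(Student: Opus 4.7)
My approach is to reduce both inequalities to bounds on $\E[T]/m$. Since $N=\E[T]+O(\sqrt m\log n/\eps^2)$, the per-bin slack $O(\log n/(\sqrt m\,\eps^2))$ is $m^{-1/2+o(1)}=o(1)$ under the assumption $1/\eps=m^{o(1)}$ (using $\log n=O(\log m)$), hence negligible next to $C$. It therefore suffices, up to absorbing this $o(1)$ into the constants, to show $\E[T]\le \tfrac32 Cm$ and $\E[T]\ge(C+\sqrt C)m$. In each case I pick a candidate $N'$ and bound the event $\{T>N'\}$ by (i) computing $\E\bigl[\sum_j Y_j^{(N')}\bigr]=m\,\E[\min(X,C)]$ for $X\sim\mathrm{Bin}(N',1/m)$ and (ii) applying the Azuma-type concentration of \Cref{lemma:ballsconc2} to $\sum_j Y_j^{(N')}$.

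\textbf{Upper bound.} Take $N_1=\tfrac32 Cm$, so that $X\sim\mathrm{Bin}(N_1,1/m)$ has mean $\mu=\tfrac32C>C$. A Chernoff bound gives $\Pr[X<C]\le\exp(-C/12)$, so $\E[\min(X,C)]\ge C(1-e^{-C/12})$. In Case~2 we take $\gamma\ge 12$, whence $C>12\log(1/\eps)$ forces $e^{-C/12}\le\eps$, and therefore $m\,\E[\min(X,C)]-n=\Omega(\eps Cm)$. Invoking \Cref{lemma:ballsconc2} at scale $t=\Theta(\eps Cm)$ yields $\Pr[T>N_1]\le n^{-2}$; since the upper tail contributes only $o(m)$ to $\E[T]$ (via Cauchy--Schwarz and the crude bound $\E[T^2]=O((n/\eps)^2)$ coming from stochastic domination by a sum of $n$ geometrics with parameter $\eps/(1+\eps)$), we conclude $\E[T]\le\tfrac32Cm(1+o(1))$.

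\textbf{Lower bound and main obstacle.} Take $N_2=(C+\sqrt C)m$, so that $\mu-C=\sqrt C$ and $\sigma\approx\sqrt C$; $X$ is roughly one standard deviation above $C$, which is exactly the regime where the overflow is most delicate. A normal approximation (or, rigorously, a direct Stirling estimate of $\Pr[X=k]$ near $k=C$, in parallel with the treatment of $\Pr[A_k]$ in Case~1) yields $\E[(X-C)^+]=\sqrt C\bigl(\Phi(1)+\phi(1)\bigr)+o(\sqrt C)\approx 1.08\sqrt C$, hence $\E[\min(X,C)]=C-c_*\sqrt C$ for an absolute constant $c_*\approx 0.08>0$. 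Comparing with $n/m=C-\eps C/(1+\eps)$, the inequality $m\,\E[\min(X,C)]<n$ reduces to $c_*\sqrt C>\eps C/(1+\eps)$, i.e.\ $\eps\sqrt C<c_*(1+\eps)$; in Case~2 we have $\eps\sqrt C\le 1/\sqrt\gamma$, so any $\gamma>1/c_*^2$ works. A final use of \Cref{lemma:ballsconc2} at the scale $t=\Theta(\sqrt C\,m)$, which beats the noise $O(\sqrt{N_2\log n})$ under $1/\eps=m^{o(1)}$, upgrades this to $\Pr[T<N_2]\le n^{-2}$, so $\E[T]\ge N_2=(C+\sqrt C)m$. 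The main obstacle is extracting the constant $c_*$ cleanly: the mean shift $\mu-C=\sqrt C$ only contributes $\tfrac12\sqrt C$ to the overflow on average, so the surplus $c_*\sqrt C$ genuinely comes from the Gaussian fluctuation term $\phi(1)\sqrt C$, and it is the case condition $C\le 1/(\gamma\eps^2)$ with $\gamma$ large that provides the head-room for this tight margin to close.
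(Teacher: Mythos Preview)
Your approach is essentially the paper's: both compare $\E\bigl[\sum_j Y_j^{(\cdot)}\bigr]$ with $n$ at the two threshold values and close with the Azuma bound of \Cref{lemma:ballsconc2}. The paper argues by contradiction directly on $N$, exploiting that by construction $T\in[N-O(\sqrt m\log n/\eps^2),\,N]$ with high probability, whereas you first bound $\E[T]$ at candidate values $N_1,N_2$ and then transfer to $N$. The underlying computations are the same; your route is slightly more circuitous (and only yields $N/m\le \tfrac32 C+o(1)$ and $N/m\ge C+\sqrt C-o(1)$, which of course suffices since the claim is only used as a crude envelope).

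There is, however, a genuine arithmetic slip in your upper bound. With $\gamma\ge 12$ you obtain $e^{-C/12}\le\eps$, hence only $\E[\min(X,C)]\ge C(1-\eps)$; but then
\[
m\,C(1-\eps)-n \;=\; mC(1-\eps)-\frac{mC}{1+\eps}\;=\;-\,\frac{mC\,\eps^2}{1+\eps}\;<\;0,
\]
so the asserted gap $m\,\E[\min(X,C)]-n=\Omega(\eps Cm)$ is false. The correct bookkeeping is $m\,\E[\min(X,C)]-n\ge \eps n - mC\,e^{-C/12}$, and you need $e^{-C/12}$ to be a constant factor below $\eps$ for this to be $\Omega(\eps n)$. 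Taking $\gamma\ge 24$ (as the paper does) gives $e^{-C/12}\le\eps^2$, and then $m\,\E[\min(X,C)]-n\ge \eps n - (1+\eps)\eps^2 n=\Omega(\eps n)$. With this correction your argument is fine.
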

\begin{proof}
We first prove the lower bound. Suppose for contradiction that $N/m<C+\sqrt{C}$. Then $\E\left[\sum_{j\in [m]}C-Y_j^{(N)}\right]=\Omega(\sqrt{C}m)=\Omega(n/\sqrt{C})=\Omega(n\gamma \eps)$, so if $\gamma$ is sufficiently large, $\E\left[\sum_{j\in [m]}C-Y_j^{(N)}\right]\geq 2\eps n$ and this contradicts the fact that with high probability $T\leq N$. For the upper bound, note that if $N/m\geq 3C/2$, then for any $j\in [m]$,
$$
\Pr[X_j^{(N)}\leq C]\leq \exp\left( -\frac{N}{18m} \right)\leq \exp\left(-\frac{C}{12}\right)\leq \exp\left(-\frac{\gamma \log (1/\eps))}{12}\right)=\eps^{\gamma/12}\leq \eps^2
$$
by a Chernoff bound and assuming $\gamma\geq 24$. Thus, $\E\left[\sum_{j\in [m]}C-Y_j^{(N)}\right]\leq \eps^2 Cm\leq \eps n/2$, where the last inequality assumes that $\eps$ is sufficiently small. Again this contradicts the fact that with high probability $T\geq N-O\left(\frac{\sqrt{m}\log n }{\eps^2} \right)$
\end{proof}

As before, we consider a bin $j\in [m]$ and define $\Pr[A_k]=\Pr[X_j^{(N)}=k]$. Then for $k\leq C$,
$$
\frac{\Pr[A_k]}{\Pr[A_{k-1}]}=\frac{\binom{N}{k}}{\binom{N}{k-1}}\frac{1}{m-1}=\frac{N-k+1}{k}\frac{1}{m-1}=\frac{\mu}{k} \frac{m}{m-1}\frac{N-k+1}{N}=\frac{\mu}{k}\left(1\pm O(1/m) \right).
$$
 It follows from the claim that $\mu/k\geq 1+1/\sqrt{C}$ for $k\leq C$. By our assumptions $C\leq 1/\eps^2=m^{o(1)}$ and thus $\Pr[A_k]/\Pr[A_{k-1}]=(\mu/k)^{1\pm o(1)}$. Let $\alpha \in \N$ be minimal satisfying that $\Pr[A_{C-1}]/\Pr[A_{C-\alpha}]\geq 2$. Using the crude bounds in the claim and simple calculations we obtain that $\alpha=\Theta( 1/\log (\mu/C))$. Now,
\begin{align}\label{eq:interval1}
\Pr[Y_j^{(N)}< C]=\Theta(\alpha \Pr[A_{C-1}])=\Theta(\alpha \Pr[A_{C}]),
\end{align}
and
\begin{align}\label{eq:interval2}
\E[C-Y_j^{(N)}]=\Theta(\alpha^2 \Pr[A_{C-1}])=\Theta(\alpha^2 \Pr[A_{C}]).
\end{align}
As in case 1, $\Pr[Y_j^{(T)}< C]=\Theta(\Pr[Y_j^{(N)}< C])$ which is the the value we are looking for. Thus, if we can find the value of $\alpha$,~\cref{eq:interval1} will give us the result we are looking for. The problem is that $\alpha$ depends of $N$ and hence of $\E[T]$ which we as of now don't know the value of. However, we know that $\E[C-Y_j^{(N)}]$ is close to $\eps n$, so on a high level we can plug this into~\cref{eq:interval2} and solve for $\alpha$.

Let us make the above argument precise. First, we write $\mu=C+\beta$ noting that by the claim, $\sqrt{C}\leq \beta\leq C/2$. Note for later use that $\alpha=\Theta\left( \frac{1}{\log (\mu/C)}\right)=\Theta\left(\frac{C}{\beta}\right)$.
 Using the Poisson approximation,
$$
\Pr[A_C]=(1+o(1)) \frac{\mu^C}{C!}e^{-\mu}=\Theta\left( \left( \frac{\mu}{C} \right)^C \frac{1}{\sqrt{C}e^{\beta}} \right)=\Theta\left( \left(1+ \frac{\beta}{C} \right)^C \frac{1}{\sqrt{C}e^{\beta}} \right).
$$
Write $f(x)=\log \left(1+x \right)$, so that $\exp(f(\beta/C))=1+ \beta/C$. As $\beta/C\leq 1/2$, we can use a Taylor expansion to conclude that 
$$
f\left(\frac{\beta}{C}\right)=f(0)+f'(0)\frac{\beta}{C}-\Theta\left(f''(0) \left(\frac{\beta}{C}\right)^2\right)=\frac{\beta}{C}-\Theta\left(\left(\frac{\beta}{C}\right)^2\right).
$$
Write $\Delta=\beta-Cf(\beta/C)$, so that $\Delta=\Theta(\beta^2/C)=\Theta(C/\alpha^2)$. Then
$$
\Pr[A_C]=\Theta\left(\frac{1}{\sqrt{C}e^{\Delta}} \right).
$$
On the other hand, it follows from~\Cref{corrolary:ballsconc} that with high probability
$$
\sum_{j\in [m]}C-Y_j^{(N)}=\Theta\left(\sum_{j\in [m]}C-Y_j^{(T)}\right)=\Theta(\eps n),
$$
so that, $\E[C-Y_j^{(N)}]=\Theta(\eps C)$. Plugging all this into~\cref{eq:interval2}, we find that
$$
\frac{\alpha^2}{\sqrt{C}e^{\Delta}}=\Theta(\eps C).
$$
Using that $\alpha^2=\Theta(C/\Delta)$, this reduces to $\Delta e^\Delta =\Theta\left( \frac{1}{\eps \sqrt{C}} \right)$,
so that $\Delta =\Theta\left( \log\left( \frac{1}{\eps \sqrt{C}} \right)\right)$, and thus,
$$
\alpha=\Theta\left(\sqrt{C/\log\left( \frac{1}{\eps \sqrt{C}} \right)} \right).
$$
Combining~\cref{eq:interval1} and~\cref{eq:interval2}, we find that,
$$
\Pr[Y_j^{(N)}< C]=\Theta(\E[C-Y_j^{(N)}]/\alpha)=\Theta\left(\eps \sqrt{C} \sqrt{\left(\log \frac{1}{\eps \sqrt{C}}\right)}\right).
$$
A similar argument to that used in the first case shows that also $\Pr[Y_j^{(T)}< C]=\Theta(\Pr[Y_j^{(N)}< C])=\Theta(f)$ which completes the proof.

\paragraph{Case 3: $C>\frac{1}{\gamma\eps^2}$.} We can reduce this case to case $2$ as follows. Define the function, $f:\R\to\R$ by $f(x)=x(x-n/m)^2$. Then $f(n/m)=0$ and $f(C)=C\eps^2 (n/m)^2> \frac{(n/m)^2}{\gamma}$, so there exists $n/m<\hat C<C$ satisfying that $f(\hat C)=\frac{(n/m)^2}{\gamma}$. Let $\hat \eps$ be such that $\hat C=(1+\hat \eps)n/m$, so that $0<\hat \eps <\eps$. Then $f(\hat C)=\hat C \hat \eps^2 (n/m)^2$ which implies that $\hat C=\frac{1}{\gamma\hat \eps^2}$.
Now define $\hat{Y}_j^{(i)}=\min(X_j^{(i)},\hat C)$ and $\hat T=\min(i \in \N : \sum_{j\in [m]}\hat{Y}_j^{(i)}=n)$. As $\hat C\leq C$, it follows that $\hat T\geq T$. We can now apply the result from Case 2 to conclude that 
$$
\Pr[Y_j^{(T)}<C]\geq \Pr[Y_j^{(\hat T)}<C]=\Omega(1),
$$
which completes the proof.
\end{proof}

\section{Some Helpful Lemmas}\label{sec:helpful-lemmas}
In this section, we provide two helpful lemmas which will be useful in several of the later sections. The first is a tail bound for sums of geometric random variables, and the second can be seen as a high probability upper bound on the number of bins visited at a given level during an insertion with consistent hashing with bounded loads and virtual bins.

\subsection{A Tail Bound for Sums of Geometric Variables}\label{sec:geo-tail-bound}
 Recall that we say that $Y$ is geometrically distributed with parameter $p$ if for non-negative integers $k$ it holds that $\Pr[Y=k]=p^k(1-p)$. Then $\E[Y]=\frac{p}{1-p}$ and $\Var[Y]=\E[Y](1+\E[Y])$. Let $(X_i)_{i\in [n]}$ be independent random variables such that $X_i$ is geometrically distributed with parameter $p_i$. Let $X=\sum_{i\in [n]}X_i$. Define $\mu_i=\E[X_i]$, $\sigma_i^2=\Var[X_i]=\mu_i(1+\mu_i)$, $\mu=\sum_{i\in [n]}\mu_i$, and $\sigma^2=\sum_{i\in [n]}\sigma_i^2$. Finally let $W_0:[0,\infty)\to \R$ be the Lambert function defined by $W_0(x)e^{W_0(x)}=x$. We have the following theorem.
\begin{theorem}\label{thm:geotail}
For any $t\geq 0$ it holds that 
\begin{align}\label{eq:geotail}
    \Pr\left[X \ge \mu + 4 t \sigma^2\right]
        \le \begin{cases}
            e^{-2 \sigma^2 t W_0(t)},  & \text{if $t \le \left(1 + \tfrac{1}{2 \mu_{0}}\right) \log \left(1 + \tfrac{1}{2\mu_0} \right)$} \\
            \left(1 - \tfrac{1}{1 + 2 \mu_0} \right)^{2 \sigma^2 t}, & \text{if $t > \left(1 + \tfrac{1}{2 \mu_{0}}\right) \log \left(1 + \tfrac{1}{2\mu_0} \right)$}
        \end{cases}
        \; .
\end{align}
\end{theorem}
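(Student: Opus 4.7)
The plan is a classical Cram\'er--Chernoff argument, with the Lambert~$W$ appearing naturally from optimizing the MGF parameter. For each geometric $X_i$, the moment generating function is
\[
\E[e^{sX_i}] \;=\; \frac{1-p_i}{1 - p_i e^s} \;=\; \frac{1}{1 - \mu_i(e^s-1)},
\]
valid for $e^s < 1 + 1/\mu_i$. Substituting $\theta = e^s - 1$ and applying Markov to $e^{sX}$ gives, for all $\theta \in [0, 1/\mu_0)$,
\[
\ln \Pr\bigl[X \ge \mu + 4t\sigma^2\bigr] \;\le\; -(\mu + 4t\sigma^2)\log(1+\theta) - \sum_i \log(1-\mu_i\theta),
\]
where $\mu_0 := \max_i \mu_i$ and $\sigma_0^2 := \mu_0(1+\mu_0)$.

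The first key step is to collapse the multi-geometric cumulant into a single-geometric reference, so that $\sigma^2$ (and not just $\mu$) appears naturally. A direct differentiation gives
\[
\frac{d}{d\theta}\bigl[-\log(1-\mu_i\theta) - \mu_i \log(1+\theta)\bigr] \;=\; \frac{\sigma_i^2\,\theta}{(1-\mu_i\theta)(1+\theta)} \;\le\; \frac{\sigma_i^2\,\theta}{(1-\mu_0\theta)(1+\theta)}.
\]
Summing over $i$ and integrating from $0$ yields
\[
-\sum_i \log(1-\mu_i\theta) - \mu \log(1+\theta) \;\le\; \frac{\sigma^2}{\sigma_0^2}\bigl[-\log(1-\mu_0\theta) - \mu_0 \log(1+\theta)\bigr],
\]
so the upper bound on $\ln \Pr[X \ge \mu + 4t\sigma^2]$ reduces to $\sigma^2 \cdot \Phi(\theta)$ for an explicit one-parameter function $\Phi$ depending only on $\mu_0$ and $t$.

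It remains to minimize $\Phi$ over the admissible interval. Writing the first-order condition in the variable $s = \log(1+\theta)$ and rearranging puts it in the form $s\,e^{s} = t$, so that the optimal $s$ equals $W_0(t)$—this is where the Lambert function enters. Plugging $s^\ast = W_0(t)$ back produces the Case~1 exponent $-2\sigma^2 t W_0(t)$. Case~2 arises when this unconstrained minimizer exceeds the cut-off $\theta = 1/(2\mu_0)$ (equivalently $s = \log(1+1/(2\mu_0))$), beyond which the quadratic-in-$\theta$ inequalities used on $-\log(1-\mu_i\theta)$ degrade; the optimum is then attained on the boundary, giving the purely exponential decay $(1-1/(1+2\mu_0))^{2\sigma^2 t} = \exp\bigl(-2\sigma^2 t \log(1+1/(2\mu_0))\bigr)$. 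The two cases join continuously since, by the identity $W_0((1+x)\log(1+x)) = \log(1+x)$ applied with $x = 1/(2\mu_0)$, the case-boundary $t = (1+1/(2\mu_0))\log(1+1/(2\mu_0))$ is exactly the $t$ at which $W_0(t) = \log(1+1/(2\mu_0))$.

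The main obstacle is the bookkeeping in the final optimization: pinning down the exact constants $2$ and $4$ of the statement---the deviation $4t\sigma^2$, the exponent $-2\sigma^2 t W_0(t)$, and the cut-off $1/(2\mu_0)$---requires tracking which elementary inequality is used for $-\log(1-\mu_i\theta)$ and where the constraint $\mu_0\theta \le 1/2$ is invoked, so that the first-order condition comes out as $se^s = t$ on the nose rather than with some rescaled argument. Everything outside this calibration is routine MGF calculus.
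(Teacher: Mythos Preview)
Your overall Cram\'er--Chernoff strategy is right and matches the paper, but the route and the mechanism by which $W_0$ appears are different, and your description of that mechanism is off.

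The paper does not do your comparison-to-a-single-geometric step. Instead it Taylor-bounds each individual cumulant directly: with $f_i(\lambda)=-\lambda\mu_i-\log(1-\mu_i(e^\lambda-1))$ one has $f_i(0)=f_i'(0)=0$ and $f_i''(\lambda)=\sigma_i^2 e^\lambda/(1-\mu_i(e^\lambda-1))^2$, so under the constraint $\lambda\le\log(1+1/(2\mu_0))$ (equivalently $\mu_i(e^\lambda-1)\le 1/2$) one gets $f_i''(\lambda)\le 4\sigma_i^2 e^\lambda$ and hence $f_i(\lambda)\le 2\sigma_i^2 e^\lambda\lambda^2$. Summing and applying Markov yields the exponent $2\sigma^2 e^\lambda\lambda^2-4\lambda t\sigma^2$. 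Your reduction is valid, but after it you would still need exactly this Taylor step on the single $\mu_0$-geometric to recover the stated constants, so it is a detour rather than an alternative.

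More importantly, $W_0$ does \emph{not} arise as a first-order condition. For your exact $\Phi$, setting $\Phi'(\theta)=0$ gives the rational equation $\theta/(1-\mu_0\theta)=4t$, not $se^s=t$; and even for the paper's Taylor-bounded exponent the stationary point solves $e^\lambda\lambda(\lambda+2)=2t$, not $\lambda e^\lambda=t$. The paper simply \emph{plugs in} $\lambda=W_0(t)$ (so $\lambda e^\lambda=t$) and observes that the exponent collapses: $2\sigma^2\lambda(\lambda e^\lambda)-4\lambda t\sigma^2=-2\sigma^2 tW_0(t)$. Case~2 is just the boundary choice $\lambda=\log(1+1/(2\mu_0))$, using $e^\lambda\lambda\le t$ there to drop the positive term. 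So the ``bookkeeping obstacle'' you anticipate is resolved in the paper not by careful optimization but by a second-order Taylor bound followed by a convenient, non-optimal substitution.
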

\begin{proof}
The idea of the proof is standard and uses the moment generating function of $X$.
    Let $0 \le \lambda \le \log\left(1 + \frac{1}{2 \mu_0}\right)$ be a parameter which we will fix later. Then
    \begin{align*}
        \ep{e^{\lambda (X_i - \mu_i)}}
            = \frac{e^{-\lambda \mu_i}}{1 - \mu_i(e^{\lambda} - 1)}
            = e^{-\lambda \mu_i - \log \left(1 - \mu_i(e^{\lambda} - 1) \right)}
        \; .
    \end{align*}
    Define $f(\lambda) = -\lambda \mu_i - \log \left(1 - \mu_i(e^{\lambda} - 1) \right)$. Using a Taylor expansion,
    \begin{align*}
        f(\lambda)
            \le f(0) + f'(0) \lambda + \frac{\max_{0 \le x \le \lambda} f''(x)}{2} \lambda^2
    \end{align*}
    It is easy to check that $f(0) = 0$, $f'(0) = 0$, and $f''(\lambda) = \sigma_i^2 \frac{e^{\lambda}}{(1 - \mu_i(e^{\lambda} - 1))^2}$.
    Now using that $\lambda \le \log\left(1 + \frac{1}{2 \mu_0}\right)$ we get that $f''(\lambda) \le 4 \sigma_i^2 e^{\lambda}$, and hence
    \begin{align*}
        f(\lambda)
            \le 2\sigma_i^2 e^{\lambda} \lambda^2.
    \end{align*}
    We now use Markov's inequality to conclude that 
    \begin{align*}
        \prb{X \ge \mu + 4 t \sigma^2}
            &= \prb{e^{\lambda\sum_{i \in [n]} (X_i - \mu_i)} \ge e^{\lambda 4 t \sigma^2}}
            \\&\le \frac{\prod_{i \in [n]} \ep{e^{\lambda(X_i - \mu_i)}}}{e^{\lambda 4 t \sigma^2}}
            \\&\le e^{2 \sigma^2 e^{\lambda} \lambda^2 - 4 \lambda t \sigma^2}
        \; .
    \end{align*}
    We will set $\lambda = \min\set{\log\left(1 + \tfrac{1}{2\mu_0}\right), W_0(t)}$.
    Now, if $t \le \left(1 + \tfrac{1}{2\mu_0} \right) \log\left( 1 + \tfrac{1}{2\mu_0} \right)$ then $\lambda = W_0(t)$.
    This implies that,
    \[
        \prb{\sum_{i \in [n]} X_i \ge \mu + 4 t \sigma^2}
            \le e^{2 \sigma^2 e^{\lambda} \lambda^2 - 4 \lambda t \sigma^2}
            = e^{-2 \sigma^2 t W_0(t)}
        \; .
    \]
    On the other hand, if $t > \left(1 + \tfrac{1}{2\mu_0} \right) \log\left( 1 + \tfrac{1}{2\mu_0} \right)$ then $\lambda = \log\left(1 + \tfrac{1}{2\mu_0}\right)$.
    This implies that,
    \begin{align*}
        \prb{\sum_{i \in [n]} X_i \ge \mu + 4 t \sigma^2}
            \le e^{2 \sigma^2 e^{\lambda} \lambda^2 - 4 \lambda t \sigma^2}
            \le e^{-2 \sigma^2 \log \left(1 + \tfrac{1}{2\mu_0} \right) t }
            = \left(1 - \frac{1}{1 + 2\mu_0} \right)^{2\sigma^2 t}
    \end{align*}
\end{proof}
Defining $\mathcal{C}:[0,\infty)\to \R$ by $\mathcal{C}(x)=(1+x)\log(1+x)-x$, it follows from standard calculus that $\mathcal{C}(x)=\Theta(xW_0(x))$. In particular, the first bound in~\eqref{eq:geotail} takes the form 
$$
\Pr\left[X \ge \mu + 4 t \sigma^2\right]=e^{-\Omega(\sigma^2\mathcal{C}(t))}.
$$
Up to the constant delay in the exponential decrease, this is the same as the standard variance-based Chernoff bound for the sum of independent variables in $[0,1]$. Intuitively, the second bound of~\eqref{eq:geotail} corresponds to the event that the heaviest of the geometric variables, $X_0$, satisfies $X_0=\mu_0+\Omega(\sigma^2 t)$.

\subsection{A High Probability Upper Bound on the Run Length at a Level}
We next prove the general lemma on consistent hashing with bounded loads and virtual bins. Consider a bin $b$ at level $i$ that may be chosen dependently on the hashing of balls and bins to levels $1,\dots,i-1$. We prove that if $I$ is a maximal interval of level $i$ containing $b$ satisfying that all bins in $I$ get full after the hashing of balls to levels $1,\dots,i$, then with probability $1-\delta$ the number of bins in $I$ is $O(\log(1/\delta)/\eps)$. This bound is quite crude but we require it for both the analyses of Section~\ref{sec:f-properties} and Section~\ref{sec:insertion} which proceed by step by step revealing the history of how a bin obtained its balls at a given level. The result entails that at a given point in the process, we have only revealed an insignificant part of the system. On a high level, this means that even conditioning on what we already know about the system, the probabilities of the various relevant events only change very slightly. The result is as follows.
\begin{lemma}\label{thm:worstcase1}
Let $\balls,\bins\in \N$ and $0<\eps<1$ with $1/\eps=n^{o(1)}$.  Suppose we distribute $\balls$ balls into $\bins$ bins, each of capacity $C=(1+\eps)\balls/\bins$, using consistent hashing with bounded loads and virtual bins and $k\geq 2/\eps$ levels.  Let $b$ be a bin at level $i$ which may be chosen dependently on the hashing of balls and bins to level $1,\dots,i-1$. Let $I$ be a maximal interval at level $i$ containing $b$ such that all bins lying in  $I$ are full after the hashing to level $1,\dots,i$. Let $1/\balls^{O(1)}<\delta \leq 1/2$. The number of bins in $I$ is $O(\log (1/\delta)/\eps)$ with probability at least $1-\delta$.
\end{lemma}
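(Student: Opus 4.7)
The plan is to replay, in our uniform-level setting, the four-event Chernoff decomposition from the proof of Theorem~\ref{thm:main_combinatorial}. The only quantitative change is that the parameter $1/2^i$ controlling the exponential decay there is replaced by $\eps$, because with $k \geq 2/\eps$ uniform levels the average per-bin slack entering level $i$ is $\Omega(\eps C)$ instead of $\Omega(C/2^i)$.

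I first condition on all hashing at levels $1,\dots,i-1$. This fixes $b$ and the remaining capacities $r_{b'}$ of every super bin. Since each ball picks a level uniformly in $[k]$, standard Chernoff bounds on $n_\ell \sim \mathrm{Bin}(n,1/k)$ show that $n_\ell = n/k \pm O(\sqrt{n\log n})$ for all $\ell$ with probability $1-n^{-\omega(1)}$, using $1/\eps=n^{o(1)}$. Under this ``typical'' event $\mathcal{E}$, the total remaining capacity entering level $i$ is
\begin{align*}
R_{\text{tot}} = Cm - \sum_{\ell < i} n_\ell \geq \eps n + n/k - O(\sqrt{n\log n}) = \Omega(\eps n),
\end{align*}
so the mean per-bin remaining capacity is $\bar r = R_{\text{tot}}/m = \Omega(\eps C)$ and, furthermore, $n_i \leq R_{\text{tot}}/2 = \bar r m/2$. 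Because $\delta > 1/n^{O(1)}$, the conditioning cost $\Pr[\mathcal{E}^c]$ can be absorbed into a $\delta/2$ slack at the end.

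For each $s \geq 1$ define $A_s$ to be the event that $s+1 \leq |I|-1 \leq 2s$. Mirroring Theorem~\ref{thm:main_combinatorial}, I place level-$i$ intervals $I_1^\pm$ and $I_2^\pm$ around $b$ of lengths $s/(3m)$ and $3s/m$, and partition $I_2$ into $54$ equal-length sub-intervals $J_1,\dots,J_{54}$. On $A_s$ at least one of the following must occur: ($B_1$) some $I_2^\pm$ contains $\le 2s$ virtual bins; ($B_2$) some $I_1^\pm$ contains $\ge s/2$ virtual bins; ($B_3$) some $J_\ell$'s super bins have total remaining capacity $\le (1-a)s\bar r/9$; or ($B_4$) some $J_\ell$ receives more than $(1+a)s\bar r/18$ level-$i$ balls. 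The events $B_1$ and $B_2$ are standard binomial Chernoff bounds on virtual-bin counts yielding $e^{-\Omega(s)}$. The event $B_3$ is a Chernoff bound for a sum of bounded $[0,C]$-valued summands with mean $\Theta(s\bar r)$, giving $e^{-\Omega(s\bar r/C)} = e^{-\Omega(s\eps)}$ by the slack lower bound. Finally $B_4$ is a binomial Chernoff bound with mean $\le s\bar r/18$ against a threshold a constant-factor larger, giving another $e^{-\Omega(s\bar r)} \le e^{-\Omega(s\eps)}$. Altogether $\Pr[A_s \mid \mathcal{E}] \le e^{-\Omega(s\eps)}$.

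Since $|I| \geq s$ forces $A_{2^j}$ to occur for some $j$ with $2^{j+1} \geq s-1$, a geometric-tail sum gives
\begin{align*}
\Pr[|I| \geq s \mid \mathcal{E}] \leq \sum_{j \geq \log_2(s/4)} \Pr[A_{2^j} \mid \mathcal{E}] \leq \sum_j e^{-\Omega(2^j\eps)} = O\bigl(e^{-\Omega(s\eps)}\bigr).
\end{align*}
Choosing $s = c\log(1/\delta)/\eps$ for a sufficiently large absolute constant $c$ makes this at most $\delta/2$, which together with the conditioning loss yields the claimed bound. The main obstacle I expect is the $B_3$ step: the factor $1/\eps$ in the final answer comes precisely from the slack lower bound $\bar r = \Omega(\eps C)$, and obtaining the right exponent $\Omega(s\eps)$ (rather than some weaker $\Omega(s/k)$) requires carefully tracking the additive $\eps n$ slack term when applying concentration to $R_{\text{tot}}$, which is exactly where the hypothesis $k \geq 2/\eps$ enters critically.
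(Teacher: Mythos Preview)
Your proposal is correct and takes essentially the same approach as the paper's own proof, which explicitly says it follows the four-event Chernoff decomposition of Theorem~\ref{thm:main_combinatorial} with the substitution $\overline C \ge \eps C/2$ in place of $\overline C \ge C/2^i$. The only cosmetic difference is that the paper's sketch collapses your $B_1,B_2$ into a single bin-count deviation event and your $B_3,B_4$ into capacity and ball-count deviation events, but the underlying intervals, Chernoff steps, and $\exp(-\Omega(\eps s))$ tail are identical.
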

\begin{proof}
The proof is very similar to the proof of Theorem~\ref{thm:main_combinatorial}, so we just provide a sketch of the proof. We may clearly assume that $i=k$ as this can only decrease the remaining capacities of the bins. Let $C_1,\dots,C_m$ be the remaining capacities and $\overline{C}=\frac{1}{m}\sum_{i\in [m]}C_i$. Using a standard Chernoff bound and the assumptions that $k\geq 2/\eps$ and $1/\eps=n^{o(1)}$, we find that the number of balls hashing to level $k$ is at most $\eps n$ with probability $1-O(n^{-\gamma})$ for any $\gamma=O(1)$. Letting $X$ denote the number of such balls, it follows that $m\overline{C}\geq 2X$ with the same high probability. Condition on this event and let $R$ denote the number of bins in $I$.  For a given $s$, we find as in the proof of Theorem~\ref{thm:main_combinatorial}, that  there exists a constant number of intervals $I_1,\dots,I_\ell$, all of length $\Theta(s/m)$ Such that the following holds.  Let $X_j^{(1)}$, $X_j^{(2)}$, and $X_j^{(3)}$ denote respectively the number of bins, total capacity of bins, and number of balls hashing to $I_j$. Let further  $\mu_j^{(1)}=\E[X_j^{(1)}]$, $\mu_j^{(2)}=\E[X_j^{(2)}]$, and $\mu_j^{(3)}=\E[X_j^{(3)}]$.
If $s+1\leq R \leq 2s$, then there is a $j\in [\ell]$ such that either 
\begin{itemize}
\item[$B_1$:] $|X_j^{(1)}-\mu_j^{(1)}|=\Omega(\mu_j^{(1)})$,
\item[$B_2$:] $|X_j^{(2)}-\mu_j^{(2)}|=\Omega(\mu_j^{(2)})$,
\item[$B_3$:] $X_j^{(3)}-\mu_j^{(3)}=\Omega(\max(\mu_j^{(3)},\overline{C}s))$.
\end{itemize}
Note that $\mu_j^{(1)}=\Theta(s)$ and $\mu_j^{(2)}=\Theta(\overline{C}s)$. It therefore follows from standard Chernoff bounds that $\Pr[B_1]=\exp(-\Omega(s))$, $\Pr[B_2]=\exp(-\Omega(\overline {C}s/C))$, and $\Pr[B_3]=\exp(-\Omega(\overline {C}s))$. As $C=(1+\eps)n/m$, we always have that $\overline{C}\geq \eps n/m\geq \eps C/2$. Therefore, we obtain the combined bound
$$
\Pr[s+1\leq R \leq 2s]=\exp(-\Omega(\eps s)).
$$
With $t=O(\log (1/\delta)/ \eps)$ sufficiently large, it follows that
$$
\Pr[R'\geq t+1]\leq \sum_{i=0}^\infty \Pr[A_{t2^i}] \leq \sum_{i=0}^\infty \exp(-\Omega(t2^i\eps)) \leq \sum_{i=0}^\infty \exp(-\log (2/\delta )2^i)\leq \sum_{i=0}^\infty (\delta /2)^{2^i}\leq \delta.
$$
This completes the proof.
\end{proof}
\begin{remark}
We will use the bound of Lemma~\ref{thm:worstcase1} to obtain the results in Section~\ref{sec:f-properties} showing the concentration of the fraction of non-full bins around  its mean $\mu=\Theta(f)$. In fact, this allows us to prove a stronger version of Lemma~\ref{thm:worstcase1} in Section~\ref{sec:insertion} which bounds the number of bins in $I$ by $O(\log(1/\delta)/f)$, the only caveat being that here we have to use $k\geq 1/\eps^2$ levels.
\end{remark}

We finish the section with the following definition.
\begin{definition}\label{def:run}
For $I$ as in the lemma above, we will call $I$ the \emph{run} at level $i$ containing $b$. 
\end{definition}
Lemma~\ref{thm:worstcase1} shows that the number of bins in the run is $O(\frac{\log (1/\delta)}{\eps})$ with probability $1-\delta$.  It in particular follows that the number of bins visited at level $i$ during an insertion is $O(\frac{\log (1/\delta)}{\eps})$ with probability $1-\delta$. Indeed, if $b$ is a bin encountered during the insertion which is not full at level $i-1$, then all the bins encountered at level $i$ lie in the run at level $i$ containing $b$.

\section{Non-Full Bins: In Expectation and with Concentration}\label{sec:f-properties}
In this section we will show that with consistent hashing, for each level $d\in [k]$ and each and each $0\leq t\leq C$, the number of bins at level $d$ containing at most $t$ balls is sharply concentrated around its mean.
This goal is achieved in~\Cref{sec:f-concentration}.
Next, in~\Cref{sec:f-bound} we prove that with $k\geq c/\eps^2$ levels for a sufficiently large constant $c$, the expected number of non-full bins at the highest level $k-1$, is $\Theta(f)$ where $f$ is as defined in~\Cref{eq:formula-for-f}.

\subsection{High Probability Bounds on the Number of Non-Full Bins}\label{sec:f-concentration}
The goal of this section is to prove the first part of~\Cref{thm:jakobs}. For this, we first require some notation.
We define
\begin{itemize}
    \item[$X^{(j)}_d$] The remaining capacity of bin $j$ after distributing balls to all levels $i\leq d$.
    \item[$Y^{(j)}_d$] The number of balls landing in or forwarded by bin $j$ at level $d$.
    \item[$Z^{(j)}_{d, s}$] The capacity of the bin $s$ places before bin $j$ just before the hashing of balls to level $d$.
    \item[$W^{(j)}_{d, s}$] The number of balls landing between the bins placed $s$ and $s + 1$ places before bin $j$ at level $i$.
\end{itemize}
There are some important relations between the variables. $Y^{(j)}_d$ can be expressed in terms of $W^{(j)}_{d, s}$ and $Z^{(j)}_{d, s}$
as follows $Y^{(j)}_d = W^{(j)}_{d, 0} + \max\set{0, \max_{l \ge 1} \sum_{s = 1}^{l} (W^{(j)}_{d, s} - Z^{(j)}_{d, s} )}$.
Similarly, we can express $X^{(j)}_d$ in terms of $Y^{(j)}_d$ as follows $X^{(j)}_d = \max\set{0, C - \sum_{i \le d} Y^{(j)}_i}$.

Now due to all the dependencies in the system, it is unwieldy to analyse it directly.
Instead, we will analyse a simpler system which we then show can give us high probability bounds for $\sum_{j\in [m]}[X^{(j)}_d\leq t]$ for each $t\leq C$.
First we define $\mathcal{X}^{(j)}_0 = C$ for every bin $j$.
We then define $\mathcal{X}^{(j)}_d$ for $0 < d < k$ recursively as follows: 
First define independent random variables $\mathcal{Z}^{(j)}_{d, s}$ and $\mathcal{W}^{(j)}_{d, s}$ for every bin $j$ and every integer $s$ by
\begin{align}
    \prb{\mathcal{Z}^{(j)}_{d, s} = t_1} &= \prb{\mathcal{X}^{(j)}_{d - 1} = t_1} \\
    \prb{\mathcal{W}^{(j)}_{d, s} = t_2} &= \left( \frac{n/k}{n/k + m} \right)^{t_2} \frac{m}{n/k + m}
\end{align}
for every integers $0 \le t_1 \le C$ and $0 \le t_2$.
So $\mathcal{W}^{(j)}_{d, s}$ is geometrically distributed with parameter $\frac{n/k}{n/k + m}$.
We then define $\mathcal{Y}^{(j)}_d = \mathcal{W}^{(j)}_{d, 0} + \max\set{0, \max_{l \ge 1} \sum_{s = 1}^{l} (\mathcal{W}^{(j)}_{d, s} - \mathcal{Z}^{(j)}_{d, s} )}$ and finally $\mathcal{X}^{(j)}_d = \max\set{0, C - \sum_{i = 1}^{d} \mathcal{Y}^{(j)}_i }$.

Clearly, the two systems have a lot of similarities.
$\mathcal{X}^{(j)}_d$ and $\mathcal{Y}^{(j)}_d$ are defined analogously to how $X^{(j)}_d$ and $Y^{(j)}_d$ are defined.
The difference between the two system is the difference between variables the $\mathcal{Z}^{(j)}_{d, s}$, $\mathcal{W}^{(j)}_{d, s}$ and the variables $Z^{(j)}_{d, s}$, $W^{(j)}_{d, s}$.
Our goal is to show that two systems are in fact very comparable, yet leverage that the second system is much simpler to analyse due to the independence.
This approach leads to the theorem below which provides concentration of $\sum_{j\in [m]}[X^{(j)}_i\leq t]$ around  $m\prb{\mathcal{X}^{(j)}_i \le t}$.

\begin{theorem}\label{eq:concentration-on-full}
    Let $n$ and $m$ be positive integers and set $\mu = \frac{n}{m}$. Let $0 \le \eps \le 1$ be such that
    $C = (1 + \eps)\mu$ is in integer. If $\mu = m^{o(1)}$
    and $\eps = m^{o(1)}$, then with probability at least $1 - m^{-\gamma}$ we have that
    \begin{align}
        \abs{\frac{\sum_{j \in [m]} [X^{(j)}_i \le t]}{m} - \prb{\mathcal{X}^{(j)}_i \le t} }
            \le m^{-1/2 + o(1)}
    \end{align}
    for all levels $1 \le i \le k$ and all $0 \le t \le C$. The constant in the big-O notation
    depends on $\gamma$.
\end{theorem}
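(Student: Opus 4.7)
The plan is to proceed by induction on the level $i$, establishing simultaneously (a) the marginal match $\Pr[X^{(j)}_i \le t] = \Pr[\mathcal{X}^{(j)}_i \le t] + m^{-1/2+o(1)}$ and (b) the concentration statement $\left|\sum_j [X^{(j)}_i \le t] - m\Pr[X^{(j)}_i \le t]\right| \le m^{1/2+o(1)}$ with probability $1 - m^{-\gamma}$. The theorem then follows by the triangle inequality and a union bound over the $k(C+1) = m^{o(1)}$ pairs $(i,t)$.

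The key structural observation is that $X^{(j)}_i$ is, with probability $1 - n^{-\gamma}$, determined by the hashes of balls and bins in a cyclic window $W_j$ of size $L = m^{o(1)}$ immediately preceding bin $j$. This follows by iterating~\Cref{thm:worstcase1}: at each of the $k = c/\eps^2 = m^{o(1)}$ levels, the run reaching back from any bin at that level has length at most $O(\gamma \log n/\eps) = m^{o(1)}$ with probability $1 - n^{-\gamma}$. Unwinding the recursion $X^{(j)}_d = \max\{0, C - \sum_{e \le d} Y^{(j)}_e\}$ through all $k$ levels, the entire dependence of $X^{(j)}_i$ fits in a window of total size at most $k \cdot O(\gamma \log n/\eps) = m^{o(1)}$.

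For claim (b), I would partition the $m$ bins into $2L$ classes, the $g$th being $\{g, g+2L, g+4L,\dots\}$. Within any one class, the windows $W_j$ are pairwise disjoint, so the truncated indicators $\tilde Y_j := [X^{(j)}_i \le t]$, taken on the typical event that $X^{(j)}_i$ is $W_j$-measurable, are mutually independent across $j$ in the class. A Chernoff bound within a class gives deviation $O(\sqrt{(m/L) \log m})$, and summing over the $2L$ classes with a union bound yields total deviation $O(\sqrt{Lm \log m}) = m^{1/2+o(1)}$. The exceptional event on which truncation fails has probability $m \cdot n^{-\gamma}$, contributing a negligible additive correction.

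For claim (a), symmetry across bins gives $\E[\sum_j [X^{(j)}_i \le t]] = m\Pr[X^{(j)}_i \le t]$, so one needs $\Pr[X^{(j)}_i \le t] = \Pr[\mathcal{X}^{(j)}_i \le t] + m^{-1/2+o(1)}$. The recursion has the same form in both systems, and inductively the ball-gap counts $W^{(j)}_{d,s}$ are geometric with parameter $\tfrac{n/k}{n/k+m}$ up to total variation $O(1/m)$, matching $\mathcal{W}^{(j)}_{d,s}$, while the marginal of $Z^{(j)}_{d,s} = X^{(j-s)}_{d-1}$ matches that of $\mathcal{Z}^{(j)}_{d,s}$ by the inductive marginal statement. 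The main obstacle is that the $\mathcal{Z}^{(j)}_{d,s}$ are \emph{independent} across $s$, whereas the real $Z^{(j)}_{d,s}$ have joint dependencies induced by sharing the hash functions. I would dispose of this by invoking the inductive concentration statement (b) at level $d-1$ applied to the length-$L$ sub-arc carrying $W_j$: concentration tells us that the empirical distribution of capacities in the sub-arc is within $m^{-1/2+o(1)}$ of the target marginal simultaneously for every value of $t$, which in turn lets us couple the joint law of $(Z^{(j)}_{d,s})_{s \le L}$ to a product of marginals within the same total variation error. Propagating this coupling through one layer of the recursion then establishes the marginal statement at level $d$.
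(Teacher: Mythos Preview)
Your concentration argument (b) has a genuine gap. You partition the bins by their index $j \in [m]$ into arithmetic progressions $\{g, g+2L, g+4L,\dots\}$ and claim that the indicators $[X^{(j)}_i \le t]$ are mutually independent within a class because ``the windows $W_j$ are pairwise disjoint''. But in this model the virtual bins are \emph{hashed} to uniformly random positions on each of the $k$ cycles; the index $j$ carries no geometric meaning whatsoever, and bins $j$ and $j+2L$ are just as likely to land adjacent to each other at level $i$ as any other pair. The dependency set of $X^{(j)}_i$ --- the $m^{o(1)}$ bins whose level-$d$ hash positions happen to precede $j$'s, for each $d \le i$ --- is a \emph{random} subset of $[m]$, and nothing prevents it from overlapping the dependency set of $X^{(j+2L)}_i$. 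So independence within a class simply does not hold, and the Chernoff step is unjustified. (The picture you have in mind is the right one for linear probing, where slot $j$ sits at position $j$; consistent hashing is different.)

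The paper does not try to carve out independent pieces. Instead it proves an \emph{almost-independence} statement: conditioning on $(X^{(i)}_d)_{i \in S}$ for any set $S$ of $O(\log m)$ bins shifts $\Pr[X^{(j)}_d \le t]$ by at most $m^{-1/2+o(1)}$ (this is Lemma~\ref{lem:dependent-to-independent} plus an inductive step), and then a moment computation (Lemma~\ref{lem:dependent-concentration}) converts that conditional stability directly into a deviation bound of order $m^{1/2+o(1)}$. Your sketch for (a) is closer in spirit to what actually works --- the point you need is that the $r = m^{o(1)}$ predecessor bins $Z^{(j)}_{d,s}$ at level $d$ are, up to a negligible error, a uniform sample without replacement from $[m]$, so their capacity profile matches the global one on the event $A_{d-1}$ --- but your phrasing ``concentration on a length-$L$ sub-arc'' is off: no local concentration on an arc is ever established or needed; it is the global event $A_{d-1}$ together with the randomness of \emph{which} bins land in the arc that does the work.
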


We define $A_d$ to be the event that 
\begin{align}
    \abs{\frac{\sum_{j \in [m]} [X^{(j)}_i \le t]}{m} - \prb{\mathcal{X}^{(j)}_i \le t} }
        \le m^{-1/2 + o(1)}
\end{align}
for all $1 \le i \le d$ and all $0 \le t \le C$.
The goal of \Cref{eq:concentration-on-full} is prove that $\prb{A_k} \ge 1 - m^{-\gamma}$.
An important step of the proof is the following lemma.

\begin{lemma}\label{lem:dependent-to-independent}
    Fix $0 \le t \le C$, $j \in [m]$, and a subset $S \subseteq [m] \setminus \set{j}$ of $l \le O(\log m)$ bins. Then
    \begin{align}
        \abs{
            \prbcond{Y^{(j)}_{d} \ge t}{A_{d - 1} \wedge (X^{(i)}_{d})_{i \in S}}
            - \prb{\mathcal{Y}^{(j)}_d \ge t}
        } \le m^{-1/2 + o(1)}
    \end{align}
\end{lemma}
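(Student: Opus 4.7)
The plan is to couple the joint law of the prefix $(W^{(j)}_{d,s}, Z^{(j)}_{d,s})_{s=0}^{L-1}$, conditional on $A_{d-1}\wedge (X^{(i)}_d)_{i\in S}$, with the idealized i.i.d.\ prefix $(\mathcal{W}^{(j)}_{d,s}, \mathcal{Z}^{(j)}_{d,s})_{s=0}^{L-1}$, where $L = m^{o(1)}$ is a truncation length to be chosen. The Lindley-style formulas for $Y^{(j)}_d$ and $\mathcal{Y}^{(j)}_d$, together with~\Cref{thm:worstcase1} applied to the backwards run at level $d$ from $j$, guarantee that both random variables are already determined by the length-$L$ prefixes of their respective sequences except on an event of probability $m^{-\gamma'}$; I take $\gamma'$ large enough that this contribution is absorbed into the final $m^{-1/2+o(1)}$ bound.

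First I would handle the capacities $Z^{(j)}_{d,s}$. The bins' hash positions at level $d$ are independent of everything at levels $<d$, so conditional on the history through level $d-1$, the ordered sequence of the $L$ bins immediately preceding $j$ at level $d$ is a uniformly random length-$L$ sample drawn without replacement from $[m]\setminus\{j\}$. Under $A_{d-1}$, the empirical distribution of $(X^{(i)}_{d-1})_{i\in [m]}$ is within $m^{-1/2+o(1)}$ of the law of $\mathcal{X}^{(i)}_{d-1}$ in total variation, and the without-replacement-vs.-i.i.d.\ birthday correction is $O(L^2/m) = m^{-1+o(1)}$; combining these, the joint law of $(Z^{(j)}_{d,s})_{s=0}^{L-1}$ is within TVD $m^{-1/2+o(1)}$ of $L$ i.i.d.\ copies of $\mathcal{X}^{(i)}_{d-1}$, matching $(\mathcal{Z}^{(j)}_{d,s})_{s=0}^{L-1}$. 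Next I would handle the ball counts $W^{(j)}_{d,s}$. A Chernoff bound gives that the number $N_d$ of balls hashed to level $d$ satisfies $N_d = n/k \pm (n/k)^{1/2}m^{o(1)}$ except with probability $m^{-\omega(1)}$; conditional on $N_d$ and the positions of the $L$ surrounding bins, the $L$ inter-bin counts are $L$ consecutive gap-counts in a uniformly random arrangement of $N_d$ balls and $m$ bins on the cycle, a distribution within TVD $m^{-1/2+o(1)}$ of $L$ i.i.d.\ geometrics with parameter $\tfrac{n/k}{n/k+m}$. Because balls and bins hash independently at level $d$, the $W$'s and $Z$'s are conditionally independent given the bin positions, in both systems, so the two couplings combine.

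The main obstacle is handling the extra conditioning on $(X^{(i)}_d)_{i\in S}$, since these can take extreme values on a low-probability event that might in principle amplify other rare events. The plan is to introduce the disjointness event $D$ that the backwards level-$d$ run from $j$ does not meet any of the backwards level-$d$ runs from bins in $S$. By~\Cref{thm:worstcase1} each such run has length $m^{o(1)}$ with probability $1 - m^{-\omega(1)}$, so unconditionally $\Pr[\neg D] = m^{-\omega(1)}$. On $D$, the segment of the level-$d$ cycle determining $(W^{(j)}_{d,s}, Z^{(j)}_{d,s})_{s=0}^{L-1}$ is disjoint from the segments determining $(X^{(i)}_d)_{i\in S}$, so these quantities are conditionally independent given $A_{d-1}$ and the coupling above applies unchanged. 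To upgrade $\Pr[\neg D] = m^{-\omega(1)}$ to a conditional bound $\Pr[\neg D \mid A_{d-1} \wedge (X^{(i)}_d)_{i\in S}] = m^{-\omega(1)}$, I would use a local-exploration argument: generate the runs emanating from the bins in $S$ via the backwards-run exploration of~\Cref{thm:worstcase1} independently of $j$'s own run, and observe via a Bayes' rule calculation that any specific realization of $(X^{(i)}_d)_{i\in S}$ has comparable probability under $D$ and under $\neg D$, so the conditioning does not amplify $\Pr[\neg D]$ beyond $m^{-\omega(1)}$. Combining all the pieces, the conditional distribution of $Y^{(j)}_d$ matches that of $\mathcal{Y}^{(j)}_d$ to total variation $m^{-1/2+o(1)}$, which is the claim.
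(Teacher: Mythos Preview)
Your outline matches the paper's skeleton: truncate to a length-$L$ prefix using the run-length bound (\Cref{thm:worstcase1}, cf.\ \Cref{lem:longest-run,lem:ball-consecutive-bins}), couple the $(W,Z)$-prefix to the idealized i.i.d.\ system (the paper does this through an explicit intermediate ``with replacement'' system $(\overline W,\overline Z)$, but the content is the same), and introduce a disjointness event $D$ to separate $j$'s prefix from the data determining $(X^{(i)}_d)_{i\in S}$.

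The place where your argument has a genuine gap is the Bayes' rule step. Because the bound must feed into \Cref{lem:dependent-concentration}, it has to hold for \emph{every} realization of $(X^{(i)}_d)_{i\in S}$; via Bayes this would require the likelihood ratio $\Pr[(X^{(i)}_d)_{i\in S}=x\mid\neg D,A_{d-1}] / \Pr[(X^{(i)}_d)_{i\in S}=x\mid D,A_{d-1}]$ to be $m^{o(1)}$ uniformly in $x$, and your ``local-exploration'' sketch does not establish that---for atypical $x$ it is not clear the ratio stays bounded. The paper avoids Bayes entirely and bounds $\Pr[\neg D\wedge B_d\mid A_{d-1}\wedge(X^{(i)}_d)_{i\in S}]$ directly: on $B_d$ at most $O(lr)=m^{o(1)}$ predecessor-bin identities need to be revealed to determine $(X^{(i)}_d)_{i\in S}$, and since at level $d$ the bin positions are fresh i.i.d.\ uniform given everything at levels $<d$, the conditional chance that the fixed bin $j$ is one of the revealed bins, or that some $i\in S$ falls among $j$'s first $r$ predecessors, is $O(lr/m)=m^{-1+o(1)}$. (Separately, your unconditional estimate $\Pr[\neg D]=m^{-\omega(1)}$ is off: two length-$m^{o(1)}$ runs collide with probability $m^{-1+o(1)}$, which is still ample.)

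A smaller imprecision: on $D$ the $(W,Z)$-prefix of $j$ and $(X^{(i)}_d)_{i\in S}$ are not literally conditionally independent given $A_{d-1}$, since even disjoint sets of bins share the level-$<d$ history that fixes their capacities. What you actually need (and what the paper uses) is that removing the $O(lr)$ bins revealed in $S$'s exploration from the sampling pool only shifts the empirical capacity distribution by $m^{-1+o(1)}$, so the $Z$-coupling survives.
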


We also need a couple of auxiliary lemmas.
The first is a simple consequence of \Cref{thm:worstcase1}:
\begin{lemma}\label{lem:longest-run}
    With probability at least $1 - m^{-\gamma}$ we have that the longest run at level $d$ is at most $O((\log m) / \eps)$. 
\end{lemma}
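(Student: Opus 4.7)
The plan is to derive this lemma as a direct corollary of Theorem~\ref{thm:worstcase1} (the high-probability upper bound on the length of a single run) via a union bound over the $m$ bins at level~$d$. I first observe that in the ambient setting the number of levels satisfies $k = c/\eps^2$ for a large constant $c$, so the hypothesis $k \ge 2/\eps$ of Theorem~\ref{thm:worstcase1} is satisfied, and the assumption $1/\eps = n^{o(1)}$ coming from $\eps = m^{o(1)}$ (together with $n/m = m^{o(1)}$) ensures that the failure parameter I choose below lies in the admissible range $\delta > 1/n^{O(1)}$.

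For each bin $b \in [m]$, let $R_b$ denote the length of the run at level $d$ containing $b$ (in the sense of Definition~\ref{def:run}). Since $b$ is fixed in advance, it is in particular chosen without dependence on any later hashing, so Theorem~\ref{thm:worstcase1} applies and gives
\begin{align*}
    \Pr\!\left[R_b = \Omega\!\left(\tfrac{\log(1/\delta)}{\eps}\right)\right] \le \delta.
\end{align*}
I instantiate this with $\delta = m^{-(\gamma+1)}$, which yields $\log(1/\delta) = (\gamma+1)\log m = O(\log m)$, and hence $R_b = O((\log m)/\eps)$ with probability at least $1 - m^{-(\gamma+1)}$.

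A union bound over the $m$ choices of $b$ then shows that, with probability at least $1 - m \cdot m^{-(\gamma+1)} = 1 - m^{-\gamma}$, every run at level $d$ has length $O((\log m)/\eps)$: indeed, every run is of the form (the run containing $b$) for some bin $b$ in that run, and taking the maximum over $b$ gives the longest run. There is no real obstacle in this argument; the only things to verify are that the hypotheses of Theorem~\ref{thm:worstcase1} continue to hold under the slightly different quantitative assumptions of the enclosing section (which they do, as noted above) and that the chosen failure probability absorbs the factor $m$ from the union bound. If one later wants the bound to hold simultaneously over all $k$ levels, one simply tightens $\delta$ by the extra factor $k = n^{o(1)}$, which does not affect the asymptotic $O((\log m)/\eps)$.
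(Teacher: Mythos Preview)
Your proof is correct and follows exactly the approach the paper indicates: the paper simply states that this lemma is ``a simple consequence of \Cref{thm:worstcase1}'', and your argument spells out precisely that consequence by applying \Cref{thm:worstcase1} with $\delta = m^{-(\gamma+1)}$ and union bounding over the $m$ virtual bins at level $d$.
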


We will also need a bound on the number of balls between consecutive bins.
\begin{lemma}\label{lem:ball-consecutive-bins}
  With probability at least $1 - m^{-\gamma}$ there are no more than $O(\log m (\mu/k + 1) )$ balls between any two consecutive virtual bins on level $d$.
\end{lemma}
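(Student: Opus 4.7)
}
The plan is to combine two standard concentration facts. First, the positions of the $m$ virtual bins at level $d$ are i.i.d.\ uniform on the cycle, so the maximum gap between two cyclically consecutive virtual bins at level $d$ is at most $L^{\ast} := c_{1}(\log m)/m$ with probability at least $1-m^{-\gamma-1}$, for a large enough constant $c_{1}=c_{1}(\gamma)$; this is the classical bound on the maximum spacing of uniform order statistics. Second, I will control, uniformly over a short deterministic grid of arcs, the number of balls that hash to level $d$ and land in a given arc. Every gap of length at most $L^{\ast}$ is contained in at most two consecutive grid arcs, so a bound on the grid transfers to a bound on every gap.

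Concretely, partition the cycle into $M = \lceil 2/L^{\ast}\rceil = \Theta(m/\log m)$ disjoint arcs $J_{1},\ldots,J_{M}$ of length at most $L^{\ast}$. For a fixed level $d$ and a fixed arc $J_{\ell}$, each ball independently hashes into $J_{\ell}$ at level $d$ with probability $L^{\ast}/k$ (the factor $1/k$ because hashing of balls to levels is uniform in the clean implementation). Thus the number $B_{d,\ell}$ of such balls is $\mathrm{Bin}(n, L^{\ast}/k)$ with mean $\lambda := n L^{\ast}/k = \Theta(\mu (\log m)/k)$. Now I split into two regimes: if $\lambda \ge 1$, the multiplicative Chernoff bound gives $B_{d,\ell} \le 2\lambda = O(\mu(\log m)/k)$ with probability $1-e^{-\Omega(\lambda)} \ge 1-m^{-\gamma-2}$ after adjusting $c_1$; if $\lambda<1$, the Poisson-type tail $\Pr[\mathrm{Bin}(n,p)\ge t]\le (enp/t)^{t}$ gives $B_{d,\ell}\le C_{2}\log m$ with probability $1-m^{-\gamma-2}$ for $C_{2}=C_{2}(\gamma)$ large enough, since for $t=C_{2}\log m$ and $np=\lambda<1$ we get $(e/t)^{t}\le m^{-\gamma-2}$. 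In either regime, $B_{d,\ell}=O(\log m\,(\mu/k+1))$ with the required probability.

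A union bound over the $Mk = O(m k/\log m) \le m^{1+o(1)}$ pairs $(\ell,d)$, combined with the $1-m^{-\gamma-1}$ event that every level-$d$ gap has length at most $L^{\ast}$, gives with probability at least $1-m^{-\gamma}$ that simultaneously every gap at every level is covered by at most two consecutive arcs $J_{\ell}$, and each such arc contains $O(\log m\,(\mu/k+1))$ balls at level $d$. Summing the two arcs proves the lemma.

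The only mildly delicate point is the small-$\lambda$ regime: the multiplicative Chernoff bound is too weak when the mean is $o(1)$, and one must instead use the factorial/Stirling-style Poisson tail. Everything else is a clean union bound, using that $\mu=m^{o(1)}$ and $\eps=m^{o(1)}$ (and hence $k\le 1/\eps^{2}=m^{o(1)}$) to absorb polylogarithmic factors into the $m^{o(1)}$ slack of the ambient hypotheses.
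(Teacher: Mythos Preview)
Your approach is correct in spirit but differs from the paper's, and it has one small gap worth fixing.

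\textbf{Comparison with the paper.} The paper's argument is a single direct step: given any virtual bin at level $d$, the probability that the next $l$ objects along the cycle (among the $\approx n/k$ level-$d$ balls and the $m$ level-$d$ virtual bins) are all balls is at most $\prod_{i=0}^{l-1}\frac{n/k-i}{n/k-i+m}\le\exp\bigl(-\tfrac{m}{n/k+m}\,l\bigr)$. Setting $l=\Theta\bigl((\gamma+1)\log m\,(\mu/k+1)\bigr)$ makes this $\le m^{-\gamma-1}$, and a union bound over the $m$ virtual bins finishes. Your route instead decouples the two sources of randomness: first bound the maximum bin gap, then bound ball counts in a fixed deterministic grid of arcs. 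This is more modular (it would go through even if balls and bins used different, merely Chernoff-good, hash functions), but it costs an extra union bound and the grid construction; the paper's ``geometric race'' is shorter.

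\textbf{The gap.} Your case split at $\lambda=1$ does not cover the regime $1\le\lambda=o(\log m)$. In the $\lambda\ge1$ branch you claim $e^{-\Omega(\lambda)}\le m^{-\gamma-2}$ ``after adjusting $c_1$'', but $\lambda=c_1(\mu/k)\log m$ can be, say, $\sqrt{\log m}$ (take $\mu/k=1/(c_1\sqrt{\log m})$), and no constant $c_1$ makes $e^{-\Omega(\sqrt{\log m})}$ polynomially small. The fix is to split at $\lambda\ge c_3\log m$ versus $\lambda<c_3\log m$ (equivalently $\mu/k\ge c_3/c_1$ versus not), or simply take the single threshold $t=K\max(\lambda,\log m)$ and use $\Pr[\mathrm{Bin}(n,p)\ge t]\le(e\lambda/t)^t$, which gives $\le m^{-\gamma-2}$ uniformly for $K=K(\gamma)$ large and yields $t=O((\mu/k+1)\log m)$ as desired.

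\textbf{Minor arithmetic.} With $M=\lceil 2/L^\ast\rceil$ your arcs have length $\le L^\ast/2$, so a gap of length $\le L^\ast$ can touch three consecutive arcs, not two. Either take $M=\lceil 1/L^\ast\rceil$ (arcs of length $\le L^\ast$, then indeed two arcs suffice) or change ``two'' to ``three''; either way the bound is unaffected.
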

\begin{proof}
  This is simple observation since the probability that is there lands $l$ balls between consecutive virtual bins is at most
  \begin{align*}
    \prod_{i = 0}^{l - 1} \frac{n/k - i}{n/k - i + m}
      \le \left(1 - \frac{m}{n/k + m} \right)^l
      \le \exp\left(- \frac{m}{n/k + m} l \right)
  \end{align*}
  It is now clear that if $l = \Theta(\log m (\mu/k + 1) )$ that there are no consecutive
  virtual bins which receives more that $l$ balls with probability $1 - m^{-\gamma}$.
\end{proof}

The final lemma is a technical lemma which we will use to get tail bounds.
The proof is deferred to the end of the section.
\begin{lemma}\label{lem:dependent-concentration}
    Let $B_1, \ldots B_n$ be Bernoulli variables, $0 \le \delta \le 1$ a small real,
    and $r > 0$ be an even integer.
    Assume that for any $i \in [n]$ and any subset $S \subseteq [n] \setminus \set{i}$ of size at most $r - 1$
    we have that $\abs{\prbcond{B_i = 1}{(B_j)_{j \in S}} - p} \le \delta$, then
    \[
        \ep{\left( \sum_{i \in [n]} (B_i - p) \right)^r}^{1/r}
            \le \delta n + O(\sqrt{r n}) 
        \; ,
    \]
    and the following tail bound holds
    \begin{align}
        \prb{\abs{\sum_{i \in [n]} (B_i - p)} \ge \delta n + r \sqrt{n}}
            \le \exp(-\Omega(r))
        \; .
    \end{align}
\end{lemma}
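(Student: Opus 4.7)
My plan is to prove the $r$-th moment bound $\mathbb{E}[X^r]^{1/r}\le \delta n + O(\sqrt{rn})$ with $X=\sum_i(B_i-p)$, and then derive the tail bound by Markov's inequality. For the moment bound, I would expand
\[
  \mathbb{E}[X^r]\;=\;\sum_{(i_1,\ldots,i_r)\in[n]^r}\mathbb{E}\!\left[\prod_{t=1}^{r}(B_{i_t}-p)\right]
\]
and classify each $r$-tuple by the number $k$ of \emph{singleton} indices, that is, distinct entries appearing exactly once in the tuple. For tuples with $k=0$ every distinct index has multiplicity at least $2$; a standard count bounds the number of such tuples by $(O(rn))^{r/2}$, and each has expectation at most $1$ in absolute value (since $|B_i-p|\le 1$). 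For a tuple with $k\ge 1$ singletons, the hypothesis lets me extract a factor of $\delta$ per singleton: for each singleton index $j$, conditioning on the at most $r-1$ other distinct indices in the tuple and using $|\Pr[B_j=1\mid(B_i)_{i\in S}]-p|\le\delta$ yields a $\delta$ factor, iterated via Cauchy--Schwarz (after one conditioning one has $|\mathbb{E}[\mathbb{E}[B_j-p\mid S]\cdot R]|\le \delta\sqrt{\mathbb{E}[R^2]}$, and the squared residual $R^2$ has all multiplicities even, hence is bounded by $1$) to give a per-tuple bound of the form $C^r\delta^k$.

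The number of $r$-tuples with exactly $k$ singletons is at most $\binom{r}{k}\,n^k\,(O(rn))^{(r-k)/2}$: choose the singleton positions, the $k$ distinct singleton values, and a filling of the remaining $r-k$ positions by indices each used at least twice. Combining with the per-tuple bound and applying the binomial theorem,
\[
  \mathbb{E}[X^r]\;\le\;\sum_{k=0}^{r}\binom{r}{k}(\delta n)^k\bigl(C\sqrt{rn}\bigr)^{r-k}\;=\;\bigl(\delta n+C\sqrt{rn}\bigr)^{r},
\]
so $\mathbb{E}[X^r]^{1/r}\le \delta n+O(\sqrt{rn})$. For the tail bound, Markov's inequality applied to $|X|^r$ with $t=\delta n+r\sqrt n$ yields $\Pr[|X|\ge t]\le ((\delta n+C\sqrt{rn})/(\delta n+r\sqrt n))^r$; once $r\ge 4C^2$ we have $C\sqrt{rn}\le r\sqrt n/2$, so the ratio is at most $1-\Omega(1)$ and the tail is $\exp(-\Omega(r))$.

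The main obstacle will be rigorously establishing the per-tuple estimate $|\mathbb{E}[\prod_t(B_{i_t}-p)]|\le C^r\delta^k$. A single conditioning step immediately extracts one factor of $\delta$, but naively iterating after taking absolute values loses the conditioning structure of the remaining singletons, so direct iteration tends to yield only a single $\delta$. Resolving this requires a more careful inductive argument that interleaves conditioning with Cauchy--Schwarz --- so each singleton contributes a $\delta$ factor in $L^2$ while the residual product (with all remaining multiplicities now even) stays bounded by $1$ --- or, alternatively, a martingale-style decomposition that handles all $k$ singletons simultaneously while keeping the constants in $C^r$ under control.
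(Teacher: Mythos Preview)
Your overall scheme (moment expansion, classify by the number $k$ of singleton indices, then Markov) is a reasonable skeleton, and you correctly identified where the real difficulty lies: the per-tuple bound $\bigl|\mathbb{E}\bigl[\prod_t (B_{i_t}-p)\bigr]\bigr|\le C^r\delta^k$. But the Cauchy--Schwarz fix you sketch does not deliver $\delta^k$; it delivers only a single $\delta$. After one step you have $|\mathbb{E}[(B_j-p)R]|\le\delta\sqrt{\mathbb{E}[R^2]}$, and in $R^2$ every multiplicity has doubled, so the remaining singletons of $R$ have become doubles and there is nothing left to peel off. A direct iteration without Cauchy--Schwarz fails for the same reason you noted: once you pass to absolute values to strip the first singleton, the signed conditional-expectation hypothesis is no longer applicable to the remaining singletons. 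And one factor of $\delta$ is not enough: the count of tuples with at least one singleton is essentially $n^r$, so you would only get $\mathbb{E}[X^r]\le (O(\sqrt{rn}))^r+\delta\, n^r$, which is far too weak.

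The paper avoids this obstacle by a comparison trick that in one stroke gives exactly the $\delta^k$ you want. It introduces \emph{independent} Bernoullis $B'_i$ with shifted parameter $p'=p+\delta$ (or $p'=p-\delta$ if $p\ge\tfrac12$) and checks the elementary identity
\[
\bigl|\mathbb{E}[(B_j-p)^{a}\mid (B_i)_{i\in S}]\bigr|\;\le\;\bigl|\mathbb{E}[(B'_j-p)^{a}]\bigr|
\]
for every exponent $a\ge1$, since both sides equal $\delta\,|(1-p)^a-(-p)^a|+|p(1-p)^a+(1-p)(-p)^a|$ at the extremal conditional probability. Iterating over the distinct indices yields $\bigl|\mathbb{E}[\prod_i (B_{j_i}-p)^{a_i}]\bigr|\le\prod_i\bigl|\mathbb{E}[(B'_{j_i}-p)^{a_i}]\bigr|$; in particular each singleton contributes exactly $|\mathbb{E}[B'_j-p]|=\delta$, which is precisely your missing $\delta^k$. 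Summing the term-by-term comparison gives $\mathbb{E}\bigl[(\sum_i(B_i-p))^r\bigr]\le\mathbb{E}\bigl[(\sum_i(B'_i-p))^r\bigr]$, and now the $L^r$ triangle inequality $\|\sum(B'_i-p)\|_r\le\delta n+\|\sum(B'_i-p')\|_r$ together with Hoeffding's moment bound for the centered i.i.d.\ sum finishes the job. So the idea you were groping for is not a Cauchy--Schwarz iteration but a pointwise domination of every conditional moment by the moment of a single shifted independent Bernoulli; once you have that, both your combinatorial count and the paper's shorter $L^r$-triangle route go through.
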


We will now prove \Cref{lem:dependent-to-independent}.
\begin{proof}[Proof of \Cref{lem:dependent-to-independent}]
    Let $B_d$ be the event that the longest run on the level $d$ is at most $r = O(\log m / \prb{\mathcal{X}^{(j)}_d > 0}) \le O(\log(m)/\eps)$
    and that there are at most $O(\log m (\mu/k + 1) )$ balls between any two consecutive virtual bins on level $d$.
    By \Cref{lem:longest-run} and \Cref{lem:ball-consecutive-bins} we have that $\prbcond{\neg B_d}{A_{d - 1}} \le m^{-\gamma'}$ hence we get that
    \[
        \abs{
            \prbcond{Y^{(j)}_{d} \ge t}{A_{d - 1} \wedge (X^{(i)}_{d})_{i \in S}}
            - \prbcond{Y^{(j)}_{d} \ge t \wedge B_d}{A_{d - 1} \wedge (X^{(i)}_{d})_{i \in S}}
        } \le m^{-\gamma'}
    \]
    The important observation now is that 
    \[
        Y^{(j)}_d = W^{(j)}_{d, 0} + \max\set{0, \max_{1 \le l \le r} \sum_{s = 1}^{l} (W^{(j)}_{d, s} - Z^{(j)}_{d, s} )}
    \]
    when $B_d$ is true.
    So we only reveal $r$ virtual bins and at most $O(r O(\log m (\mu/k + 1) ))$ balls when determining $Y^{(j)}_d$.

    We will introduce a third system which will act as an intermediate between the two systems.
    Let $\overline{X}^{(j)}_{d - 1}$ be independent random variables where each of the variables has the same marginal distribution as $X^{(j)}_{d - 1}$.
    Let $\overline{Z}^{(j)}_{d, s}$ and $\overline{W}^{(j)}_{d, s}$ be independent random variables where each of $\overline{Z}^{(j)}_{d, s}$ has the same marginal distribution as $Z^{(j)}_{d, s}$, and each of $\overline{W}^{(j)}_{d, s}$ is geometrically distributed with parameter $\frac{n/k}{n/k + m}$.
    We then define $\overline{Y}^{(j)}_{d} = \overline{W}^{(j)}_{d, 0} + \max\set{0, \max_{1 \le l \le r} \sum_{s = 1}^{l} (\overline{W}^{(j)}_{d, s} - \overline{Z}^{(j)}_{d, s} )}$.
    The difference between the intermediate system and the original system is that in the intermediate system we are sampling everything with replacement and in the original system everything is sampled without replacement.

    Let $D$ be the event that $X^{(j)}_{d - 1}$ is a distinct bin from the bins $(Z^{(i)}_{d, s})_{i \in S, 1 \le s \le r}$
    and that the bins $(X^{(i)}_{d - 1})_{i \in S}$ are distinct for the bins $(X^{(j)}_{d, s})_{1 \le s \le r}$. It is easy
    to see that $\prbcond{\neg D \wedge B}{A_{d - 1} \wedge (X^{(i)}_{d})_{i \in S}} \le O(\frac{r l}{m - (l + 1)(r + 1)}) = m^{- 1 + o(1)}$,
    hence we get that
    \[
        \abs{
            \prbcond{Y^{(j)}_{d} \ge t \wedge B_{d}}{A_{d - 1} \wedge (X^{(i)}_{d})_{i \in S}}
            - \prbcond{Y^{(j)}_{d} \ge t \wedge B_{d} \wedge D}{A_{d - 1} \wedge (X^{(i)}_{d})_{i \in S}}
        } \le m^{-1 + o(1)}
        \; .
    \]

    It is standard fact that if we sample $\overline{X}^{(j)}_{d - 1}$ and $(\overline{Z}^{(j)}_{d, s})_{1 \le s \le r}$
    independently with replacement and condition on them sampling distinct bins which are also distinct from the bins
    for $(X^{(i)}_{d - 1})_{i \in S}$ and $(Z^{(i)}_{d, s})_{i \in S, 1 \le s \le r}$, then it has the same distribution
    as sampling without replacement. The probability that we make such a sampling error is bounded by
    $\frac{l(r + 1)^2}{m} = m^{-1 + o(1)}$.

    Similarly, if we sample $(\overline{W}^{(j)}_{d, s})_{0 \le s \le r}$ independently with replacement
    conditioned on all balls being distinct and distinct from the ball sampled for $(W^{(j)}_{d, s})_{i \in S, 0 \le s \le r}$,
    then it has the same distribution as sampling without replacement. With probability $1 - m^{-\gamma}$ we have that
    $\overline{W}^{(j)}_{d, s} \le O(\log m (\mu/k + 1) )$ for all $0 \le s \le r$, hence the probability of making a sampling error with balls is bounded
    by 
    \[
        m^{-\gamma} + O\left( \frac{\log(m)^2 (\mu/k + 1)^2 (r + 1)^2}{n_e} \right)
            = m^{-\gamma} + O\left( \frac{k \log(m)^2 (\mu/k + 1)^2 (r + 1)^2}{n} \right)
            = m^{-1 + o(1)}
        \; .
    \]

    From this two facts we see that
    \[
        \abs{\prbcond{Y^{(j)}_{d} \ge t \wedge B_{d} \wedge D}{A_{d - 1} \wedge (X^{(i)}_{d})_{i \in S}}
            - \prbcond{\overline{Y}^{(j)}_{d} \ge t \wedge B_{d} \wedge D}{A_{d - 1}}
        } \le m^{-1 + o(1)}
        \; .
    \]
    Since $B_{d} \wedge D$ happens with probability at least $1 - m^{-\gamma}$ we get that,
    \[
        \abs{\prbcond{\overline{Y}^{(j)}_{d} \ge t \wedge B_{d} \wedge D}{A_{d - 1}}
            - \prbcond{\overline{Y}^{(j)}_{d} \ge t}{A_{d - 1}}
        } \le m^{-1 + o(1)}
        \; .
    \]
    We then define $\overline{\mathcal{Y}}^{(j)}_d = \mathcal{W}^{(j)}_{d, 0} + \max\set{0, \max_{1 \le l \le r} \sum_{s = 1}^{l} (\mathcal{W}^{(j)}_{d, s} - \mathcal{Z}^{(j)}_{d, s} )}$.
    The difference between $\overline{\mathcal{Y}}^{(j)}_d$ and $\mathcal{Y}^{(j)}_d$ is that $\overline{\mathcal{Y}}^{(j)}_d$ looks at at most $r$ bins in the tail while $\mathcal{Y}^{(j)}_d$ looks at all bins in the tail.
    If $\overline{Z}^{(j)}_{d, s} = \mathcal{Z}^{(j)}_{d, s}$ for all $1 \le s \le r$ then $\prbcond{\overline{Y}^{(j)}_{d} \ge t}{A_{d - 1}} = \prb{\overline{\mathcal{Y}}^{(j)}_{d} \ge t}$.
    This observation imply that
    \begin{align*}
        &\abs{\prbcond{\overline{Y}^{(j)}_{d} \ge t}{A_{d - 1}} - \prb{\overline{\mathcal{Y}}^{(j)}_{d} \ge t}}
            \\&\quad\quad\quad\le \sum_{\tau_1, \ldots, \tau_r} \Big| \prbcond{(\overline{Z}^{(j)}_{d, 1}, \ldots, \overline{Z}^{(j)}_{d, r}) = (\tau_1, \ldots, \tau_r)}{A_{d - 1}} 
                - \prb{\mathcal{Z}^{(j)}_{d, 1}, \ldots, \mathcal{Z}^{(j)}_{d, r}) = (\tau_1, \ldots, \tau_r)} \Big|
            \\&\quad\quad\quad\le \sum_{\tau_1, \ldots, \tau_r} \Big| \prod_{1 \le s \le r} \prbcond{\overline{Z}^{(j)}_{d, s} = \tau_s}{A_{d - 1}} 
                - \prod_{1 \le s \le r} \prb{\mathcal{Z}^{(j)}_{d, s} = \tau_s} \Big|
            \\&\quad\quad\quad\le 2 \sum_{0 \le \tau \le C} \Big| \prbcond{\overline{Z}^{(j)}_{d, 1} = \tau}{A_{d - 1}} - \prb{\mathcal{Z}^{(j)}_{d, 1} = \tau} \Big|
            \\&\quad\quad\quad\le r \sum_{0 \le \tau \le C} m^{-1/2 + o(1)}
            \\&\quad\quad\quad\le m^{-1/2 + o(1)}
    \end{align*}

    Now the same arguments as in the proof of \Cref{lem:longest-run} show that $\mathcal{Y}^{(j)}_{d} = \overline{\mathcal{Y}}^{(j)}_{d}$
    with probability at least $1 - m^{-\gamma}$. Combining all these bounds proves the claim.
\end{proof}

Now having proved \Cref{lem:dependent-to-independent} we are ready to prove \Cref{eq:concentration-on-full}.

\begin{proof}[Proof of \Cref{eq:concentration-on-full}]
    We note that
    \[
        \prb{A_k}
            = \prb{\bigwedge_{1 \le i \le k} A_i}
            = \prod_{1 \le i \le k} \prbcond{A_i}{A_{i - 1}}
    \]
    If we can prove that $\prbcond{A_d}{A_{d - 1}} \ge 1 - O(m^{-\gamma'})$ for all $1 \le d \le k$, where $\gamma'$ is an appropriately chosen constant, then we would get that
    \[
        \prb{A_k} \ge (1 - m^{-\gamma'})^k \ge 1 - m^{-\gamma}
    \]
    The rest of the proof is now to show that $\prbcond{\neg A_d}{A_{d - 1}} \le O(m^{-\gamma'})$.
    
    Let $j \in [m]$ and $S \subseteq [m] \setminus \set{s}$ with $l = \abs{S} \le O(\log m)$.
    We will prove that,
    \begin{align}\label{eq:dep-to-indep}
        \abs{\prbcond{X^{(j)}_d \le t}{A_{d - 1} \wedge (X^{(i)}_d)_{i \in S}} - \prb{\mathcal{X}^{(j)}_d \le t}}
            \le m^{-1/2 + o(1)}
        \; .
    \end{align}
    This will imply the result since if we combine \cref{eq:dep-to-indep} with \Cref{lem:dependent-concentration} we get that,
    \begin{align*}
        &\prbcond{\abs{\frac{\sum_{j \in [m]} [X^{(j)}_d \le t]}{m} - \prb{\mathcal{X}^{(j)}_d \le t} } \ge m^{-1/2 + o(1)}}{A_{d - 1} }
            \le m^{-\gamma''}
        \; . 
    \end{align*}
    For $0 \le t \le C$. Now a union bound over all $0 \le t \le C$ gives us that
    \[
        \prbcond{\neg A_d \wedge B_{d}}{A_{d - 1}}
            \le (C + 1)m^{-\gamma''}
            \le m^{-\gamma'}
        \; .
    \]
    We then get that $\prb{A_k} \ge 1 - m^{-\gamma}$ as we wanted.

    We just need to prove \cref{eq:dep-to-indep}.
    We note that $X^{(j)}_d \le t$ if and only if $X^{(j)}_{d - 1} - Y^{(j)}_{d} \le t$.
    We thus get that,
    \begin{align*}
        \prbcond{X^{(j)}_d \le t}{A_{d - 1} \wedge (X^{(i)}_d)_{i \in S}}
            &= \sum_{s = 0}^{t - 1} \prbcond{X^{(j)}_{d - 1} \le t - s \wedge Y^{(j)}_d = s}{A_{d - 1} \wedge (X^{(i)}_d)_{i \in S}}
                \\&+ \prbcond{Y^{(j)}_d \ge t}{A_{d - 1} \wedge (X^{(i)}_d)_{i \in S}}
    \end{align*}
    If we fix the first $d - 1$ levels then we get that
    \begin{align*}
        \prbcond{X^{(j)}_{d - 1} \le t - s}{(X^{(i)}_d)_{i \in S}}
            = \frac{\sum_{i \in [m] \setminus S} [X^{(j)}_{d - 1} \le t - s]}{m - \abs{S}}
    \end{align*}
    We condition on $A_{d - 1}$ so we know that,
    \begin{align*}
        \abs{ \frac{\sum_{i \in [m]} [X^{(i)}_{d - 1} \le t - s]}{m} - \prb{\mathcal{X}^{(j)}_{d - 1} \le t - s} } \le m^{-1/2 + o(1)}            
    \end{align*}
    This implies that,
    \begin{align*}
        \abs{ \frac{\sum_{i \in [m] \setminus S} [X^{(i)}_{d - 1} \le t - s]}{m - \abs{S}} - \prb{\mathcal{X}^{(j)}_{d - 1} \le t - s} } 
            &\le m^{-1/2 + o(1)} + \abs{\frac{\sum_{i \in S} \left([X^{(i)}_{d - 1} \le t - s] - \prb{\mathcal{X}^{(j)}_{d - 1} \le t - s} \right)}{m - \abs{S}}} 
            \\&\le m^{-1/2 + o(1)}
    \end{align*}
    Here we have used that $\abs{S} \le O(\log(m))$.
    We thus get that 
    \begin{align*}
        &\abs{\prbcond{X^{(j)}_{d - 1} \le t - s \wedge Y^{(j)}_d = s}{A_{d - 1} \wedge (X^{(i)}_d)_{i \in S}}
            - \prb{\mathcal{X}^{(j)}_{d - 1} \le t - s} \prbcond{Y^{(j)}_d = s}{A_{d - 1} \wedge (X^{(i)}_d)_{i \in S}}}
            \\&\qquad\qquad\qquad\le \prbcond{Y^{(j)}_d = s}{A_{d - 1} \wedge (X^{(i)}_d)_{i \in S}} m^{-1/2 + o(1)}
    \end{align*}
    Using this we get that,
    \begin{align*}
        \abs{\prbcond{X^{(j)}_d \le t}{A_{d - 1} \wedge (X^{(i)}_d)_{i \in S}} - \prbcond{\mathcal{X}^{(j)}_{d - 1} - Y^{(j)}_d \le t}{A_{d - 1} \wedge (X^{(i)}_d)_{i \in S}}} \le m^{-1/2 + o(1)}
    \end{align*}
    We now want to exchange $Y^{(j)}_d$ with $\mathcal{Y}^{(j)}_d$ and the approach is similar to what we just did.
    We note that,
    \begin{align*}
        \prbcond{\mathcal{X}^{(j)}_{d - 1} - Y^{(j)}_d \le t}{A_{d - 1} \wedge (X^{(i)}_d)_{i \in S}}
            &= \sum_{s = 0}^{t - 1} \prb{\mathcal{X}^{(j)}_{d - 1} = s} \prbcond{Y^{(j)}_d \le t - s}{A_{d - 1} \wedge (X^{(i)}_d)_{i \in S}}
                \\&+ \prb{\mathcal{X}^{(j)}_{d - 1} \ge t}
    \end{align*}
    We now use \Cref{lem:dependent-to-independent} to get that,
    \[
        \abs{\prbcond{Y^{(j)}_d \le t - s}{A_{d - 1} \wedge (X^{(i)}_d)_{i \in S}} - \prb{\mathcal{Y}^{(j)}_d \le t - s} }
            \le m^{-1/2 + o(1)} + m^{-\gamma}
    \]
    So 
    \[
        \abs{\prb{\mathcal{X}^{(j)}_{d - 1} = s} \prbcond{Y^{(j)}_d \le t - s}{A_{d - 1} \wedge (X^{(i)}_d)_{i \in S}}
            - \prb{\mathcal{X}^{(j)}_{d - 1} = s \wedge \mathcal{Y}^{(j)}_d \le t - s}}
        \le \prb{\mathcal{X}^{(j)}_{d - 1} = s} m^{-1/2 + o(1)}
    \]
    This implies that,
    \begin{align*}
        \abs{\prbcond{X^{(j)}_d \le t}{A_{d - 1} \wedge (X^{(i)}_d)_{i \in S}} - \prb{\mathcal{X}^{(j)}_d \le t}}
            &\le m^{-1/2 + o(1)}
    \end{align*}
    Where we have use that $t \le C \le m^{o(1)}$.
    This proves \cref{eq:dep-to-indep} and thus finishes the proof.
\end{proof}

Later in the paper we will need to bound the contribution to a bin while fixing the previous levels.
The proof structure is very similar to the proof of 
We define $\mathcal{L}_{d - 1}$ to be the sigma-algebra generated by the first $d - 1$ levels.
\begin{lemma}\label{lem:contribution-last-levels}
    Let $0 \le t \le C$, $j \in [m]$, and $1 \le d \le k$.
    Then
    \begin{align}
        \abs{
            \prbcond{\sum_{i = d}^k Y^{(j)}_{i} \ge t}{\mathcal{L}_{d - 1}}
            - \prb{\sum_{i = d}^k \mathcal{Y}^{(j)}_i \ge t}
        } \le k \indicator{A_{d - 1}^c} + 2k m^{-1/2 + o(1)} 
    \end{align}
\end{lemma}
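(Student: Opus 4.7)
The plan is a telescoping hybrid argument that replaces the real contributions $Y^{(j)}_l$ one level at a time with their idealized counterparts $\mathcal{Y}^{(j)}_l$. If $\mathcal{L}_{d-1} \notin A_{d-1}$, the bound is immediate since any probability difference is at most $1 \le k\indicator{A_{d-1}^c}$; so I will assume $\mathcal{L}_{d-1} \in A_{d-1}$ for the rest of the argument.

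For $d \le l \le k+1$, define the hybrid
\begin{align*}
    H_l = \prbcond{\sum_{i=d}^{l-1} Y^{(j)}_i + \sum_{i=l}^{k} \mathcal{Y}^{(j)}_i \ge t}{\mathcal{L}_{d-1}}.
\end{align*}
Since the idealized variables $\mathcal{Y}^{(j)}_i$ are built from fresh independent randomness, $H_d = \prb{\sum_{i=d}^{k} \mathcal{Y}^{(j)}_i \ge t}$, while $H_{k+1} = \prbcond{\sum_{i=d}^{k} Y^{(j)}_i \ge t}{\mathcal{L}_{d-1}}$. The triangle inequality reduces the task to bounding each $\abs{H_{l+1}-H_l}$. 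I would then further condition on $\mathcal{L}_{l-1}$ (extending $\mathcal{L}_{d-1}$) and on $\mathcal{Y}^{(j)}_{l+1},\ldots,\mathcal{Y}^{(j)}_k$, and write $u$ for the resulting value of $\sum_{i=d}^{l-1} Y^{(j)}_i + \sum_{i=l+1}^{k} \mathcal{Y}^{(j)}_i$. This gives
\begin{align*}
    H_{l+1} - H_l = \E\left[\prbcond{Y^{(j)}_l \ge t - u}{\mathcal{L}_{l-1}} - \prb{\mathcal{Y}^{(j)}_l \ge t - u} \mid \mathcal{L}_{d-1}\right].
\end{align*}
On $\{\mathcal{L}_{l-1} \in A_{l-1}\}$ the integrand is at most $m^{-1/2+o(1)}$ by (essentially) Lemma~\ref{lem:dependent-to-independent}; elsewhere it is at most $1$. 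Hence $\abs{H_{l+1}-H_l} \le m^{-1/2+o(1)} + \prbcond{A_{l-1}^c}{\mathcal{L}_{d-1}}$.

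Summing over $l$, the first terms contribute $km^{-1/2+o(1)}$. For the $\prbcond{A_{l-1}^c}{\mathcal{L}_{d-1}}$ terms, iterate the concentration step behind Theorem~\ref{eq:concentration-on-full}, which is pointwise in $\mathcal{L}_{d-1}\in A_{d-1}$: chaining over at most $k$ levels yields $\prbcond{A_{l-1}^c}{\mathcal{L}_{d-1}} \le km^{-\gamma'}$ for any fixed constant $\gamma'$, and picking $\gamma'$ large enough absorbs the total residual $k^2 m^{-\gamma'}$ into $m^{-1/2+o(1)}$. Combining everything yields the claimed bound of $2km^{-1/2+o(1)}$ when $\mathcal{L}_{d-1}\in A_{d-1}$. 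The main obstacle is that Lemma~\ref{lem:dependent-to-independent} is phrased with conditioning on $A_{l-1}$ together with a small subset $(X^{(i)}_l)_{i\in S}$, not on the full $\mathcal{L}_{l-1}$. This is surmountable because the proof of that lemma is local: only $O(\log m/\eps)$ virtual bins and $O(\log m(\mu/k+1))$ balls are revealed with high probability in determining $Y^{(j)}_l$, so the extra information in $\mathcal{L}_{l-1}$ about distant bins is independent of $Y^{(j)}_l$ via the freshness of the hashing at level $l$; re-running the original argument then yields the $m^{-1/2+o(1)}$ bound under the stronger conditioning $\mathcal{L}_{l-1}\in A_{l-1}$, completing the proof.
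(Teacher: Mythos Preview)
Your proposal is correct and follows essentially the same approach as the paper. The paper phrases the argument as an induction on $r=k-d$ (peeling off levels from the top), while you phrase it as a hybrid sequence, but both are the same level-by-level telescoping that swaps each $Y^{(j)}_l$ for $\mathcal{Y}^{(j)}_l$ using the tower property and the pointwise-in-$\mathcal{L}_{l-1}$ strengthening of Lemma~\ref{lem:dependent-to-independent}; the paper invokes exactly this strengthening in its base case (writing $\indicator{A_{k-1}}\prbcond{Y^{(j)}_k\ge t}{\mathcal{L}_{k-1}}=\indicator{A_{k-1}}(\prb{\mathcal{Y}^{(j)}_k\ge t}+\delta)$) and handles the $\prbcond{A_{l-1}^c}{\mathcal{L}_{d-1}}$ terms via the same pointwise bound $\indicator{A_{l-1}}\prbcond{A_l^c}{\mathcal{L}_{l-1}}\le m^{-\gamma}$ you cite.
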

\begin{proof}
    We will prove that,
    \begin{align}\label{eq:contribution-last-levels-induction}
        \abs{
            \prbcond{\sum_{i = k - r}^k Y^{(j)}_{i} \ge t}{\mathcal{L}_{k - r - 1}}
            - \prb{\sum_{i = k - r}^k \mathcal{Y}^{(j)}_i \ge t}
        } \le (1 + r)\indicator{A_{k - r - 1}^c} + (1 + 2r) m^{-1/2 + o(1)}
    \end{align}
    for $0 \le r \le k - d$ and all $0 \le t \le C$.
    We will prove the result by induction on $r$.

    We first consider $r = 0$.
    We then have that,
    \begin{align*}
        \prbcond{Y^{(j)}_{k} \ge t}{\mathcal{L}_{k - 1}}
            &= \indicator{A_{k - 1}} \prbcond{Y^{(j)}_{k} \ge t}{\mathcal{L}_{r - 1}}
                + \indicator{A_{k - 1}^c} \prbcond{Y^{(j)}_{k} \ge t}{\mathcal{L}_{r - 1}}
    \end{align*}
    We now use \Cref{lem:dependent-to-independent} to get that,
    $\indicator{A_{k - 1}}\prbcond{Y^{(j)}_{k} \ge t}{\mathcal{L}_{k - 1}} = \indicator{A_{k - 1}}\left( \prb{\mathcal{Y}^{(j)}_k \ge t} + \delta \right)$ where $\abs{\delta} \le m^{-1/2 + o(1)}$.
    We then get that,
    \begin{align*}
        \prbcond{Y^{(j)}_{k} \ge t}{\mathcal{L}_{k - 1}}
            &= \indicator{A_{k - 1}}\left( \prb{\mathcal{Y}^{(j)}_k \ge t} + \delta \right)
                + \indicator{A_{k - 1}^c} \prbcond{Y^{(j)}_{k} \ge t}{\mathcal{L}_{r - 1}}
            \\&= \prb{\mathcal{Y}^{(j)}_k \ge t} + \indicator{A_{k - 1}^c} \left(\prbcond{Y^{(j)}_{k} \ge t}{\mathcal{L}_{r - 1}} - \prb{\mathcal{Y}^{(j)}_k \ge t} \right) + \indicator{A_{k - 1}}\delta
    \end{align*}
    We have that
    \begin{align*}
        \abs{\indicator{A_{k - 1}^c} \left(\prbcond{Y^{(j)}_{k} \ge t}{\mathcal{L}_{r - 1}} - \prb{\mathcal{Y}^{(j)}_k \ge t} \right) + \indicator{A_{k - 1}}\delta}
            \le \indicator{A_{k - 1}^c} + \abs{\delta}
            \le \indicator{A_{k - 1}^c} + m^{-1/2 + o(1)}
    \end{align*}
    This proves \cref{eq:contribution-last-levels-induction} for $r = 0$ which will be our induction start.

    Now we consider $r \ge 1$ assume that \cref{eq:contribution-last-levels-induction} is true for values less than $r$.
    We note that,
    \begin{align*}
        \prbcond{\sum_{i = k - r}^k Y^{(j)}_{i} \ge t}{\mathcal{L}_{k - r - 1}}
            = \sum_{s = 0}^{t - 1} \prbcond{\sum_{i = k - r + 1}^k Y^{(j)}_{i} \ge t - s \wedge Y^{(j)}_{k - r} = s}{\mathcal{L}_{k - r - 1}}
        + \prbcond{Y^{(j)}_{k - r} \ge t}{\mathcal{L}_{k - r - 1}}
    \end{align*}
    We now fix $0 \le s \le t - 1$ and use the tower property of conditional expectation to get that,
    \begin{align*}
        \prbcond{\sum_{i = k - r + 1}^k Y^{(j)}_{i} \ge t - s \wedge Y^{(j)}_{k - r} = s}{\mathcal{L}_{k - r - 1}}
            = \epcond{\indicator{Y^{(j)}_{k - r} = s}\prbcond{\sum_{i = k - r + 1}^k Y^{(j)}_{i} \ge t - s}{\mathcal{L}_{k - r}}}{\mathcal{L}_{k - r - 1}}
    \end{align*}
    By then induction hypothesis we have that $\prbcond{\sum_{i = k - r + 1}^k Y^{(j)}_{i} \ge t - s}{\mathcal{L}_{k - r}} = \prb{\sum_{i = k - r}^k \mathcal{Y}^{(j)}_i \ge t - s} + \delta_s$ where $\abs{\delta_s} \le r \indicator{A_{k - r}^c} + (2r - 1) m^{-1/2 + o(1)}$.
    This implies that,
    \begin{align*}
            &\epcond{\indicator{Y^{(j)}_{k - r} = s}\prbcond{\sum_{i = k - r + 1}^k Y^{(j)}_{i} \ge t - s}{\mathcal{L}_{k - r}}}{\mathcal{L}_{k - r - 1}}
                \\&\qquad\qquad\qquad= \epcond{\indicator{Y^{(j)}_{k - r} = s} \left(\prb{\sum_{i = k - r}^k \mathcal{Y}^{(j)}_i \ge t - s} + \delta_s \right)}{\mathcal{L}_{k - r - 1}}
                \\&\qquad\qquad\qquad= \prbcond{\sum_{i = k - r + 1}^k \mathcal{Y}^{(j)}_{i} \ge t - s \wedge Y^{(j)}_{k - r} = s}{\mathcal{L}_{k - r - 1}} + \epcond{\indicator{Y^{(j)}_{k - r} = s} \delta_s}{\mathcal{L}_{k - r - 1}}
    \end{align*}
    Hence, we get that,
    \begin{align*}
        \prbcond{\sum_{i = k - r}^k Y^{(j)}_{i} \ge t}{\mathcal{L}_{k - r - 1}}
            &= \sum_{s = 0}^{t - 1} \prbcond{\sum_{i = k - r + 1}^k Y^{(j)}_{i} \ge t - s \wedge Y^{(j)}_{k - r} = s}{\mathcal{L}_{k - r - 1}}
                + \prbcond{Y^{(j)}_{k - r} \ge t}{\mathcal{L}_{k - r - 1}}
            \\&= \sum_{s = 0}^{t - 1} \left( \prbcond{\sum_{i = k - r + 1}^k \mathcal{Y}^{(j)}_{i} \ge t - s \wedge Y^{(j)}_{k - r} = s}{\mathcal{L}_{k - r - 1}} + \epcond{\indicator{Y^{(j)}_{k - r} = s} \delta_s}{\mathcal{L}_{k - r - 1}} \right)
                \\&\qquad\qquad+ \prbcond{Y^{(j)}_{k - r} \ge t}{\mathcal{L}_{k - r - 1}}
            \\&= \prbcond{\sum_{i = k - r + 1}^k \mathcal{Y}^{(j)}_{i} + Y^{(j)}_{k - r} \ge t}{\mathcal{L}_{k - r - 1}}
                + \sum_{s = 0}^{t - 1} \epcond{\indicator{Y^{(j)}_{k - r} = s} \delta_s}{\mathcal{L}_{k - r - 1}}
    \end{align*}
    Now we want to exchange $Y^{(j)}_{k - r}$ with $\mathcal{Y}^{(j)}_{k - r}$ and the method is similar to before.
    We write,
    \begin{align*}
        \prbcond{\sum_{i = k - r + 1}^k \mathcal{Y}^{(j)}_{i} + Y^{(j)}_{k - r} \ge t}{\mathcal{L}_{k - r - 1}}
            = \sum_{s = 0}^{t - 1} \prb{\sum_{i = k - r + 1}^k \mathcal{Y}^{(j)}_{i} = s} \prbcond{Y^{(j)}_{k - r} \ge t - s}{\mathcal{L}_{k - r - 1}}
        + \prb{\sum_{i = k - r + 1}^k \mathcal{Y}^{(j)}_{i} \ge t}
    \end{align*}
    Now by analogous arguments as in the induction start we get that $\prbcond{Y^{(j)}_{k - r} \ge t - s}{\mathcal{L}_{k - r - 1}} = \prb{\mathcal{Y}^{(j)}_{k - r} \ge t - s} + \delta'_s$ where $\abs{\delta'_s} \le \indicator{A_{k - r - 1}^c} + m^{-1/2 + o(1)}$.
    We thus get that,
    \begin{align*}
        &\sum_{s = 0}^{t - 1} \prb{\sum_{i = k - r + 1}^k \mathcal{Y}^{(j)}_{i} = s} \prbcond{Y^{(j)}_{k - r} \ge t - s}{\mathcal{L}_{k - r - 1}}
                + \prb{\sum_{i = k - r + 1}^k \mathcal{Y}^{(j)}_{i} \ge t}
            \\&\qquad\qquad\qquad= \sum_{s = 0}^{t - 1} \prb{\sum_{i = k - r + 1}^k \mathcal{Y}^{(j)}_{i} = s} \left(\prb{\mathcal{Y}^{(j)}_{k - r} \ge t - s} + \delta'_s \right)
                + \prb{\sum_{i = k - r + 1}^k \mathcal{Y}^{(j)}_{i} \ge t}
                \\&\qquad\qquad\qquad= \prb{\sum_{i = k - r}^k \mathcal{Y}^{(j)}_{i} \ge t}
                    +  \sum_{s = 0}^{t - 1} \prb{\sum_{i = k - r + 1}^k \mathcal{Y}^{(j)}_{i} = s} \delta'_s
    \end{align*}
    Now combining it all we get that,
    \begin{align*}
        \prbcond{\sum_{i = k - r}^k Y^{(j)}_{i} \ge t}{\mathcal{L}_{k - r - 1}}
            = \prb{\sum_{i = k - r}^k \mathcal{Y}^{(j)}_{i} \ge t}
            +  \sum_{s = 0}^{t - 1} \left(\epcond{\indicator{Y^{(j)}_{k - r} = s} \delta_s}{\mathcal{L}_{k - r - 1}} + \prb{\sum_{i = k - r + 1}^k \mathcal{Y}^{(j)}_{i} = s} \delta'_s \right)
    \end{align*}
    Now to finish the proof we just need to bound $\abs{\sum_{s = 0}^{t - 1} \left(\epcond{\indicator{Y^{(j)}_{k - r} = s} \delta_s}{\mathcal{L}_{k - r - 1}} + \prb{\sum_{i = k - r + 1}^k \mathcal{Y}^{(j)}_{i} = s} \delta'_s \right)}$.
    \begin{align*}
        &\abs{\sum_{s = 0}^{t - 1} \left(\epcond{\indicator{Y^{(j)}_{k - r} = s} \delta_s}{\mathcal{L}_{k - r - 1}} + \prb{\sum_{i = k - r + 1}^k \mathcal{Y}^{(j)}_{i} = s} \delta'_s \right)}
            \\&\qquad\qquad\qquad\le \max_{s = 0}^{t - 1} \epcond{\abs{\delta_s}}{\mathcal{L}_{k - r - 1}}
                + \max_{s = 0}^{t - 1} \abs{\delta'_s}
            \le \indicator{A_{k - r - 1}^c} + 2r m^{-1/2 + o(1)} + r \prbcond{A_{k - r}^c}{\mathcal{L}_{k - r - 1}}
    \end{align*}
    We now just need to bound $\prbcond{A_{k - r}^c}{\mathcal{L}_{k - r - 1}}$.
    From the proof of \Cref{eq:concentration-on-full} we have that $\indicator{A_{k - r - 1}}\prbcond{A_{k - r}^c}{\mathcal{L}_{k - r - 1}} \le m^{-\gamma}$.
    So we get that 
    \[
        r \prbcond{A_{k - r}^c}{\mathcal{L}_{k - r - 1}}
            \le r \indicator{A_{k - r - 1}^c} + r m^{-\gamma}
            \le r \indicator{A_{k - r - 1}^c} + m^{-1/2 + o(1)}
    \]
    This implies that,
    \begin{align*}
        &\abs{\sum_{s = 0}^{t - 1} \left(\epcond{\indicator{Y^{(j)}_{k - r} = s} \delta_s}{\mathcal{L}_{k - r - 1}} + \prb{\sum_{i = k - r + 1}^k \mathcal{Y}^{(j)}_{i} = s} \delta'_s \right)}
            \le (r + 1) \indicator{A_{k - r - 1}^c} + (1 + 2r) m^{-1/2 + o(1)}
    \end{align*}
    This finishes the induction step and thus the proof.
\end{proof}

We now turn to the proof \Cref{lem:dependent-concentration}.
\begin{proof}[Proof of \Cref{lem:dependent-concentration}]
    Let $j_1, \ldots, j_r$ be different indices and $a_1, \ldots, a_r$ be non-negative integers
    such that $\sum_{i = 1}^{r} a_i = r$. We then want to estimate $\epcond{(X_{j_r} - p)^{a_r}}{(X_{j_i})_{i < r}}$
    \begin{align*}
        \abs{\epcond{(X_{j_r} - p)^{a_r}}{(X_{j_i})_{i < r}}}
            &= \abs{\epcond{X_{j_r}}{(X_{j_i})_{i < r}}(1 - p)^{a_r} + (1 - \epcond{X_{j_r}}{(X_{j_i})_{i < r}})(-p)^{a_r} }
            \\&\le \delta \abs{(1 - p)^{a_r} - (-p)^{a_r}} + \abs{p(1 - p)^{a_r} + (1 - p)(-p)^{a_r} }
    \end{align*}
    Now let $X'_i$ be independent Bernoulli variables with parameter $p'$ where $p' = p + \delta$ if $p < \frac{1}{2}$
    and $p' = p - \delta$ when $p \ge \frac{1}{2}$. It is now easy to check that
    \[
        \abs{\ep{(X'_i - p)^{a_r}}}
            = \delta \abs{(1 - p)^{a_r} - (-p)^{a_r}} + \abs{p(1 - p)^{a_r} + (1 - p)(-p)^{a_r} }
    \]
    Using this we see that
    \[
        \abs{\ep{\prod_{i = 1}^r (X_i - p)^{a_r}}}
            \le \abs{\ep{\prod_{i = 1}^r (X'_i - p)^{a_r}}}
    \]
    From this we conclude that $\ep{\left( \sum_{i} (X_i - p) \right)^r } \le \ep{\left( \sum_{i} (X'_i - p) \right)^r }$.
    Now by the triangle inequality and Hoeffding's inequality we get that
    \begin{align*}
        \ep{\left( \sum_{i} (X_i - p) \right)^r }^{1/r}
            &\le \ep{\left( \sum_{i} (X'_i - p) \right)^r }^{1/r}
            \\&\le \delta n + \ep{\left( \sum_{i} (X'_i - p') \right)^r }^{1/r}
            \\&\le \delta n + O(\sqrt{r n})
    \end{align*}

    Now using Markov's inequality give us the tail bound.
\end{proof}
\subsection{The probability that a bin is not full}\label{sec:f-bound}

In this section we will bound the probability $\prb{\mathcal{X}^{(j)}_k = 0}$ for any bin $j$.
Since the bound is the same for all bins we will suppress $j$ from the notation.
We note that $\prb{\mathcal{X}_k = 0} = \prb{\sum_{i = 1}^k \mathcal{Y}_i \ge C}$. Now an important
observation is that if we define $1 - f_d =  \prb{\sum_{i = 1}^d \mathcal{Y}_i \ge C}$, then $\mathcal{Y}_d$ is
geometrically distributed with parameter $\alpha_d = \frac{n/k}{n/k + f_d m} = \frac{1}{1 + \tfrac{f_d k}{\mu}}$.
The reason is that when generating $\mathcal{Y}_d$, we sample with replacement so when sampling a bin, the  probability that it will be filled is $1 - f_d$ independently of the history.
Thus, at any point, the probability of getting another ball is
\begin{align*}
    \frac{n/k}{n/k + m} \sum_{i = 0}^{\infty} \left( \frac{m}{n/k + m} (1 - f_d) \right)^i
        = \frac{\frac{n/k}{n/k + m}}{1 - \frac{m}{n/k + m}(1 - f_d)}
        = \frac{n/k}{n/k + f_d m}
        = \frac{1}{1 + \tfrac{f_d k}{\mu}}
    \; .
\end{align*}
Which is exactly what we get from a geometrically distributed variable.

From simple facts about geometrically distributed variables we get that $\mu_d = \ep{\mathcal{Y}_d} = \frac{\mu}{k f_d}$,
and $\sigma_d^2 = \var{\mathcal{Y}_d} = \frac{\mu}{k f_d} \left(1 + \frac{\mu}{k f_d} \right) \ge \mu_d$.
We note that 
\begin{align}
    \sum_{i = 1}^k \sigma_i^2 = \sum_{i = 1}^k \frac{\mu}{k f_i}\left(1 + \frac{\mu}{k f_i} \right)
        \ge \sum_{i = 1}^k \frac{\mu}{k f_i}    
\end{align}
We define $S_d = \sum_{i = 1}^d \mathcal{Y}_i$ for $1 \le d \le k$.
Our goal is to prove that there exists a constant $L$ such that,
\begin{align}\label{eq:f-expression}
    f_k \ge L\begin{cases}
        \eps C &\text{if $C \le\left(1/(\eps \sqrt{C})\right)$} \\
        \eps \sqrt{C \log\left(\tfrac{1}{\eps \sqrt{C}} \right)} &\text{if $\log\left(1/(\eps \sqrt{C})\right) \le C \le \frac{1}{2 \eps^2}$} \\
        1 &\text{if $\frac{1}{2\eps^2} \le C$}
    \end{cases} .
\end{align}
Combining this with~\Cref{eq:concentration-on-full}, the  second part of~\Cref{thm:jakobs} will follow. 
Now to prove~\Cref{eq:f-expression}, it suffices to consider the case $C\leq \gamma/\eps^2$ for a sufficiently small constant $\gamma$.
Indeed, otherwise, we apply a reduction similar to the one in Case 3 in the proof of~\Cref{thm:main0}. We will make this assumption in what follows. 
We also note that we can assume that $C$ is larger than $\tfrac{1}{4 L}$ because if $C \le \tfrac{1}{4 L}$ then we get that,
\begin{align*}
    \eps/4 \ge L\begin{cases}
        \eps C &\text{if $C \le \log\left(1/(\eps \sqrt{C})\right)$} \\
        \eps \sqrt{C \log\left(\tfrac{1}{\eps \sqrt{C}} \right)} &\text{if $\log\left(1/(\eps \sqrt{C})\right) \le C \le \frac{1}{2 \eps^2}$} \\
        1 &\text{if $\frac{1}{2\eps^2} \le C$}
    \end{cases}
    \; .
\end{align*}
We will argue that $f_k$ is always larger than $\eps/4$.
We know that $\sum_{j \in [m]} \indicator{X_k^{(j)} = 0} \le \frac{n}{C} = \frac{m}{1 + \eps}$ and $1 - f_k \le \frac{\sum_{j \in [m]} \indicator{X_k^{(j)} = 0}}{m} + m^{-1/2 + o(1)}$ with probability at least $1 - m^{-\gamma}$ by \Cref{thm:jakobs}.
Fixing such event give us that, 
\[
    f_k \ge 1 - \frac{1}{1 + \eps} - m^{-1/2 + o(1)} \ge \eps/2 - \eps/4 = \eps/4 \; .
\]
Here we have used that $\eps \le 1$ and that $1/\eps = m^{o(1)}$.
So if $C \le \tfrac{1}{4 L}$ then \cref{eq:f-expression} holds and we in from now assume that $C > \tfrac{1}{4 L}$.

We will prove the result by showing the stronger result that for all $1 \le d \le k$,
\begin{align}\label{eq:f-expression-induction}
    f_d \ge L\begin{cases}
        \eps C &\text{if $C \le \log\left(1/(\eps \sqrt{C})\right)$} \\
        \eps \sqrt{C \log\left(\tfrac{1}{\eps \sqrt{C}} \right)} &\text{if $\log\left(1/(\eps \sqrt{C})\right) \le C \le \frac{1}{2 \eps^2}$} \\
        1 &\text{if $\frac{1}{2\eps^2} \le C$}
    \end{cases} .
\end{align}
We will prove \cref{eq:f-expression-induction} by induction on $d$.

First we note that $f_1 \ge f_2 \ge \ldots \ge f_k \eps/4$.
We then get that $\ep{S_d} = \sum_{i = 1}^{d} \mu_i \le \frac{4d \mu}{\eps k} \le \frac{d \eps C}{2}$, where we have used that $k \ge 8/\eps^2$.
So for $d \le 1/\eps$ we get that $\sum_{i = 1}^{d} \mu_i \le C/2$ and Markov's inequality give us that,
\[
    1 - f_d = \prb{S_d \ge C} \le \frac{1}{2}
        \; .
\]
This shows that \cref{eq:f-expression-induction} holds for $d \le 1/\eps$ and since $\eps \le 1$ then it holds for $d = 1$ which will be our induction start.

Now the previous argument shows that when $\ep{S_d} \le C/2$ then \cref{eq:f-expression-induction} holds, so we can assume that $\sum_{i = 1}^{d} \mu_i > C/2$.
We note that $f_d \ge f_{d - 1}/2$ since,
\begin{align*}
    f_d = \prb{S_d < C}
        \ge \prb{S_{d - 1} < C \wedge \mathcal{Y}_d = 0}
        = f_{d - 1}(1 - \alpha_d)
        = f_{d - 1}\frac{1}{1 + \tfrac{\mu}{f_d k}}
    \; .
\end{align*}
This implies that $f_d \ge f_{d - 1} - \frac{\mu}{k} \ge f_{d - 1}/2$ where we use that $k \ge c/\eps^2$ for some sufficiently large constant $c$ and that $f_{d - 1} \ge L \eps^2 C$.
We now note that if $f_i \ge \tfrac{L}{2}\min\set{\eps \sqrt{C}, 1}$ then $\mu_i \le 1$, since,
\begin{align*}
    \mu_d
        = \frac{\mu}{k f_i}
        \le \frac{2 \eps^2 C}{c L \min\set{\eps \sqrt{C}, 1}}
        \le 1
    \; .
\end{align*}
Here we have used that $k \ge c/\eps^2$, that $c$ is sufficiently large, and that $1/\eps^2 \le C$.
We also note that 
\begin{align*}
    \ep{\abs{\mathcal{Y}_i - \mu_i}^3}
        &= \ep{\indicator{\mathcal{Y}_i \le \mu_i} (\mu_i - \mathcal{Y}_i)^3}
            + \ep{\indicator{\mathcal{Y}_i > \mu_i} (\mathcal{Y}_i - \mu_i)^3}
        \\&\le \mu_i^3
            + \alpha_i^{\floor{\mu_i}} \ep{\mathcal{Y}_i^3}
        \le \mu_i^3 + \ep{\mathcal{Y}_i^3}
    \; .
\end{align*}
In the first inequality we have used that the geometric distribution is memoryless.
Now simple calculations give that $\ep{\mathcal{Y}_i^3} \le 6\mu_i(1 + \mu_i)^2 \le 24 \mu_i$,
so we get that $\ep{\abs{\mathcal{Y}_i - \mu_i}^3} \le \mu_i^3 + 24 \mu_i \le 25 \mu_i$.

Depending on $\ep{S_d}$ we will prove different bounds on $f_d$.
Let $M > 0$ be a large constant.
We will prove that if $\ep{S_d} < C + M \sqrt{C}$ then $f_d \ge L$,
if $C + M\sqrt{C} \le \ep{S_d} < C + \tfrac{1}{4}C$ then $f_d \ge L \eps \sqrt{C \log\left(\tfrac{1}{\eps \sqrt{C}} \right)}$,
and if $C + \tfrac{1}{4}C \le \ep{S_d}$ then $f_d \ge L \eps C$.
This will prove the result since $\eps \sqrt{C \log\left(\tfrac{1}{\eps \sqrt{C}} \right)} \ge \eps C$ if and only if $C \ge \left(1/(\eps \sqrt{C})\right)$,
and since $C\leq \gamma/\eps^2$ for a small constant $\gamma$ then $1 \ge \min\set{\eps \sqrt{C \log\left(\tfrac{1}{\eps \sqrt{C}} \right)}, \eps C}$.

If $\ep{S_d} < C + M \sqrt{C}$ then we will show that $f_d \ge L$.
This will follow by a usage of the Berry-Esseen theorem.
\begin{theorem}[Berry Esseen theorem]
    Let $X_1, \ldots, X_d$ be independent random variables with $\ep{X_i} = 0$, $\ep{X_i^2} = \sigma_i^2 > 0$, and $\ep{\abs{X_i}^3} = \rho_i < \infty$.
    Let $F_d$ be the cumulative distribution function of $\sum_{i = 1}^d X_i$, let $\Phi$ be the cumulative distribution function of the standard normal distribution,
    and let $\sigma^2 = \sum_{i = 1}^d \sigma^2$.
    Then,
    \begin{align*}
        \sup_{x \in \R} \abs{ F_d(x) - \Phi(x / \sigma) } \le K_1 \frac{\sum_{i = 1}^d \rho_i}{\sigma^3}
        \; .
    \end{align*}
    where $K_1$ is a universal constant.
\end{theorem}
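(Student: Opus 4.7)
The plan is to follow the classical Fourier-analytic proof of Berry–Esseen, combining Taylor estimates of the characteristic function with Esseen's smoothing inequality. First I would normalize by replacing each $X_i$ with $X_i/\sigma$, reducing to the case $\sigma = 1$; writing $\rho := \sum_i \rho_i$ in the normalized system, the goal becomes $\sup_x |F_d(x) - \Phi(x)| = O(\rho)$. Let $\phi_i(t) = \E[e^{itX_i}]$ and $\phi(t) = \prod_i \phi_i(t)$, so $\phi$ is the characteristic function of $\sum_i X_i$ while $e^{-t^2/2}$ is that of the standard normal.

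The main conversion tool is Esseen's smoothing inequality, which I would invoke as a black box: for any $T > 0$,
\begin{align*}
\sup_x |F_d(x) - \Phi(x)| \leq \frac{1}{\pi}\int_{-T}^{T}\left|\frac{\phi(t) - e^{-t^2/2}}{t}\right|dt + \frac{c_0}{T},
\end{align*}
where $c_0$ depends only on $\sup_x|\Phi'(x)|$. The standard derivation approximates $F_d - \Phi$ by convolution against a compactly supported smoothing kernel, applies Fourier inversion, and absorbs the smoothing error into $c_0/T$; I would quote it without reproving it.

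The core step is to bound $|\phi(t) - e^{-t^2/2}|$ on the range $|t| \leq T := 1/(c_1 \rho)$ for a sufficiently large absolute constant $c_1$. A third-order Taylor expansion of $e^{itX_i}$ gives $\phi_i(t) = 1 - \sigma_i^2 t^2/2 + r_i(t)$ with $|r_i(t)| \leq \rho_i|t|^3/6$, and an analogous expansion for $e^{-\sigma_i^2 t^2/2}$ yields an $O(\sigma_i^4 t^4)$ remainder. Using Hölder's inequality $\sigma_i^3 \leq \rho_i$ together with the choice of $T$, both remainders are controlled by $O(\rho_i|t|^3)$. Telescoping the product
\begin{align*}
\phi(t) - e^{-t^2/2} = \sum_{i=1}^d \Big(\prod_{j<i}\phi_j(t)\Big)\,\big(\phi_i(t) - e^{-\sigma_i^2 t^2/2}\big)\,\Big(\prod_{j>i}e^{-\sigma_j^2 t^2/2}\Big),
\end{align*}
and using $|\phi_j(t)| \leq 1$ along with the identity $\prod_i e^{-\sigma_i^2 t^2/2} = e^{-t^2/2}$, yields the uniform bound $|\phi(t) - e^{-t^2/2}| \leq c_3 \rho |t|^3 e^{-t^2/4}$ on our range.

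Plugging this into the smoothing inequality and using the Gaussian moment $\int_\R t^2 e^{-t^2/4}\,dt = O(1)$ gives an integral contribution of $O(\rho)$, while the boundary term $c_0/T = c_0 c_1 \rho$ is also $O(\rho)$; together they yield the claim with a suitable absolute constant $K_1$. The main obstacle is the third paragraph: simultaneously keeping the telescoping tail factors uniformly bounded over $|t| \leq T$ and extracting enough Gaussian decay $e^{-\Omega(t^2)}$ to make the smoothing integral convergent. Both constraints are exactly what force the single scale $T \asymp 1/\rho$ and dictate the shape of the argument; everything else reduces to routine moment bookkeeping.
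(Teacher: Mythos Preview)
The paper does not prove this statement at all: it is quoted as the classical Berry--Esseen theorem and used as a black box in Section~\ref{sec:f-bound}. So there is no ``paper's own proof'' to compare against; the authors simply invoke the result.

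Your sketch is the standard Fourier-analytic proof (Esseen smoothing inequality plus third-order Taylor bounds on the characteristic functions), and it is essentially correct as an outline. One point worth sharpening: in the telescoping step you bound $\prod_{j<i}\phi_j(t)$ by $1$ and rely on $\prod_{j>i}e^{-\sigma_j^2 t^2/2}$ for the Gaussian decay, but when $i$ is close to $d$ the right-hand product can be nearly $1$, so you do not get $e^{-t^2/4}$ from that factor alone. The usual fix is to first show $|\phi_j(t)|\le e^{-\sigma_j^2 t^2/4}$ on the range $|t|\le T$ (which follows from the same Taylor expansion since $|1-\sigma_j^2 t^2/2 + r_j|\le e^{-\sigma_j^2 t^2/4}$ once $\sigma_j^2 t^2$ and $\rho_j|t|^3$ are small), so that \emph{both} products contribute Gaussian decay and their combination is always at most $e^{-t^2/4}$. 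You flagged exactly this tension in your last paragraph, so you clearly see the issue; just make the bound on $|\phi_j(t)|$ explicit rather than leaving it implicit.
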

Since $\ep{S_d} < C + M \sqrt{C}$ then $C \ge \ep{S_d} - M \sqrt{\ep{S_d}} \le \ep{S_d} - M \sqrt{\sum_{i = 1}^d \sigma_i^2}$ and we get that $f_d = \prb{S_d < C} \ge \prb{S_d < \ep{S_d} - M \sqrt{\sum_{i = 1}^d \sigma_i^2}}$.
Now the Berry-Esseen theorem give us that,
\begin{align*}
    f_d \ge \prb{S_d < \ep{S_d} - \sqrt{\sum_{i = 1}^d \sigma_i^2}}
        \ge \Phi(-M) - K_1 \frac{\sum_{i = 1}^d \ep{\abs{\mathcal{Y}_i - \mu_i}^3}}{\left(\sum_{i = 1}^d \sigma_i^2 \right)^{3/2}}
\end{align*}
We know that $\ep{\abs{\mathcal{Y}_i - \mu_i}^3} \le 25 \mu_i$ and that $\sigma_i^2 \ge \mu_i$ for all $1 \le i \le d$, so we get that,
\begin{align*}
    f_d
        \ge \Phi(-M) - 25 K_1 \frac{\ep{S_d}}{\ep{S_d}^{3/2}}
        \ge \Phi(-M) - 25 K_1 \sqrt{8 L}
        \ge L
    \; .
\end{align*}
Here we have used that $\ep{S_d} \ge C/2 \ge \frac{1}{8 L}$ and that $L$ is sufficiently small.

Now we consider the case where $\ep{S_d} \ge C + M \sqrt{C}$.
We define $\beta_d = \ep{S_d} - C$ and note that $\beta_d \ge M \sqrt{C}$.
We will need the following claim.
\begin{claim}\label{claim:log-concave}
    For all $1 \le d \le k$ and all integers $t \ge 1$ we have that,
    \[
        \frac{\prb{S_d = t + 1}}{\prb{S_d = t}} \le \frac{\prb{S_d = t}}{\prb{S_d = t - 1}} \; .
    \]
\end{claim}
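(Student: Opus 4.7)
The plan is to prove by induction on $d$ the stronger statement that the pmf $t \mapsto \prb{S_d = t}$ is log-concave on $\mathbb{Z}_{\ge 0}$, which is exactly equivalent to the claim's inequality $\prb{S_d = t}^2 \ge \prb{S_d = t-1}\prb{S_d = t+1}$. The base case $d = 1$ is immediate: since $\mathcal{Y}_1$ is geometric with parameter $\alpha_1$, we have $\prb{\mathcal{Y}_1 = t} = \alpha_1^t(1-\alpha_1)$, and a direct computation gives $\prb{\mathcal{Y}_1 = t}^2 = \prb{\mathcal{Y}_1 = t-1}\prb{\mathcal{Y}_1 = t+1}$ for $t \ge 1$, while the case $t = 0$ is trivial because $\prb{\mathcal{Y}_1 = -1} = 0$.

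For the inductive step, I would condition on whether $\mathcal{Y}_d = 0$ to obtain the one-step recursion
\[
  \prb{S_d = t} \;=\; \alpha_d \, \prb{S_d = t-1} \;+\; (1 - \alpha_d)\, \prb{S_{d-1} = t}.
\]
Writing $P_t = \prb{S_{d-1} = t}$ and $Q_t = \prb{S_d = t}$, a short algebraic manipulation using this recursion yields the identity
\[
  Q_j^2 - Q_{j-1}Q_{j+1} \;=\; (1 - \alpha_d)\bigl(Q_j P_j - Q_{j-1} P_{j+1}\bigr),
\]
so log-concavity of $Q$ at index $j$ is equivalent to the ratio inequality $Q_j/Q_{j-1} \ge P_{j+1}/P_j$. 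I then prove this latter inequality for every $j \ge 1$ by a secondary induction on $j$. The base case $j = 1$ follows from $Q_0 = (1-\alpha_d) P_0$ and $Q_1 = (1-\alpha_d)(\alpha_d P_0 + P_1)$, which give $Q_1/Q_0 = \alpha_d + P_1/P_0 \ge P_1/P_0 \ge P_2/P_1$ by the log-concavity of $P$. For the step, using the auxiliary identity $(1-\alpha_d) P_j - Q_j = -\alpha_d Q_{j-1}$, the desired inequality $Q_{j+1}/Q_j \ge P_{j+2}/P_{j+1}$ collapses to the conjunction of the inductive hypothesis $Q_j/Q_{j-1} \ge P_{j+1}/P_j$ and one more use of log-concavity of $P$, namely $P_{j+1}/P_j \ge P_{j+2}/P_{j+1}$.

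I do not expect a serious obstacle. All the content sits in the small algebraic identity above and the two nested inductions; conceptually, what is being established is the standard fact that log-concavity on $\mathbb{Z}_{\ge 0}$ is preserved under convolution with a geometric pmf. If preferred, one could bypass the direct calculation entirely by invoking the classical theorem that the convolution of two log-concave sequences supported on $\mathbb{Z}_{\ge 0}$ is again log-concave (equivalently, that P\'olya frequency sequences of order $2$ form a convolution semigroup), applied inductively to the independent geometric summands $\mathcal{Y}_1, \ldots, \mathcal{Y}_d$.
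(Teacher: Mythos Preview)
Your proposal is correct. Both arguments establish log-concavity of $t \mapsto \Pr[S_d = t]$, but by genuinely different routes.

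The paper expands $\Pr[S_d = t]$ as a sum over compositions $(a_1,\ldots,a_d)$ with $\sum a_i = t$, then regroups the products $\Pr[S_d = t+1]\Pr[S_d = t-1]$ and $\Pr[S_d = t]^2$ according to the total $s = a+b \in A_{2t}$. This reduces the claim to showing, for each fixed $s$, that the counting function $g_s(i) = |\{(a,b)\in A_i\times A_{2t-i}: a+b=s\}|$ satisfies $g_s(t+1)\le g_s(t)$; the paper then observes that $g_s$ is the convolution of the indicator sequences $h_j(x)=[0\le x\le s_j]$, hence log-concave and, by symmetry about $t$, maximized at $t$.

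You instead work directly with the pmfs via the memoryless recursion $Q_t = \alpha_d Q_{t-1} + (1-\alpha_d)P_t$ and a short nested induction. This is more elementary and self-contained: it never expands into compositions and bypasses the auxiliary $g_s$ entirely. Your closing remark---that one could simply invoke ``convolution of log-concave sequences on $\mathbb{Z}_{\ge 0}$ is log-concave'' applied to the geometric pmfs themselves---is in fact the most direct route, and is essentially what the paper's argument establishes indirectly through the $g_s$ detour. The paper's approach has the minor virtue of using nothing about the geometric distribution beyond its explicit pmf formula (no memorylessness), whereas your recursion exploits that structure; for the purpose at hand both are equally valid, and yours is shorter.
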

\begin{proof}
    We define the sets $A_t = \setbuilder{(a_1, \ldots, a_{d}) \in \N_0^d}{\sum_{i = 1}^d a_i = t}$ and get that
    \[
        \prb{\sum_{i = 1}^d \mathcal{Y}_i = t}
            = \sum_{(a_1, \ldots, a_{d}) \in A_t} \prod_{i = 1}^d\ \alpha_i^{a_i} (1 - \alpha_i) \; .
    \]
    We note that the result it is equivalent to showing that $\prb{X = t + 1}\prb{X = t - 1} \le \prb{X = t}^2$ which in turn is equivalent to 
    $$
        \sum_{(a, b) \in A_{t + 1} \times A_{t - 1}} \prod_{i = 1}^d \alpha_i^{a_i + b_i} (1 - \alpha_i)^2 \le \sum_{(a, b) \in A_t \times A_t} \prod_{i = 1}^d \alpha_i^{a_i + b_i} (1 - \alpha_i)^2.
    $$
    To see that this latter inequality holds, let $s \in A_{2t}$ and define the map $g_s: \N_0^d \to \N_0$ by $g_s(i)=|\{(a,b)\in A_{i}\times A_{2t-i}\mid a+b=s \}|$.
    We note that $g_s(i)>0$ exactly when $i\in \{0,1,\dots,2t\}$.
    The desired inequality is then equivalent to
    $$
        \sum_{s\in A_{2t}} g_s(t+1) \prod_{i = 1}^d \alpha_i^{s_i} \leq \sum_{s \in A_{2t}}g_s(t)\prod_{i = 1}^d\alpha_i^{s_i}.
    $$
    We will show that $g_s$ is log-concave for each $s\in A_{2t}$.
    As $g_s$ is clearly symmetric around $i=t$, it will in particular follow that $g_s(t+1)\leq g_s(t)$ which then leads to the desired inequality.
    To show that $g_s$ is log-concave, we note that it is a convolution of log-concave functions.
    Indeed, fix $s$ and define for $1 \le j \le d$, the map $h_j:\N_0 \to \N_0$ by $h_j(i)=1$ if $0\leq i\leq s_j$ and $h_j(i)=0$ otherwise.
    Then each $h_j$ is log-concave, and moreover, $g_s$ is the convolution $g_s=h_1*\cdots *h_k$, i.e.,
    $$
        g_s(i)=\sum_{\substack{a\in \Z^k \\  a_1+\cdots+a_d=i}}\prod_{j = 1}^d h_j(a_j).
    $$
    It is a standard fact that the convolution of log-concave functions is again log-concave, and the desired inequality follows.
\end{proof}

Now let $\ell_d \in \N$ be the minimal integer satisfying that $\prb{S_d = C - 1}/\prb{S_d = C - 1 - \ell_d} \ge 2$.
Now combining \Cref{claim:log-concave} with the definition of $\ell_d$ we get that,
\begin{align*}
    \prb{S_d < C}
        &= \sum_{t = 1}^{C} \prb{S_d = C - t}
        \ge \sum_{t = 1}^{\ell_d} \prb{S_d = C - t}
        \ge \frac{\ell_d}{2} \prb{S_d = C - 1} \; , \\
    \prb{S_d < C}
        &= \sum_{t = 1}^{C} \prb{S_d = C - t}
        \le \sum_{r = 0}^{\ceil{C/\ell_d}} \ell_d \prb{S_d = C - 1 - r\ell_d}
        \\&\le \ell_d \prb{S_d = C - 1} \sum_{r = 0}^{\infty} 2^{-r}
        = 2 \ell_d \prb{S_d = C - 1} \; , \\
    \ep{(C - S_d)[S_d < C]}
        &= \sum_{t = 1}^{C} t \prb{S_d = C - t}
        \le \sum_{r = 0}^{\ceil{C/\ell_d}} \left(r \ell_d^2 + \frac{\ell_d(\ell_d + 1)}{2} \right) \prb{S_d = C - 1 - r\ell_d}
        \\&\le \sum_{r = 0}^{\ceil{C/\ell_d}} \left(r \ell_d^2 + \frac{\ell_d(\ell_d + 1)}{2} \right) 2^{-r} \prb{S_d = C - 1}
        \le 4 \ell_d^2 \prb{S_d = C - 1} \; ,\\
    \ep{(C - S_d)[S_d < C]}
        &= \sum_{t = 1}^{C} t \prb{S_d = C - t}
        \ge \sum_{t = 1}^{\ell_d} t \prb{S_d = C - 1 - t}
        \ge \frac{\ell_d^2}{4} \prb{S_d = C - 1} \; .
\end{align*}
From this we get that $\frac{\ep{(C - S_d)[S_d < C]}}{8 \ell_d} \le f_d \le \frac{8\ep{(C - S_d)[S_d < C]}}{\ell_d}$.
Now it is clear that $\ep{(C - S_d)[S_d < C]} \ge \ep{(C - S_k)[S_k < C]}$ and we will argue that $\ep{(C - S_k)[S_k < C]} \ge \tfrac{\eps \mu}{2} \ge \tfrac{\eps C}{4}$.
This will imply that $f_d \ge \tfrac{\eps C}{32 \ell_d}$.
Using~\Cref{eq:concentration-on-full} we get that $\prb{S_d \ge C - t} \ge \frac{\sum_{j \in [m]} \indicator{X^{(j)}_k \le t}}{m} - m^{-1/2 + o(1)}$ for all $1 \le t \le C$ with probability $1 - m^{-\gamma}$,
and we know that $\sum_{t = 1}^{C} \frac{\sum_{j \in [m]} \indicator{X^{(j)}_k \le t}}{m} = \eps C$, so fixing such event give us that,
\begin{align*}
    \ep{(C - S_k)[S_k < C]}
        &= \sum_{t = 1}^{C} \prb{S_d \ge C - t}
        \\&\ge \sum_{t = 1}^{C} \left( \frac{\sum_{j \in [m]} \indicator{X^{(j)}_k \le t}}{m} - m^{-1/2 + o(1)} \right)
        \\&= \eps \mu - C m^{-1/2 + o(1)}
        \\&\ge \tfrac{\eps C}{2} - C m^{-1/2 + o(1)}
        \\&\ge \tfrac{\eps C}{4}
    \; .
\end{align*}
Here we have used that $\eps \le 1$ and that $1/\eps = m^{o(1)}$.

Now we just need to upper bound $\ell_d$.
By \Cref{claim:log-concave} we get that $\ell_d \le \ceil{\frac{\log(2)}{\log\left(\tfrac{\prb{S_d = C}}{\prb{S_d = C - 1}}\right)}}$ so we want to lower bound $\frac{\prb{S_d = C}}{\prb{S_d = C - 1}}$.
To do this we will define exponentially tilted variables $(V_i)_{1 \le i \le d}$.
Let $\lambda\in \R$ satisfying $\E[e^{\lambda S_d}]<\infty$ be a parameter which will be determined later.
We define $V_i$ by $\prb{V_i = t} = \frac{\prb{\mathcal{Y}_i = t} e^{\lambda t}}{\ep{e^{\lambda \mathcal{Y}_i}}}$ for $1 \le i \le d$.
Clearly, this is well-defined since $\sum_{t = 0}^\infty \prb{\mathcal{Y}_i = t} e^{\lambda t} = \ep{e^{\lambda \mathcal{Y}_i}}$.
As pointed out in ~\cite{aamand2021sums}, each $V_i$ is also geometric random variables (with parameter $\alpha_i e^{\lambda}$) and, 
\begin{align}\label{eq:exponential-twist}
    \prb{S_d = C - t} = \frac{\ep{e^{\lambda \sum_{i = 1}^d \mathcal{Y}_i}}}{e^{\lambda C}} e^{\lambda t} \prb{\sum_{i = 1}^d V_i = C - t} .
\end{align}
for all integers $t$.
Moreover, there is a unique $\lambda$ maximizing $\lambda C - \log \ep{e^{\lambda \sum_{i = 1}^{d} \mathcal{Y}_i}}$, and with this choice of $\lambda$, it holds that $ \sum_{i = 1}^d \ep{V_i} = C$.
It is easy to see that $\lambda < 0$ since $\beta > 0$.
We start by noticing that $\sum_{i = 1}^d \var{V_i} \ge \sum_{i = 1}^d \ep{V_i} = C$,
and that $\ep{\abs{V_i - \ep{V_i}}^3} \le 25 \ep{V_i}$ by the same reasoning that gave us that $\ep{\abs{\mathcal{Y}_i - \mu_i}^3} \le 25 \ep{\mathcal{Y}_i}$ since $V_i$ is geometrically distributed with parameter $\alpha_i e^{\lambda} < \alpha_i$.

We will also need the following lemma by Aamand et al.~\cite{aamand2021sums}.
We state a simplified version of their lemma which covers our use case.
\begin{lemma}\label{lem:fourier}
    Let $X_1, \ldots, X_d$ be independent geometric distributed random variables with $\var{X_i} = \sigma_i^2 > 0$ and $\ep{\abs{X_i - \ep{X_i}}^3} = \rho_i < \infty$,
    and let $\sigma^2 = \sum_{i = 1}^d \sigma_i^2$.
    Then for every $t$ where $\mu + t\sigma$ is an integer,
    \begin{align*}
        \abs{\prb{X = \mu + t\sigma} - \frac{1}{\sqrt{2\pi} \sigma} e^{-t^2/2} } \le K_2 \left(\frac{\sum_{i = 1}^d \rho_i}{\sigma^3} \right)^2
        \; .
    \end{align*}
    where $K_2$ is a universal constant.
\end{lemma}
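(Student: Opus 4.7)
The plan is to prove this local central limit theorem by the standard Fourier-analytic route, exploiting the explicit characteristic function of the geometric distribution. Write $X=\sum_{i=1}^{d}X_i$, $\mu=\sum_{i} \E[X_i]$, $L=\sum_{i} \rho_i$, and $\phi(s)=\prod_{i}\phi_i(s)$ where $\phi_i(s)=\E[e^{isX_i}]=(1-p_i)/(1-p_i e^{is})$ is the characteristic function of the geometric $X_i$ with parameter $p_i$. Fourier inversion for integer-valued $X$ gives
\[
  \Pr[X=k] \;=\; \frac{1}{2\pi}\int_{-\pi}^{\pi} e^{-isk}\phi(s)\,ds,
\]
while the Gaussian density admits the representation
\[
  \frac{1}{\sqrt{2\pi}\,\sigma}\,e^{-(k-\mu)^2/(2\sigma^2)} \;=\; \frac{1}{2\pi}\int_{\mathbb{R}} e^{-isk}\,e^{is\mu - s^2\sigma^2/2}\,ds.
\]
Subtracting these identities with $k=\mu+t\sigma$ reduces the claim to controlling $\bigl|\int_{-\pi}^{\pi}(\phi(s)-e^{is\mu-s^2\sigma^2/2})e^{-isk}\,ds\bigr|$, together with the outer Gaussian tail $\int_{|s|>\pi}e^{-s^2\sigma^2/2}\,ds$, uniformly in $t$.

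I would then split $[-\pi,\pi]$ into a central window $|s|\le \eta\sigma/L^{1/3}$ for a suitable small constant $\eta>0$, and its complement. In the central window, analyticity of $\log\phi_i$ at $0$ gives the Taylor expansion
\[
  \log\phi_i(s) \;=\; is\,\E[X_i] - \tfrac{1}{2}s^2\sigma_i^2 + r_i(s), \qquad |r_i(s)|\le C\rho_i|s|^3,
\]
and after summing and applying $|e^z-1|\le |z|e^{|z|}$ one obtains $\phi(s) = e^{is\mu-s^2\sigma^2/2}(1+E(s))$ with $|E(s)|\le CL|s|^3\exp(CL|s|^3)$. Changing variables $u=s\sigma$ and integrating the resulting error against $e^{-iut}/(2\pi\sigma)$ via Gaussian moment bounds of the form $\int|u|^{\alpha}e^{-u^2/2}\,du=O(1)$, together with a careful accounting of the contributions of consecutive Taylor terms, yields a central contribution of size $O((L/\sigma^3)^2)$ as claimed.

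The main obstacle, as usual in local limit theorems, is the complementary range, where the Taylor expansion of $\log\phi$ is no longer sharp and one needs genuine decay of $|\phi(s)|$. Here the explicit identity $|\phi_i(s)|^2=(1-p_i)^2/\bigl((1-p_i)^2+4p_i\sin^2(s/2)\bigr)$ gives $|\phi_i(s)|\le \exp(-c\,\sigma_i^2\sin^2(s/2))$, and hence $|\phi(s)|\le \exp(-c'\sigma^2 s^2)$ for all $|s|\le\pi$. Plugged into the complementary integral, this bound makes its contribution superexponentially small in $L^2/\sigma^6$ and therefore easily absorbed into $K_2(L/\sigma^3)^2$; the Gaussian tail $\int_{|s|>\pi}e^{-s^2\sigma^2/2}\,ds$ is similarly negligible when $\sigma$ is not too small, while the regime $\sigma=O(1)$ is covered trivially by enlarging $K_2$. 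Combining the central and complementary estimates yields the claimed pointwise bound.
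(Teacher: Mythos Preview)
The paper does not prove this lemma; it is quoted (in simplified form) from the companion paper~\cite{aamand2021sums}, so there is no in-paper argument to compare against. Your Fourier-analytic route is the natural one for local limit theorems, and the overall architecture---inversion, a central window handled via the Taylor expansion of $\log\phi$, and a complementary range handled via explicit decay of the geometric characteristic function---is sound. Two steps, however, do not go through as written.

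First, the central window does not directly deliver an error of size $O\bigl((L/\sigma^3)^2\bigr)$ where $L=\sum_i\rho_i$. Integrating the bound $|e^{R(s)}-1|\lesssim L|s|^3$ against the Gaussian gives only $O(L/\sigma^4)$, and the leading cubic contribution $-i\kappa_3 s^3/6$ does not cancel for general $t$, so your appeal to ``a careful accounting of the contributions of consecutive Taylor terms'' does not close the gap. What actually rescues the stated rate is a fact specific to geometric variables: the third cumulant satisfies $\kappa_{3,i}=\sigma_i^2(1+2\mu_i)\ge\sigma_i^2$, and since $\rho_i=\E|X_i-\mu_i|^3\ge|\kappa_{3,i}|$ one has $L\ge\sigma^2$, whence $L/\sigma^4\le (L/\sigma^3)^2$. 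With this observation the standard local-CLT rate already implies the claimed bound; without it the squared Berry--Esseen rate is simply not what the central-window estimate produces.

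Second, the pointwise inequality $|\phi_i(s)|\le\exp\bigl(-c\,\sigma_i^2\sin^2(s/2)\bigr)$ with a universal $c$ is false when $\sigma_i$ is large: from $|\phi_i(s)|^{-2}=1+4\sigma_i^2\sin^2(s/2)$ one only gets $-\log|\phi_i(s)|^2=\log\bigl(1+4\sigma_i^2\sin^2(s/2)\bigr)$, which grows logarithmically rather than linearly in $\sigma_i^2\sin^2(s/2)$. Consequently $|\phi(s)|\le\exp(-c'\sigma^2 s^2)$ on all of $[-\pi,\pi]$ need not hold. The complementary integral can still be controlled---for instance by splitting indices into those with $4\sigma_i^2\sin^2(s/2)\le 1$, where your inequality is valid, and the remaining indices, each of which contributes a factor at most $1/\sqrt{2}$ to $|\phi(s)|$---but the bound as you stated it is incorrect and needs this repair.
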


We will also need the following claim.
The proof is bit technical so we defer the proof till the end of the section. 
\begin{claim}\label{claim:rate-of-decrease}
    If $\beta \ge C$ then,
    \begin{align*}
        \frac{\prb{S_d = C}}{\prb{S_d = C - 1}} \ge e^{1/8}
            \; ,
    \end{align*}
    and if $\beta < C$ then,
    \begin{align*}
        \frac{\prb{S_d = C}}{\prb{S_d = C - 1}} \ge e^{\tfrac{1}{8} \beta/C}
        \; ,
    \end{align*}
    and
    \begin{align*}
        \frac{\prb{S_d = C - 1}}{\prb{S_d = C - 1 - \tfrac{1}{2}\ceil{\tfrac{C}{\beta}} }} < 2
        \; .
    \end{align*}
\end{claim}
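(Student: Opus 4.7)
The strategy is to apply the exponential tilt from~\eqref{eq:exponential-twist} to rewrite each ratio as the product of $e^{\lambda(t_1 - t_2)}$ and a ratio of probabilities for the tilted sum $\sum_i V_i$, which by construction has mean exactly $C$. The exponential factor carries the main behavior while the tilted ratio is close to $1$ by the local CLT of~\Cref{lem:fourier}.

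\textbf{Two-sided bound on $-\lambda$ and the local CLT.} Let $K(\lambda) = \sum_i \log \ep{e^{\lambda \mathcal{Y}_i}}$, so $K'(0) = C + \beta$, $K'(\lambda) = C$, and $K''(s) = \sum_i \ep{V_i^{(s)}}(1 + \ep{V_i^{(s)}})$. For $s \le 0$ each $\ep{V_i^{(s)}} \le \mu_i \le 1$ (as noted just before the claim), so
\begin{align*}
1 \;\le\; \frac{K''(s)}{K'(s)} \;=\; \frac{\sum_i \ep{V_i^{(s)}}(1 + \ep{V_i^{(s)}})}{\sum_i \ep{V_i^{(s)}}} \;\le\; 2 \; .
\end{align*}
Integrating $\tfrac{d}{ds}\log K'(s) = K''(s)/K'(s)$ over $[\lambda, 0]$ gives $\tfrac{1}{2}\log(1 + \beta/C) \le -\lambda \le \log(1 + \beta/C)$, i.e.\ $\sqrt{1 + \beta/C} \le e^{-\lambda} \le 1 + \beta/C$. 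At the same time $\sigma_V^2 = \sum_i \var{V_i} \in [C, 2C]$ and the bound $\ep{\abs{V_i - \ep{V_i}}^3} \le 25\,\ep{V_i}$ (established a few lines before the claim) give $\sum_i \rho_i/\sigma_V^3 = O(1/\sqrt{C})$, so~\Cref{lem:fourier} yields $\prb{\sum_i V_i = C - t} = (1 + O(1/\sqrt{C}))(2\pi\sigma_V^2)^{-1/2}\exp(-t^2/(2\sigma_V^2))$ uniformly for $|t| = O(\sqrt{C})$.

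\textbf{Assembly.} For the first two inequalities take $(t_1, t_2) = (0, 1)$, so the tilted ratio equals $\exp(1/(2\sigma_V^2))(1 + O(1/\sqrt{C})) \ge 1 - O(1/\sqrt{C})$; multiplying by $e^{-\lambda} \ge \sqrt{1 + \beta/C}$ gives at least $\sqrt{2}\,(1 - O(1/\sqrt{C})) \ge e^{1/8}$ when $\beta \ge C$, and at least $e^{\beta/(8C)}$ when $\beta < C$ (using $\sqrt{1 + v} \ge e^{v/8}$ for $v \in [0,1]$), provided $C$ exceeds an absolute constant, which the standing hypothesis $C > 1/(4L)$ guarantees by taking $L$ small. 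For the third inequality, take $(t_1, t_2) = (1, 1 + r)$ with $r = \tfrac{1}{2}\lceil C/\beta\rceil$: writing $v = \beta/C \in (0, 1)$, one has $-\lambda r \le \log(1 + v)\cdot(1 + v)/(2v) \le \log 2$, and the tilted correction $\exp((r^2 + 2r)/(2\sigma_V^2))$ is $1 + O(1/M^2)$ because $r = O(\sqrt{C}/M)$ under $\beta \ge M\sqrt{C}$; the product is bounded by the required constant for $M$ large and $L$ small.

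\textbf{Main obstacle.} The sharp point is the third inequality as $\beta \to C^{-}$ with $r = 1$, where $-\lambda r$ approaches $\log 2$ and the worst-case upper bound $e^{-\lambda r}$ approaches $2$. Strict inequality is recovered by using the lower bound $-\lambda \ge \tfrac{1}{2}\log(1 + \beta/C)$, which is strictly better than the matching upper bound off the degenerate case $\min_i \mu_i = 0$, together with absorbing the $O(1/M^2) + O(1/\sqrt{C})$ correction; alternatively, the downstream estimate $\ell_d = \Theta(C/\beta)$ only requires that the ratio be bounded by any absolute constant, so the numerical value $2$ is not critical. The secondary difficulty --- that~\Cref{lem:fourier} is sufficiently accurate only when $C$ is large --- is handled by insisting that $L$ be chosen small enough to push $C$ above any fixed threshold needed in the CLT estimate.
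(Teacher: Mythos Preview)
Your overall strategy coincides with the paper's: both use the exponential tilt~\eqref{eq:exponential-twist} to factor each ratio as $e^{-\lambda(\cdot)}$ times a ratio of point probabilities of $\sum_i V_i$, and both control the latter via the local CLT of \Cref{lem:fourier}. The genuine difference is in how $\lambda$ is pinned down. The paper works with the function $h(x)=\sum_i \alpha_i e^{x}/(1-\alpha_i e^{x})$ and uses convexity together with $h'(0)\le 2(C+\beta)$ to obtain $-\lambda \ge \tfrac{1}{4}\min\{1,\beta/C\}$ for the first two parts, and $h'\!\ge h$ plus convexity to get $-\lambda \le \beta/C$ for the third. Your route---integrating the inequality $1\le K''(s)/K'(s)\le 2$ over $[\lambda,0]$---is cleaner and actually sharper, yielding $\tfrac12\log(1+\beta/C)\le -\lambda\le \log(1+\beta/C)$. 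For the first two inequalities this stronger lower bound on $-\lambda$ makes the assembly go through exactly as you outline, and the use of $\beta\ge M\sqrt{C}$ to absorb the $O(1/\sqrt{C})$ local--CLT error matches the paper's mechanism.

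The third inequality is where there is a real issue in your write-up. Your proposed remedy in the ``Main obstacle'' paragraph---invoking the \emph{lower} bound $-\lambda\ge\tfrac12\log(1+\beta/C)$---points the wrong way: for the upper estimate on $\Pr[S_d=C-1]/\Pr[S_d=C-1-r]$ one needs an \emph{upper} bound on $-\lambda$, so the lower bound contributes nothing. Your fallback (``the numerical value $2$ is not critical'') is correct for the downstream application but does not prove the claim as stated. It is worth noting, however, that the paper's own proof of this part in fact uses $r=\tfrac14\lceil C/\beta\rceil$ rather than the $\tfrac12\lceil C/\beta\rceil$ appearing in the claim; with $\tfrac14$ and the paper's cruder bound $-\lambda\le \beta/C$ one gets $e^{-\lambda r}\le e^{1/2}$ and, together with the tilted ratio $\le e^{1/6}$, the product is $\le e^{2/3}<2$. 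With the $\tfrac14$ constant your sharper bound $-\lambda\le\log(1+\beta/C)$ would make the margin even more comfortable. So the obstacle you flag is an artefact of a discrepancy between the claim's constant and the one actually used in the paper's argument, not a defect of your method.
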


If $\beta \ge \tfrac{1}{4} C$ then using \Cref{claim:rate-of-decrease} we get that $\ell_d \le \ceil{32\log(2)}$ which implies that $f_d \ge \frac{\eps C}{32\ceil{32\log(2)}} \ge L \eps C$.
So now we just need to focus on the case where $\beta < \tfrac{1}{4} C$.
We use \Cref{claim:rate-of-decrease} to get that $\ell_d \le \ceil{8 \log(2) \tfrac{C}{\beta}} \le 7 \tfrac{C}{\beta}$ which implies that $f_d \ge \frac{\eps \beta}{224}$.

We now just need to lower bound $\beta$.
From \Cref{claim:rate-of-decrease} we know that $\ell_d > \tfrac{1}{4}\ceil{\tfrac{C}{\beta}} \ge \tfrac{C}{4 \beta} > 1$,
so $\prb{S_d = C} \le \prb{S_d = C - 1}/2$ and we get that $\ep{(C - S_d)[S_d < C]} \ge \frac{\ell_d^2}{8} \prb{S_d = C}$.
We will argue that $\ep{(C - S_d)[S_d < C]} \le 2 \eps C$.
Using~\Cref{eq:concentration-on-full} we get that $\prb{S_d \ge C - t} \le \frac{\sum_{j \in [m]} \indicator{X^{(j)}_k \le t}}{m} + m^{-1/2 + o(1)}$ for all $1 \le t \le C$ with probability $1 - m^{-\gamma}$,
and we know that $\sum_{t = 1}^{C} \frac{\sum_{j \in [m]} \indicator{X^{(j)}_k \le t}}{m} = \eps C$, so fixing such event give us that,
\begin{align*}
    \ep{(C - S_k)[S_k < C]}
        &= \sum_{t = 1}^{C} \prb{S_d \ge C - t}
        \\&\le \sum_{t = 1}^{C} \left( \frac{\sum_{j \in [m]} \indicator{X^{(j)}_k \le t}}{m} + m^{-1/2 + o(1)} \right)
        \\&= \eps \mu + C m^{-1/2 + o(1)}
        \\&\le \eps C - C m^{-1/2 + o(1)}
        \\&\le 2 \eps C
    \; .
\end{align*}
Here we have used that $1/\eps = m^{o(1)}$.
Combing it all we have that,
\begin{align*}
    \prb{S_d = C}
        \le 16 \frac{\eps C}{\ell_d^2}
        \le 256 \frac{\eps \beta^2}{C}
    \; .
\end{align*}

We will now prove that
\begin{align}\label{eq:pointprob}
    \prb{X = C} \ge \frac{\exp\left(-\tfrac{\beta^2}{C} \right)}{4 \sqrt{C}}
        \; .
\end{align}
This will lead to the desired result. Indeed, combining with the bounds above, we then obtain that 
$$
\frac{\exp\left(-\tfrac{\beta^2}{C} \right)}{\sqrt{C}} \le \frac{\eps\beta^2}{1024 C} \; ,
$$
or $\Delta e^{\Delta} \ge \frac{1}{1024 \eps\sqrt{C}}$, where we have put $\Delta=\beta^2/C$.
Then $\Delta \ge \tfrac{1}{1024} \log\left( \frac{1}{\eps\sqrt{C}}\right)$, so that $\beta \ge \tfrac{1}{32} \sqrt{C\log\left( \frac{1}{\eps\sqrt{C}} \right)}$,
and finally 
\[
    f_d = \prb{S_d < C} \ge \tfrac{1}{32} \eps \sqrt{C\log\left( \frac{1}{\eps\sqrt{C}} \right)} \ge L \eps \sqrt{C\log\left( \frac{1}{\eps\sqrt{C}} \right)}
        \; ,
\]
as desired.

We thus turn to prove~\cref{eq:pointprob}.
By \cref{eq:exponential-twist} we have that,
\begin{align}\label{eq:exponential-twist-full}
    \prb{S_d = C} = \frac{\ep{e^{\lambda \sum_{i = 1}^{d} \mathcal{Y}_i}}}{e^{\lambda C}} \prb{\sum_{i \in [k]} V_i = C}
        \; .
\end{align}
We start by focusing on bounding $\lambda C - \log \ep{e^{\lambda \sum_{i = 1}^d \mathcal{Y}_i}}$.
First write $\psi_d(p) = \log \ep{e^{p \sum_{i = 1}^d \mathcal{Y}_i}}$ and define the function $g_d(t) = \sup_{p}(p t - \psi_d(p))$ which is the Fenchel-Legendre transform of $\psi_d(p)$.
By our choice of $\lambda$, $g_d(C) = \lambda C - \log \ep{e^{\lambda \sum_{i = 1}^d \mathcal{Y}_i}}$.
It is easy to check that $g_d(C + \beta) = 0$ and $g_d'(C + \beta) = 0$, and a standard result on the Fenchel-Legendre transformations is that $g_d''(t) = \frac{1}{\psi_d''(p_d(t))}$ where $p_d(t)$ is the unique number such that $g_d(t) = p_d(t) t - \psi_d(p_d(t))$.
Now by Taylor's expansion formula we have that
\begin{align}\label{eq:exp-Taylor}
    g_d(C) \le \left(\sup_{C \le t \le C + \beta} g_d''(t) \right) \frac{\beta^2}{2}
        = \left(\frac{1}{\inf_{C \le t \le C + \beta} \psi_d''(p_d(t))} \right) \frac{\beta^2}{2}
\end{align}
We have that $\psi_d'(p)=\sum_{i = 1}^d \frac{\ep{\mathcal{Y}_i e^{pY_i}}}{\ep{e^{p\mathcal{Y}_i}}}$ and 
\begin{align*}
    \psi_d''(p) = \sum_{i = 1}^d \left( \frac{\ep{\mathcal{Y}_i^2 e^{p \mathcal{Y}_i}}}{\ep{e^{p \mathcal{Y}_i}}} - \left(\frac{\ep{\mathcal{Y}_i e^{p \mathcal{Y}_i}}}{\ep{e^{p \mathcal{Y}_i}}} \right)^2 \right)
        \ge \sum_{i = 1}^d \frac{\ep{\mathcal{Y}_i e^{p \mathcal{Y}_i}}}{\ep{e^{p \mathcal{Y}_i}}}
        = \psi_d'(p).
\end{align*}
Now, $p_d(t) \ge \lambda$ when $C \le t \le C + \beta$.
This implies that $\psi_d''(p(t)) \ge \psi_d'(\lambda) = C$ when $C \le t \le C + \beta$.
Combining this with \cref{eq:exponential-twist-full} and \cref{eq:exp-Taylor} we get that
\begin{align}
    \prb{S_d = C}
        \ge e^{- \tfrac{\beta^2}{2 C}} \prb{\sum_{i = 1}^d V_i = C}
        \ge e^{- \tfrac{\beta^2}{C}} \prb{\sum_{i = 1}^d V_i = C}
    \; .
\end{align}
To complete the proof of~\cref{eq:pointprob}, it thus suffices to show that $ \prb{\sum_{i \in [k]} V_i = C}= \tfrac{1}{4\sqrt{C}}$.
We use \Cref{lem:fourier} to get that,
\begin{align*}
    \prb{\sum_{i = 1}^d V_i = C} \ge \frac{1}{\sqrt{2 \pi \sum_{i = 1}^d \var{V_i}}} - K_2 \left( \frac{\sum_{i = 1}^d \ep{\abs{V_i - \ep{V_i}}^3}}{\left(\sum_{i = 1}^d \var{V_i}\right)^{3/2}} \right)^2
\end{align*}
Now we use that $\ep{V_i} \le \var{V_i} \le 2 \ep{V_i}$, $\abs{V_i - \ep{V_i}}^3 \le 25 \ep{V_i}$, and $\sum_{i = 1}^d \ep{V_i} = C$ to get that,
\begin{align*}
    \prb{\sum_{i = 1}^d V_i = C} 
        \ge \frac{1}{\sqrt{4 \pi C}} - 25^2 K_2 \frac{1}{C^2} 
\end{align*}
We know that $C \ge \tfrac{1}{4L}$ so if we choose $L$ sufficiently small we get that,
\begin{align*}
    \prb{\sum_{i = 1}^d V_i = C}
        \ge \frac{1}{4\sqrt{C}}
    \; .
\end{align*}
This leads to the desired bound.

We finish the section by proving \Cref{claim:rate-of-decrease}.
\begin{proof}[Proof of \Cref{claim:rate-of-decrease}]
    We start by using \cref{eq:exponential-twist} to get that,
    \begin{align*}
        \frac{\prb{S_d = C}}{\prb{S_d = C - 1}}
            = e^{-\lambda} \frac{\prb{\sum_{i = 1}^d V_i = C}}{\prb{\sum_{i = 1}^d V_i = C - 1}}
    \end{align*}
    We want to argue that $\frac{\prb{\sum_{i = 1}^d V_i = C}}{\prb{\sum_{i = 1}^d V_i = C - 1}} \ge \max\set{e^{-1/8}, e^{-\tfrac{1}{8} \beta/C}}$.
    First we use \Cref{lem:fourier} to get that,
    \begin{align*}
        \frac{\prb{\sum_{i = 1}^d V_i = C}}{\prb{\sum_{i = 1}^d V_i = C - 1}}
            \ge \frac{\tfrac{1}{\sqrt{2\pi \sum_{i = 1}^d \var{V_i}}} - K_2 \left( \tfrac{\sum_{i = 1}^d \ep{\abs{V_i - \ep{V_i}}^3}}{\left(\sum_{i = 1}^d \var{V_i}\right)^{3/2}} \right)^2}{\tfrac{1}{\sqrt{2\pi \sum_{i = 1}^d \var{V_i}}} e^{-1/(2\sum_{i = 1}^d \var{V_i})} + K_2 \left( \tfrac{\sum_{i = 1}^d \ep{\abs{V_i - \ep{V_i}}^3}}{\left(\sum_{i = 1}^d \var{V_i}\right)^{3/2}} \right)^2}
    \end{align*}
    Now we use that $\ep{V_i} \le \var{V_i} \le 2 \ep{V_i}$, $\abs{V_i - \ep{V_i}}^3 \le 25 \ep{V_i}$, and $\sum_{i = 1}^d \ep{V_i} = C$ to get that,
    \begin{align*}
        \frac{\tfrac{1}{\sqrt{2\pi \sum_{i = 1}^d \var{V_i}}} - K_2 \left( \tfrac{\sum_{i = 1}^d \ep{\abs{V_i - \ep{V_i}}^3}}{\left(\sum_{i = 1}^d \var{V_i}\right)^{3/2}} \right)^2}{\tfrac{1}{\sqrt{2\pi \sum_{i = 1}^d \var{V_i}}} e^{-1/(2\sum_{i = 1}^d \var{V_i})} + K_2 \left( \tfrac{\sum_{i = 1}^d \ep{\abs{V_i - \ep{V_i}}^3}}{\left(\sum_{i = 1}^d \var{V_i}\right)^{3/2}} \right)^2}
            \ge \frac{1 - 25^2 K_2 \sqrt{8 \pi \tfrac{1}{C} } }{e^{-1/(4C)} + 25^2 K_2 \sqrt{8 \pi \tfrac{1}{C} }}
    \end{align*}
    Using that $e^{-1/(4c)} \le 1 - \tfrac{1}{8c}$ we the get that,
    \begin{align*}
        \frac{1 - 25^2 K_2 \sqrt{8 \pi \tfrac{1}{C} } }{e^{-1/(4C)} + 25^2 K_2  \sqrt{8 \pi \tfrac{1}{C} }}
            \ge \frac{1 - 25^2 K_2 \sqrt{8 \pi \tfrac{1}{C} } }{1 - \tfrac{1}{8c} + 25^2 K_2  \sqrt{8 \pi \tfrac{1}{C} }}
            = 1 - \frac{2 \cdot 25^2 K_2 \sqrt{8 \pi \tfrac{1}{C}} - \tfrac{1}{8c}}{1 - \tfrac{1}{8c} + 25^2 K_2 \sqrt{8 \pi \tfrac{1}{C} }}
    \end{align*}
    We now that $C \ge \tfrac{1}{4L}$ so choosing $L$ sufficiently small it holds that
    \begin{align*}
        \frac{2 \cdot 25^2 K_2  \sqrt{8 \pi \tfrac{1}{C}} - \tfrac{1}{8c}}{1 - \tfrac{1}{8c} + 25^2 K_2  \sqrt{8 \pi \tfrac{1}{C} }}
            \le \frac{2 \cdot 25^2 K_2 \sqrt{8 \pi}}{\sqrt{C}} 
    \end{align*}
    This implies that,
    \begin{align*}
        \frac{\prb{\sum_{i = 1}^d V_i = C}}{\prb{\sum_{i = 1}^d V_i = C - 1}}
            \ge 1 - \frac{2 \cdot 25^2 K_2 \sqrt{8 \pi}}{\sqrt{C}}
    \end{align*}
    Clearly,  $1 - \frac{2 \cdot 25^2 K_2 \sqrt{8 \pi}}{\sqrt{C}} \ge e^{-1/8}$ by choosing $L$ small enough.
    We also note that,
    \begin{align*}
        1 - \frac{2 \cdot 25^2 K_2 \sqrt{8 \pi}}{\sqrt{C}}
            \ge e^{-\tfrac{4 \cdot 25^2 K_2 \sqrt{8 \pi}}{\sqrt{C}}}
            \ge e^{-\tfrac{M}{8 \sqrt{C}}}
            \ge e^{-\tfrac{\beta}{8 C}}
        \; .
    \end{align*}
    By choosing $M$ large enough.
    The last inequality follows since $\beta \ge M \sqrt{C}$.

    We now have to bound $\lambda$.
    We define the function
    \[
        h(x) = \sum_{i = 1}^d \alpha_i e^x (1 - \alpha_i e^x)^{-1}
            \; .
    \]
    We note that $h(0) = \sum_{i = 1}^d \ep{\mathcal{Y}_i} = C + \beta$.
    We take the derivative of $h$ twice and get that,
    \begin{align*}
        h'(x) &= \sum_{i = 1}^d \alpha_i e^x (1 - \alpha_i e^x)^{-2} \\
        h''(x) &= \sum_{i = 1}^d a e^x (1 + e^x) (1 - \alpha_i e^x)^{-3}
    \end{align*}
    We note that $h'(x) \ge 0$ and $h''(x) \ge 0$ for all $x$ so $h$ is a monotonically increasing convex function,
    and $h'(0) = \sum_{i = 1}^d \var{\mathcal{Y}_i} \le \sum_{i = 1}^d 2\mu_i = 2(C + \beta)$.
    
    If $\beta \ge C$ then again using that $h$ is convex we get that,
    \begin{align*}
        h(- \tfrac{1}{4}) 
            &\ge h(0) - \tfrac{1}{4}  h'(0)
            \ge C + \beta - 2 \tfrac{1}{4} (C + \beta)
            \ge C
        \; .
    \end{align*}
    Since $h$ is increasing then it implies that $\lambda \le - \tfrac{1}{4} $ and we get that,
    \begin{align*}
        \frac{\prb{S_d = C}}{\prb{S_d = C - 1}}
            = e^{-\lambda} \frac{\prb{\sum_{i = 1}^d V_i = C}}{\prb{\sum_{i = 1}^d V_i = C - 1}}
            \ge \ e^{\tfrac{1}{4} } e^{-\tfrac{1}{8}}
            = e^{-\tfrac{1}{8}}
        \; .
    \end{align*}

    If $\beta < C$ then using that $h$ is convex we get that,
    \begin{align*}
        h(- \tfrac{1}{4} \beta/C) 
            &\ge h(0) - \tfrac{1}{4} \beta/C h'(0)
            \ge C + \beta - 2 \tfrac{1}{4} \beta/C  (C + \beta)
            \\&= C + (1 - 2 \tfrac{1}{4} - 2 \tfrac{1}{4} \beta/C )\beta
            \ge C + (1 - 4 \tfrac{1}{4})\beta
            \ge C
        \; .
    \end{align*}
    Since $h$ is increasing then it implies that $\lambda \le - \tfrac{1}{4} \beta/C$ and we get that,
    \begin{align*}
        \frac{\prb{S_d = C}}{\prb{S_d = C - 1}}
            = e^{-\lambda} \frac{\prb{\sum_{i = 1}^d V_i = C}}{\prb{\sum_{i = 1}^d V_i = C - 1}}
            \ge e^{\tfrac{1}{4} \beta/C} e^{-\tfrac{1}{8} \beta/C}
            = e^{\tfrac{1}{8} \beta/C}
        \; .
    \end{align*}

    We now focus on the upper bound.
    We use \cref{eq:exponential-twist} to get that,
    \begin{align*}
        \frac{\prb{S_d = C - 1}}{\prb{S_d = C - 1 - \tfrac{1}{4}\ceil{\tfrac{C}{\beta}}}}
            = e^{-\lambda \tfrac{1}{4}\ceil{\tfrac{C}{\beta}}} \frac{\prb{\sum_{i = 1}^d V_i = C - 1}}{\prb{\sum_{i = 1}^d V_i = C - 1 - \tfrac{1}{4}\ceil{\tfrac{C}{\beta}}}}
    \end{align*}
    We start by lower bounding $\lambda$.
    We first note that $h'(x) \le h(x)$ for all $x$.
    Using that $h$ is convex we get that,
    \begin{align*}
        C + \beta = h(0)
            \ge h(- \beta/C) + \beta/C h'(-\beta/C)
            \ge \frac{C + \beta}{C} h(-\beta/C)
        \; .
    \end{align*}
    This implies that $h(-\beta/C) \le C$ and $\lambda \ge - \beta/C$.
    Now we will bound $\frac{\prb{\sum_{i = 1}^d V_i = C - 1}}{\prb{\sum_{i = 1}^d V_i = C - 1 - \ceil{\tfrac{C}{\beta}}}}$.
    We will again use \Cref{lem:fourier}.
    \begin{align*}
        \frac{\prb{\sum_{i = 1}^d V_i = C - 1}}{\prb{\sum_{i = 1}^d V_i = C - 1 - \tfrac{1}{4}\ceil{\tfrac{C}{\beta}}}}
            &\le \frac{\tfrac{1}{\sqrt{2\pi \sum_{i = 1}^d \var{V_i}}} e^{-1/(2\sum_{i = 1}^d \var{V_i})} + K_2 \left( \tfrac{\sum_{i = 1}^d \ep{\abs{V_i - \ep{V_i}}^3}}{\left(\sum_{i = 1}^d \var{V_i}\right)^{3/2}} \right)^2}
                {\tfrac{1}{\sqrt{2\pi \sum_{i = 1}^d \var{V_i}}} e^{-(1 + \tfrac{1}{4}\ceil{\tfrac{C}{\beta}})^2/(2\sum_{i = 1}^d \var{V_i})} - K_2 \left( \tfrac{\sum_{i = 1}^d \ep{\abs{V_i - \ep{V_i}}^3}}{\left(\sum_{i = 1}^d \var{V_i}\right)^{3/2}} \right)^2}
            \\&\le \frac{\tfrac{1}{\sqrt{2\pi C}} e^{-1/(2\sum_{i = 1}^d \var{V_i})} + 25^2 K_2 \tfrac{1}{C}}
                {\tfrac{1}{\sqrt{2\pi C}} e^{-(1 + \tfrac{1}{4}\ceil{\tfrac{C}{\beta}})^2/(2\sum_{i = 1}^d \var{V_i})} - 25^2 K_2 \tfrac{1}{C}}
    \end{align*}
    Now we note that since $\beta \ge M \sqrt{C}$ then we get that $(1 + \tfrac{1}{4}\ceil{\tfrac{C}{\beta}})^2/(2\sum_{i = 1}^d \var{V_i}) \le \tfrac{1}{12}$.
    We then get that,
    \begin{align*}
        \frac{\prb{\sum_{i = 1}^d V_i = C - 1}}{\prb{\sum_{i = 1}^d V_i = C - 1 - \tfrac{1}{4}\ceil{\tfrac{C}{\beta}}}}
            &\le \frac{\tfrac{1}{\sqrt{2\pi C}} + 25^2 K_2 \tfrac{1}{C}}{\tfrac{1}{\sqrt{2\pi C}} e^{-1/12} - 25^2 K_2 \tfrac{1}{C}}
            \le e^{1/6}
    \end{align*}
    The last inequality follows by $C \ge \tfrac{1}{4L}$ and choosing $L$ small enough.
    We then get that,
    \begin{align*}
        \frac{\prb{S_d = C - 1}}{\prb{S_d = C - 1 - \tfrac{1}{4}\ceil{\tfrac{C}{\beta}}}}
            &= e^{-\lambda \tfrac{1}{4}\ceil{\tfrac{C}{\beta}}} \frac{\prb{\sum_{i = 1}^d V_i = C - 1}}{\prb{\sum_{i = 1}^d V_i = C - 1 - \tfrac{1}{4}\ceil{\tfrac{C}{\beta}}}}
            \\&\le e^{\tfrac{\beta}{C} \tfrac{1}{4}\ceil{\tfrac{C}{\beta}}} e^{1/6}
            \\&\le e^{1/2 + 1/6}
            \\&< 2
    \end{align*}
\end{proof}

\section{The Number of Bins Visited During an Insertion}\label{sec:insertion}
This section is dedicated to proving the part of~\Cref{thm:main} concerning insertions, which we restate below.
\begin{theorem}\label{thm:main-insertion1}
Let $\balls,\bins\in \N$ and $0<\eps<1$ with $1/\eps=n^{o(1)}$. Let $C=(1+\eps)\balls/\bins$. Suppose we insert $\balls$ balls into $\bins$ bins, each of capacity $C$, using consistent hashing with bounded loads and virtual bins having $k$ levels where $k=c/\eps^2$ for $c$ a sufficiently large universal constant. The expected number of bins visited during an insertion of a ball is $O(1/f)$.
\end{theorem}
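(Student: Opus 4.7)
The plan is to establish the bound in two steps: first, bound the expected length of the run at a single level by $O(1/f)$ using the concentration results of Theorem~\ref{thm:jakobs}, and second, combine these per-level bounds to get the $O(1/f)$ total. The per-level argument mirrors the proof of Theorem~\ref{thm:main_combinatorial}, but with the cruder $1/2^i$ factor replaced by the sharper $f$ coming from the refined analysis of Section~\ref{sec:f-properties}.

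First, I condition on the high-probability event of Theorem~\ref{thm:jakobs} that $|X_{i,j}-\mu_{i,j}|\le m^{1/2+o(1)}$ for every $(i,j)$; this event holds with probability $1-n^{-\gamma}$ and so contributes only $o(1)$ to the expected number of bins visited. Under this conditioning, empirical quantities such as the total remaining capacity of bins hashing to an interval deviate from their expectations by at most $m^{1/2+o(1)}$, which can be treated as a negligible perturbation for the windows of interest.

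Next, I fix a level $i$ and a bin $b$ at that level (possibly chosen as a function of the hashing at earlier levels) and bound $\Pr[R_i\ge s]$, where $R_i$ is the run at level $i$ containing $b$. Following the four-event decomposition in the proof of Theorem~\ref{thm:main_combinatorial}, I split a window of size $\Theta(s/m)$ around $b$ into sub-intervals $J_1,\dots,J_{54}$ and argue that $R_i\ge s+1$ forces at least one of: too few virtual bins in a wide neighborhood, too many in a narrow one, too little total remaining capacity in some $J_\ell$, or too many balls hashing to $J_\ell$ at level $i$. The first two yield $\exp(-\Omega(s))$ by standard Chernoff; the third is handled via Theorem~\ref{thm:jakobs}, which controls the dependencies introduced by earlier-level hashing on the remaining capacities. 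The decisive fourth event is a large-deviation statement on a sum of independent Bernoulli variables whose mean is $\Theta(s\mu/k)$ but which must exceed $\Theta(s\overline{C}^{(i)})$, so that the required excess is $\Theta(s\eps\mu)\gg s\mu/k$. In this heavy-tailed regime, the variance-based Chernoff bound with its logarithmic refinement yields $\Pr[B_4]\le \exp(-\Omega(sf_i))$, where $f_i\ge\Omega(f)$ is the fraction of non-full bins at level $i$ as computed in Section~\ref{sec:f-bound}. Summing over $s$ gives $\E[R_i]=O(1/f)$.

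The main obstacle is combining these per-level bounds. Naively summing $\E[R_i]$ over all $k=c/\eps^2$ levels gives $O(k/f)$, which is too weak by the factor $k$. To avoid this, I will exploit that the insertion only visits a level $j$ strictly above its starting level when the recursion cascades there by popping a ball whose own level is $j$, and that the probability of each cascading step is controlled by the local distribution of balls over levels within the popped super bin. A careful cascade analysis, combined with the per-level run bound and the fact that $f_i=\Theta(1)$ for the early levels (so their runs are already $O(1)$ in expectation), should yield a geometric-series-type bound of $O(1/f)$ for the total expected work. Making this cascade analysis rigorous while tracking the dependencies between levels is the technically hardest part of the proof.
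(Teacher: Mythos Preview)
Your per-level run bound is essentially the content of the paper's Theorem~\ref{thm:worstcase}, and your sketch of it is roughly right (though the paper gets $f$ into the exponent via the count of \emph{far-from-full} bins rather than via total remaining capacity; your $B_3/B_4$ split would need adjusting). But you correctly identify that the real problem is the cross-level combination, and here your proposal has a genuine gap: you say ``a careful cascade analysis\ldots should yield a geometric-series-type bound'' and that making this rigorous ``is the technically hardest part of the proof,'' without actually supplying the argument. The difficulty is real. Even granting $\E[R_i]=O(1/f)$ at each level, the insertion can jump to a higher level every time it reaches a bin that is non-full at the current level but full overall, and nothing in your outline bounds the number of such jumps; the observation that early levels have $f_i=\Theta(1)$ does not help once the cascade has reached the late levels where $f_i=\Theta(f)$, since it can bounce among those for many steps.

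The paper's approach is structurally different from your cascade idea. It partitions the insertion into \emph{epochs}: in each epoch you first visit $\lceil 1/f\rceil$ bins (possibly across levels), then finish the current level. The key claim is that (i) during the first $\lceil 1/f\rceil$ bins of an epoch the insertion terminates with constant probability, and (ii) conditioned on not terminating, the remaining work at the current level is $O(1/f)$ in expectation. Part (i) is where the cross-level interaction is actually controlled: at each newly encountered bin $b$ one argues that with probability $\Omega(f)$, $b$ is far from full from \emph{all other levels combined} (this uses Theorem~\ref{thm:jakobs} and Lemma~\ref{lem:contribution-last-levels} to compare $b$'s load from levels $\neq\ell$ to the idealized system), and separately that with probability $\Omega(1)$ the current level $\ell$ contributes at most $\lceil C\eps/2\rceil$ balls to $b$ (this uses that $k=\Theta(1/\eps^2)$ is large, via Lemma~\ref{lemma:directballs}). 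Multiplying gives $\Omega(f)$ termination probability per bin, hence $(1-\Omega(f))^{1/f}\le 1-c$ over the epoch. Part (ii) then reduces to the per-level run bound (Theorem~\ref{thm:worstcase}), and a recursion over epochs gives $O(1/f)$ total. The substance you are missing is precisely the argument that each visited bin is, with probability $\Omega(f)$, far from full from the \emph{other} levels---this is what replaces your unexecuted cascade analysis.
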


In fact, the proof uses only that the total number of non-full bins is $\Theta(f)$ with high probability, not the concrete value of $f$. Therefore the complicated expression for $f$ will never occur in the proof of the theorem. All we will occasionally use is the fact that the number of non-full bins is $\Omega(\eps)$, which follows trivially from a combinatorial argument.

The section is structured as follows: We start by providing some preliminaries for the proof of~\Cref{thm:main-insertion1} in~\Cref{sec:insertion-prelim}. In~\Cref{sec:worstcase}, we use the results from~\Cref{sec:f-properties} to provide a strengthening of ~\Cref{thm:worstcase1}. Finally, we provide the proof of~\Cref{thm:main-insertion1} in~\Cref{sec:insertion-proof}.

\subsection{Preliminaries For the Analysis}\label{sec:insertion-prelim}
We start by making the following definition which will be repeatedly be useful in the analysis to follow.
\begin{definition}\label{def:close to full}
Consider any distribution of $\balls$ balls into $\bins$ bins. 
We say that a bin is \emph{close to full} if it contains more than $(1+\eps/2)\balls/\bins$ balls. Otherwise, we say that it is \emph{far from full}.
\end{definition}
Suppose we distribute $\balls$ balls into $\bins$ bins each of capacity $C=(1+\eps)\balls/\bins$ using consistent hashing with bounded loads and virtual bins. By Theorem~\ref{thm:jakobs}, the number of non-full bins is $\Theta(fm)$ with high probability when $k=O(1/\eps^2)$ is sufficiently large. We claim that it also holds that the number of far from full bins is $\Theta(fm)$ with high probability. To see this, suppose that after distributing the $\balls$ balls into the $\bins$ bins of capacity $C=(1+\eps)\balls/\bins$ each, we reduce the capacity of each bin to $C_0=(1+\eps/2)\balls/\bins$. This requires forwarding balls from the now overflowing bins and this forwarding can only increase the number of bins containing $(1+\eps/2)\balls/\bins$ balls. By Theorem~\ref{thm:jakobs}, and with $\eps_0=\eps/2$, the number of non-full bins after the relocating is $\Theta(f_0m)$, where 
$$
f_0=\begin{cases}
\eps_0 C_0, &C_0\leq \log(1/\eps_0) \\
\eps_0\sqrt{C_0\log\left(\frac{1}{\eps_0\sqrt C_0}\right)}, &\log(1/\eps_0)<C_0\leq  \frac{1}{2\eps_0^2} \\
1, &\frac{1}{2\eps_0^2}\leq C_0.
\end{cases}
$$
But clearly, $f_0=\Theta(f)$, so we conclude that the number of far from full bins before modifying the system is $\Theta(fm)$ with high probability.

Summing up, we have the following corollary to Theorem~\ref{thm:jakobs}.
\begin{corollary}\label{cor:farfromfull}
In the setting of Theorem~\ref{thm:main-insertion1}, the number of far from full bins is $\Theta(fm)$ with high probability, i.e., with probability $1-n^{-\gamma}$ for every $\gamma$=O(1).
\end{corollary}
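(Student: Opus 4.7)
The plan is to sandwich the count of far-from-full bins between $\Omega(fm)$ and $O(fm)$ with high probability, using~\Cref{thm:jakobs} twice. The upper bound comes for free: since $C_0 := (1+\eps/2)n/m$ is strictly less than $C$, every far-from-full bin is in particular a non-full bin in the current system, so~\Cref{thm:jakobs} applied with the original parameter $\eps$ gives $\#\{\text{far from full}\} \le \#\{\text{non-full}\} = O(fm)$ with probability $1-n^{-\gamma}$ for any constant $\gamma$.

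For the matching lower bound, I would set up an auxiliary process: starting from the configuration produced by the scheme at capacity $C$, lower the capacity of every bin to $C_0$, and while some bin is overflowing pop the lowest-priority ball from an overflowing bin and reinsert it into the next non-full bin along the cycle. This process terminates because $mC_0 = (1+\eps/2)n > n$, and by history-independence of consistent hashing with bounded loads~\cite{BG07} its output has the same distribution as what one would obtain by running the scheme from scratch at capacity $C_0$ with parameter $\eps_0 := \eps/2$. Applying~\Cref{thm:jakobs} to this auxiliary system---valid because $c$ is large enough that $k = c/\eps^2 = (c/4)/\eps_0^2$ still exceeds the threshold required at $\eps_0$---then yields that the number of bins with load strictly less than $C_0$ in the auxiliary configuration is $\Theta(f_0 m)$ with high probability, where $f_0$ is~\eqref{eq:formula-for-f} evaluated at $(\eps_0, C_0)$ in place of $(\eps, C)$.

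To transfer this bound back to the original configuration I would invoke the following monotonicity observation: the count of bins holding at least $C_0$ balls is non-decreasing throughout the auxiliary process, since each step removes one ball from a bin $b$ with load $> C_0$ (which still holds $\ge C_0$ balls afterwards) and deposits it into a bin $b'$ with load $< C_0$ (whose new load cannot exceed $C_0$). Thus no bin crosses the $C_0$ threshold downward while some may cross it upward. Taking complements at the two endpoints of the process gives
\[
    \#\{\text{far from full in original}\} \;\ge\; \#\{\text{load $< C_0$ in auxiliary}\} \;=\; \Theta(f_0 m).
\]
A short case analysis based on $\eps_0 = \Theta(\eps)$ and $C_0 = \Theta(C)$ then confirms $f_0 = \Theta(f)$ across the three regimes of~\eqref{eq:formula-for-f}, with only constant-factor distortions when scaling pushes $C_0$ across a regime boundary (the boundary values $\log(1/\eps)$ and $1/(2\eps^2)$ shift by at most constants when $\eps$ is rescaled by a constant, and the three pieces of $f$ already agree up to constants at those boundaries). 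The step I expect to require the most care is cleanly invoking history-independence to identify the terminal state of the auxiliary process with the canonical placement at capacity $C_0$; once that is granted, both the monotonicity and the comparison between $f_0$ and $f$ are routine.
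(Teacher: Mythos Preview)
Your proposal is correct and follows essentially the same approach as the paper: the upper bound via the inclusion $\{\text{far from full}\}\subseteq\{\text{non-full}\}$ together with \Cref{thm:jakobs}, and the lower bound via the auxiliary capacity-$C_0$ process, the forwarding-only-increases-$\#\{\text{load}\ge C_0\}$ monotonicity, a second application of \Cref{thm:jakobs} at $\eps_0=\eps/2$, and the observation $f_0=\Theta(f)$. You are in fact more explicit than the paper on two points it leaves implicit---checking that $k=c/\eps^2$ is still above threshold for $\eps_0$, and invoking history-independence to identify the terminal auxiliary state with the canonical $C_0$ placement.
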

Finally, recall~\Cref{def:run}: The \emph{run} at a given level $i$ containing some virtual bin $b$, is the maximal interval at level $i$ which contains $b$ and satisfies that all bins lying in $I$ gets full at level $i$. 
\subsection{High Probability Bound on the Number of Bins Visited in an Insertion}\label{sec:worstcase}
This section will be dedicated to prove the following strengthening of~\Cref{thm:worstcase1}. 
\begin{theorem}\label{thm:worstcase}
Let $\balls,\bins\in \N$ and $0<\eps<1$ with $1/\eps=n^{o(1)}$.  Suppose we distribute $\balls$ balls into $\bins$ bins, each of capacity $C=(1+\eps)\balls/\bins$, using consistent hashing with bounded loads and virtual bins and $k= c/\eps^2$ levels for a sufficiently large constant $c$.  Let $b$ be a bin at level $i$ which may be chosen dependently on the hashing of balls and bins to level $1,\dots,i-1$ and $I$ the run at level $I$ containing $b$. Let $X$ denote the number of bins in $I$. For any $t\geq 1/f$, 
$$
\Pr[X\geq t]=\exp(-\Omega(tf))+O(n^{-10})
$$
The same statement holds even if $b$ is given an extra start load of $\lambda tf\lceil C \eps /2\rceil$ 'artificial' balls before the hashing of balls and bins to level $i$, where $\lambda$ is a sufficiently small constant.
\end{theorem}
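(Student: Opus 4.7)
My plan is to mirror the proof of Theorem~\ref{thm:worstcase1}, but sharpen the critical Chernoff estimates using the concentration results from \Cref{sec:f-properties}. First I would condition on the event that Theorem~\ref{thm:jakobs} holds at every level $1,\ldots,k$; by a union bound this fails with probability $O(n^{-\gamma})$ for any $\gamma=O(1)$, which is absorbed into the additive $O(n^{-10})$ term. As in the proof of Theorem~\ref{thm:worstcase1}, I reduce to the hardest case $i=k$ using stochastic domination of remaining capacities. The crucial quantitative difference is that with $k=c/\eps^2$ (rather than $k\ge 2/\eps$ as in Theorem~\ref{thm:worstcase1}), the number of balls hashing to level $k$ is only $(1+o(1))n/k=(1+o(1))n\eps^2/c$ with high probability, whereas the total residual capacity at the start of level $k$ is still $\geq \eps n$. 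This asymmetry powers the improved tail.

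Next I would repeat the geometric decomposition: if $s+1\leq X\leq 2s$, then inside an $\Theta(s/m)$-length window on the level-$k$ cycle there exist equal-length sub-intervals $J_1,\dots,J_\ell$ with $\ell=O(1)$ such that at least one of $B_1$ (too few virtual bins land in some $J_j$), $B_2$ (the sum $\sum_{b\in J_j}C_b^{(k-1)}$ deviates significantly from its mean), or $B_3$ (too many level-$k$ balls hash to some $J_j$) must occur. A standard Chernoff bound gives $\Pr[B_1]\leq\exp(-\Omega(s))\leq\exp(-\Omega(sf))$ since $f\leq 1$.

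The principal new input is the bound on $\Pr[B_3]$. Conditioned on the typical post-$(k-1)$ state, the expected number of level-$k$ balls hashing to $J_j$ is $\Theta(s\mu\eps^2/c)$, while $\sum_{b\in J_j}C_b^{(k-1)}=\Omega(s\eps\mu)$ (provided $B_2$ does not hold). Thus $B_3$ requires the number of balls to exceed its mean by a multiplicative factor $\Omega(c/\eps)$, and the large-deviation form of multiplicative Chernoff yields $\Pr[B_3]\leq\exp(-\Omega(s\mu\eps\log(1/\eps)))$. A short case analysis using $\mu\geq C/2$ verifies that $\mu\eps\log(1/\eps)=\Omega(f)$ in each of the three regimes of \cref{eq:formula-for-f}, so $\Pr[B_3]\leq\exp(-\Omega(sf))$. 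Summing the resulting bound on $\Pr[s+1\leq X\leq 2s]$ over the geometric sequence $s=t,2t,4t,\ldots$ then gives the stated tail $\Pr[X\geq t]\leq\exp(-\Omega(tf))+O(n^{-10})$. For the artificial-balls variant, the extra load $\lambda tf\lceil C\eps/2\rceil$ placed on $b$ reduces the residual capacity in the window by $O(\lambda)\cdot t\eps\mu$, which for $\lambda$ sufficiently small only costs a constant factor in the Chernoff bound above.

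The step I expect to be the main obstacle is obtaining a tight bound on $\Pr[B_2]$. A naive Bernstein bound for $\sum_{b\in J_j}C_b^{(k-1)}$, treating each residual capacity as a bounded variable in $[0,C]$ with mean $\overline{C}_k=\Theta(\eps\mu)$, only yields $\exp(-\Omega(s\eps))$, which is weaker than $\exp(-\Omega(sf))$ when $f$ exceeds $\eps$ by more than a constant factor. The resolution is to exploit the finer shape of the marginal distribution of $C_b^{(k-1)}$ coming from \Cref{sec:f-bound}: a typical residual equals zero with probability $1-\Theta(f)$ and, conditioned on being positive, has mean and standard deviation of order $\Theta(\eps\mu/f)$. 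Plugging these second-moment bounds into Bernstein, together with a truncation at $O((\eps\mu/f)\log n)$ to control the maximum term and a hypergeometric argument to transfer the resulting concentration from $[m]$ to the randomly-chosen subset $J_j$, should yield the desired $\exp(-\Omega(sf))$ bound on $B_2$, with any polylogarithmic slack absorbed into the $O(n^{-10})$ term.
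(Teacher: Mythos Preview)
Your overall strategy—conditioning on the high-probability event from \Cref{sec:f-properties}, setting up the same dyadic $s\leq X\leq 2s$ decomposition, and covering the run by $O(1)$ sub-intervals—matches the paper. The key divergence is in how you handle the capacity-deviation event (your $B_2$), and the paper's route here is substantially simpler.

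The paper does \emph{not} try to control the full sum $\sum_{b\in J_j} C_b^{(k-1)}$ of residual capacities. Instead it introduces the coarse notion of a bin being \emph{far from full} (remaining capacity at least $\lceil C\eps/2\rceil$) and invokes Corollary~\ref{cor:farfromfull}, which says that a $\Theta(f)$-fraction of bins are far from full with high probability. Under this conditioning, the event ``$J_j$ contains at most $c_0 f s/8$ far-from-full bins'' is a Bernoulli Chernoff bound with success probability $\Theta(f)$ per bin, yielding $\exp(-\Omega(fs))$ directly. The balls event $B_4$ then only needs to compare the expected number of level-$i$ balls in the window, $O(Cs/k)$, against the threshold $\lceil C\eps/2\rceil\cdot c_0 fs/8$; since $k\geq c'/(\eps f)$ (using only $f=\Omega(\eps)$), the mean is already a small constant fraction of the threshold, and a standard Chernoff bound gives $\exp(-\Omega(C\eps f s))\leq\exp(-\Omega(fs))$. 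No case analysis over the three regimes of $f$ is needed, and the artificial-balls variant follows by noting that $\lambda t f\lceil C\eps/2\rceil$ is at most a small fraction of the far-from-full capacity in the window.

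Your proposed fix for $B_2$—Bernstein with the refined second-moment profile of $C_b^{(k-1)}$ plus a $\log n$ truncation—might be made to work, but it is fighting an unnecessary battle: you are trying to recover the concentration of a heavy-tailed sum when all you need is a lower bound on it, and the indicator-based lower bound (count far-from-full bins, multiply by $\lceil C\eps/2\rceil$) gives that for free. The paper also works directly at level $i$ rather than reducing to $i=k$, using that the number of far-from-full bins can only decrease with $i$, so the $\Theta(fm)$ lower bound at level $k$ holds at every earlier level as well.
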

Note that it in particular follows that the number of bins visited at a given level during an insertion is $O(\log (1/\delta)/f)$ with probability $1-\delta$.
\begin{proof}

Let $R$ denote the number of virtual bins in $I$. By Corollary~\ref{cor:farfromfull}, the number of far from full bins after inserting balls at level $1,\dots,i-1$ is at least $c_0fm$ with high probability, where $c_0>0$ is some universal constant. Furthermore, by a standard Chernoff bound, the number of balls hashing to level $i$ is at most $2n/k$ with high probability. Here we used the assumption that $1/\eps=n^{o(1)}$, so $n\gg k\log n$. Condition on those two events and consider the following modified process at level $i$ where (1) $b$ and every bin which was close to full after inserting the balls at level $1,\dots,i-1$ forwards every ball it receives at level $i$, i.e., has its remaining capacity reduced to zero (2) each far from full bin stores at most $\lceil C\eps/2\rceil $ balls from level $i$ before it starts forwarding balls at level $i$, i.e., has its remaining capacity reduced to $\lceil C\eps/2\rceil $. Let $I'$ denote the run containing $b$ with such modified capacities. Letting $R'$ denote the number of virtual bins lying in $I$ it then holds that $R\leq R'$, so it suffices to provide a high probability upper bound on $R'$. 

Let $s\in \N$ be given and let $A_s$ be the event that $s+1\leq R' \leq 2s+1$. Define $J_1^{-}$ and $J_1^+$ to be respectively the intervals at level $i$ ending and starting at $b$ and having length $s/(4m)$. Similarly, let $J_2^{-}$ and $J_2^+$ be respectively the intervals at level $i$ ending and starting at $b$ and having length $4s/m$. We observe that if $A_s$ occur then either of the following events must hold.
\begin{itemize}
\item[$B_1$:] $J_2^{-}$ or $J_2^{+}$ contains at most $3s$ virtual bins.
\item[$B_2$:] $J_1^{-}$ or $J_1^{+}$ contains at least $s/2$ virtual bins
\item[$B_3$:] $J_1^{-}$ or $J_1^{+}$ contains at most $c_0fs/8$ virtual bins which were far from full from levels $1,\dots,i-1$
\item[$B_4$:] $J_2^{-} \cup J_2^{+}$ contains at least $\lceil C\eps/2\rceil \cdot c_0fs/8$ balls.

\end{itemize}
Indeed, suppose that $A_s$ occur and that neither of $B_1,B_2,B_3$ occur. We show that then $B_4$ must occur. To see this observe that if $B_1$ does not occur, then since $I'$ consists of at most $2s+1$ bins, $I'\subseteq J_2^{-}\cup J_2^{+}$. Since $B_2$ does not occur, $I'$ must further fully contain $J_1^{-}$ or $J_1^{+}$. Since $B_3$ does not occur, $I'$ must then contain at least $c_0fs/8$ virtual bins which were far from full from levels $1,\dots,i-1$. Finally any ball allocated to a bin of $I'$ must also hash to $I'$. Since the at least $c_0fs/8$ far from full bins from level $1,\dots,i-1$ which lie in $I'$ each get full at level $i$ and has a total capacity of $\lceil C\eps/2\rceil \cdot c_0fs/8$, it follows that at least $\lceil C\eps/2\rceil \cdot c_0fs/8$ balls must hash to $I'\subseteq J_2^{-}\cup J_2^{+}$. This is exactly the event $B_4$.

As in the proof of Lemma~\ref{thm:worstcase1}, we can use standard Chernoff bounds to conclude that $\Pr[B_1]=\exp(-\Omega(s))$, $\Pr[B_2]=\exp(-\Omega(s))$ and $\Pr[B_3]= \exp(-\Omega(fs))$. For $B_4$, we observe that the expected number of balls, $\mu$, hashing to $J_2^{-}\cup J_2^{+}$ is upper bounded by $2n/k\cdot 8s/m=O(Cs/k)$. As $f=\Omega(\eps)$, we may assume that $k\geq c'/(\eps f)$ for any constant $c'$. Thus, choosing $c'$ sufficiently large, it follows that $\mu\leq Cs\eps fc_0/ 32$. Using another Chernoff bound, it follows $\Pr[B_4]= \exp(-\Omega(fs))$. In conclusion, if $s\geq 1/f$, it holds that  $\Pr[A_s]=\exp(-\Omega(fs))$ and the desired result follows as in the proof of Lemma~\ref{thm:worstcase1}.

Finally, it is easy to modify the constants in the above argument, so that it carries through even when $b$ is given an extra start load of $\lambda t f \lceil C\eps/2\rceil$ balls for a sufficiently small constant $\lambda$, and this gives the final statement of the Theorem.
\end{proof}

\subsection{The Proof of Theorem~\ref{thm:main-insertion1} }\label{sec:insertion-proof}
In this section, we provide the proof of Theorem~\ref{thm:main-insertion1}. In order to do so, we first require a technical lemma which for a given virtual bin, $b$, bounds the number of balls that are either placed in $b$ or forwarded from $b$ at level $i$. 
The technique used to prove this lemma will be used for the final proof of Theorem~\ref{thm:main-insertion1}, but in a more sophisticated way. As such, the lemma below serves as a nice warm up to the proof of Theorem~\ref{thm:main-insertion1}. We start out by choosing $s^*=O(\log n/f)$ sufficiently large, such that Theorem~\ref{thm:worstcase} yields that for a bin $b$ at level $i$, the length of the run containing $b$ at level $i$ (see Definition~\ref{def:run}) has length at most $s^*$ with probability $1-1/n^{10}$.
\begin{lemma}\label{lemma:directballs}
Let $\lambda=O(1)$ be any constant. Let $b$ be a virtual bin at level $i$ that may depend on the distribution of balls into bins at level $1,\dots,i-1$. Let $n_i$ denote the number of balls hashing to level $i$ and suppose that $n/(2k) \leq n_i\leq 2n/k$ where $k=O(1/\eps^2)$ is sufficiently large (depending on $\lambda$).
Let $Z$ denote the number of the $n_i$ balls hashing to level $i$ that either are placed in $b$ or are forwarded from $b$. Define $\alpha =\lceil C \eps/2 \rceil $. For any $\ell\geq 1/5$ satisfying that $\alpha \ell$ is an integer\footnote{The constant $5$ is arbitrary.}, it holds that 
$$
\Pr[Z\geq \alpha \ell]\leq e^{-\lambda\ell}+1/n^{10}.
$$ 
\end{lemma}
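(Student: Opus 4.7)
The plan is to reduce the bound on $Z$ to a tail estimate on a sum of geometric random variables. The key observation is that $Z$ is at most the number of level-$i$ balls hashing to $I_b$, the prefix of the run $I$ at level $i$ ending at $b$: every ball contributing to $Z$ walks through $b$, and since the bin immediately preceding $I$ is non-full at the end of level $i$ and balls only walk forward, no ball hashing before $I$ can ever reach $I$. Consequently, for any $s \ge |I_b|$ we have $Z \le \#\{\text{level-}i\text{ balls in the first } s \text{ virtual bins backward from } b\}$, and hence
\[
    \Pr[Z \ge \alpha\ell] \;\le\; \Pr[|I| > s] \;+\; \Pr\!\bigl[\#\{\text{level-}i\text{ balls in first } s \text{ gaps backward from } b\} \ge \alpha\ell\bigr].
\]

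I will choose $s$ of order $\ell/f$ (taken at least $1/f$ so that Theorem~\ref{thm:worstcase} applies). For the first term, Theorem~\ref{thm:worstcase} gives $\Pr[|I| \ge s] \le \exp(-\Omega(sf)) + O(n^{-10}) = \exp(-\Omega(\ell)) + O(n^{-10})$, and by choosing the constant $c$ in $k = c/\eps^2$ large enough depending on $\lambda$ we can enlarge the hidden constant past $\lambda$, making this at most $\tfrac12 e^{-\lambda\ell} + O(n^{-10})$. For the second term, by the exchangeability of the $n_i$ level-$i$ balls and the $m-1$ virtual bins other than $b$ on the cycle, the count of balls in the first $s$ gaps backward from $b$ is stochastically dominated by a sum of $s$ independent geometric random variables with parameter $p = m/(n_i + m) \ge 1/2$ (valid because $n_i \le 2n/k$ and $k$ is large). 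The mean is at most $s \cdot n_i/m = O(C\ell/(fk)) = O(\alpha\ell/c)$, using $f \ge \eps$, $k = c/\eps^2$, and $\alpha \ge C\eps/2$, so the target $\alpha\ell$ exceeds this mean by a factor $\Omega(c)$. Theorem~\ref{thm:geotail} then yields $\Pr[\mathrm{sum} \ge \alpha\ell] \le \exp(-\Omega(\alpha\ell \log c))$, which is at most $\tfrac12 e^{-\lambda\ell}$ for $c$ large (using $\alpha \ge 1$).

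The main obstacle is that both terms must deliver decay of the form $e^{-\lambda\ell}$, and both must be driven by the same free parameter $c$. Fortunately, both improve with larger $c$: more levels means fewer balls per level, which simultaneously tightens the Chernoff argument inside Theorem~\ref{thm:worstcase} and shrinks the geometric-sum mean relative to $\alpha\ell$. A minor technicality is handling $\ell \in [1/5, 1)$, where $\ell/f$ may be below $1/f$; this is absorbed by rounding $s$ up to $\max(\lceil \ell/f\rceil, \lceil 1/f \rceil)$, which only changes the constants. The exchangeability comparison to a sum of geometrics is the conceptual heart of the argument and mirrors the simple-process analysis of Section~\ref{sec:f-properties}; it is also the prototype for the more intricate analysis in the proof of Theorem~\ref{thm:main-insertion1}.
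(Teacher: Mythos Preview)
Your approach is correct and genuinely different from the paper's. You use a clean two-term split: bound the run length via Theorem~\ref{thm:worstcase} at a single threshold $s=\Theta(\ell/f)$, then bound the raw ball count in the first $s$ gaps (between \emph{all} virtual bins) via Theorem~\ref{thm:geotail}. The paper instead passes to a modified process where far-from-full bins absorb exactly $\alpha$ balls, observes that $Z'\ge\alpha\ell$ forces some backward interval containing exactly $s$ far-from-full bins and $\alpha(s+\ell)$ balls, and then unions over $s\in\{0,\dots,s^*\}$; each term is controlled by a sum of $s{+}1$ geometrics counting balls between consecutive \emph{far-from-full} bins, with per-gap mean $O(\alpha/\lambda_0)$. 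Your bound $Z\le(\text{balls in }I_b)$ discards the capacity absorbed by intermediate bins, so you need many more gaps ($\Theta(\ell/f)$ versus the paper's $O(\ell)$ in the dominant term), but each gap has correspondingly smaller mean, and the totals match. What the paper's decomposition buys is that it is precisely the template reused in the proof of Theorem~\ref{thm:main-insertion1} (the ``history with \textbf{?}-marks'' argument), so it serves its stated purpose as a warm-up; your argument is arguably simpler for this lemma in isolation but does not rehearse that machinery.

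One point needs correction. You claim that by increasing $c$ in $k=c/\eps^2$ you can push the hidden constant in $\exp(-\Omega(tf))$ of Theorem~\ref{thm:worstcase} past $\lambda$. Inspecting that proof, the event $B_4$ (too many balls) does sharpen with $c$, but $B_3$ (too few far-from-full bins in a subinterval) is a Chernoff bound whose exponent is a fixed multiple of $fs$ independent of $c$. So the $\Omega$-constant in Theorem~\ref{thm:worstcase} is bounded. The fix is immediate and stays within your framework: take $s=A\ell/f$ with $A=\Theta(\lambda)$ chosen so that the fixed constant times $A$ exceeds $\lambda$, and then choose $c$ large (now depending on $A$, hence on $\lambda$) so that the geometric-sum mean $O(A\alpha\ell/c)$ is still a small fraction of $\alpha\ell$; the geotail bound then delivers $e^{-\lambda\ell}$ exactly as you outline. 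You should also note that when $b$ is itself non-full the ``run containing $b$'' is empty, so $I_b$ should be defined as the interval from the last non-full predecessor up to $b$; Theorem~\ref{thm:worstcase} applied to $b$ still bounds its length (the proof there does not actually require $b$ to be full).
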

\begin{proof}
We define $A_{\ell}$ to be the event that $Z\geq \alpha \ell$. When upper bounding the probability of $A_\ell$ we may assume that every bin which was close to full at level $i-1$ forwards all balls landing in it at level $i$. We may further assume that any bin which was far from full at level $i-1$ stores exactly $\alpha= \lceil C\eps/2\rceil$ balls and then starts forwarding balls. Let $Z'$ denote the number of balls landing in $b$ or being forwarded from $b$ at level $i$ in this modified process. Then clearly, $Z'\geq Z$ so $\Pr[Z\geq \alpha \ell] \leq \Pr[Z'\geq \alpha \ell]$. 

\begin{figure}[h]
\centering
\includegraphics{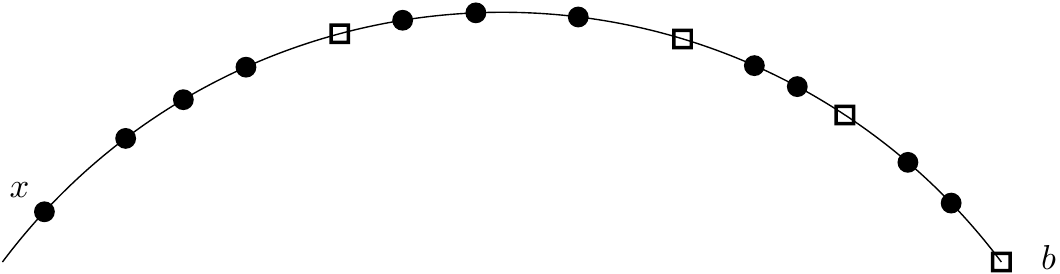}
\caption{$s$ bins that are far from full and $\alpha s+\alpha \ell$ balls. The bins are represented as boxes and the balls as disks.}
\label{figure:ballsinb}
\end{figure}

Next note that if $Z'\geq \alpha \ell$, then there must an integer $s\geq 0$ and an interval of the $i$'th level ending in $b$ which contains exactly $s$ virtual bins which are far from full and exactly $\alpha s +\alpha \ell$ balls. See Figure~\ref{figure:ballsinb}. Indeed, of the $\ell$ balls landing or being forwarded from $b$ consider the one hashing furthest behind $b$ at level $i$, call it $x$. Let $s$ be the the number of far from full  bins hashing between $x$ and $b$ at level $i$. Aside from the $\alpha \ell$ balls landing in $b$ or being forwarded from $b$, there must hash enough balls between $x$ and $s$ to put $\alpha$ balls in each of the far from full bins between $x$ and $b$, and thus the interval between $x$ and $b$ contains exactly $s$ far from full bins and $\alpha s + \alpha \ell$ balls. We denote the event that there exists such an interval by $A_{\ell,s}$ noting that we may then upper bound $\Pr[A_\ell]\leq \sum_{s=0}^{s^*} \Pr[A_{\ell,s}]+1/n^{10}$. Here we used that the run containing $b$  has length at most $s^*$ with probability at least $1-1/\balls^2$. We proceed to upper bound $\Pr[A_{\ell,s}]$ for each $0 \leq s \leq s^* $.

\begin{figure}[h]
\centering
\includegraphics{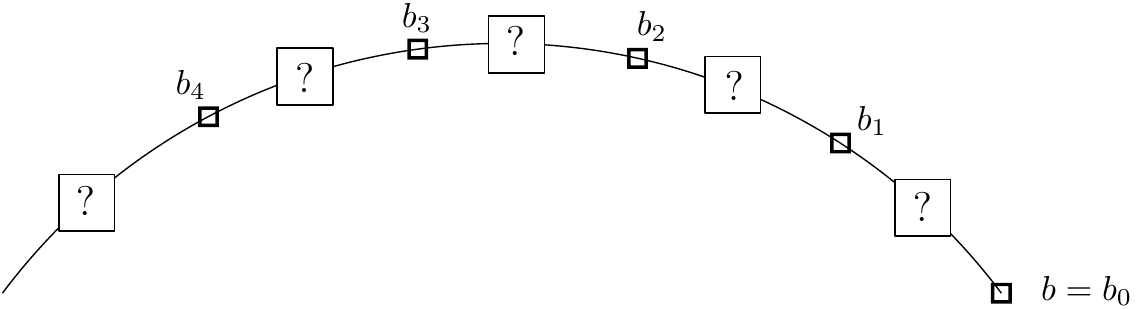}
\caption{We generate the number of balls hashing directly between $b_i$ and $b_{i+1}$ sequentially. In each step the number of such balls is dominated by a geometric distribution with parameter $q$.}
\label{figure:ballsinb2}
\end{figure}
So fix $s\geq 0$. 
We generate the sequence of the $s+1$ far from full bins $b=b_0, b_1,\dots,b_{s+1}$ leading up to $b$ and the balls hashing between them in a backwards order. Starting at $b_0$ we go backwards along the cyclic order. At some point we reach a bin, $b_1$ and we let $X_0$ be the number of balls met along the way in the between $b_0$ and $b_1$. We continue this was, going backwards until we have met $s+1$ bins $b_1,\dots,b_{s+1}$ and for each $1\leq i \leq s$ we let $X_i$ be the number of balls met in the cyclic order between $b_i$ and $b_{i+1}$. See Figure~\ref{figure:ballsinb2} for an illustration of the process. Let $f$ denote the fraction of bins which were far from full from level $1,\dots,i-1$. As we saw after Definition~\ref{def:close to full}, $f\geq \eps /3$. Now when going backwards from $b_i$ until we get to $b_{i+1}$, the probability of meeting a ball in each step is upper bounded by $\frac{n_i}{n_i+ mf-s}\leq \frac{n_i}{n_i+ mf-s^*}:=q$ regardless of the values of $X_0,\dots,X_{i-1}$.  Letting $X_0',\dots,X_s'$ be independent geometric variables with parameter $1-q$,  $X=\sum_{i=0}^s X_i$, and $X'=\sum_{i=0}^s X_i'$ it follows tht for any $t>0$, $\Pr[X\geq t]\leq \Pr[X'\geq t]$.

If $A_{\ell,s}$ holds, then $X\geq s \alpha+\ell \alpha$, so we may upper bound
$$
\Pr[A_{\ell,s}]\leq \Pr[X'\geq s\alpha+\ell \alpha].
$$
The expected value of $X_i'$ is 
$$
\mathbb{E}[X_i']=\frac{q}{1-q}=\frac{\balls_i}{\bins f-s^*}=O\left( \frac{n}{kmf} \right)=O\left( \frac{\alpha}{kf \eps} \right)\leq \frac{\alpha}{\lambda_0}.
$$
Here $\lambda_0=O(1)$ is a sufficiently large constant which we will choose later. Here we again used the assumption that $1/\eps=m^{o(1)}$ and moreover that $k=O(1/\eps^2)$  is sufficiently large. It follows that $\mathbb{E}[X']=\frac{(s+1)\alpha}{\lambda_0}$. Note in particular that we can ensure that $\E[X']\leq \frac{s\alpha+\ell \alpha}{2}$, so that 
$$
\Pr[X'\geq s\alpha+\ell \alpha]\leq \Pr\left[X'\geq \E[X']+ \frac{s\alpha+\ell \alpha}{2}\right].
$$
We apply~\Cref{thm:geotail} to bound this quantity. If we are in the case, where we are to use the second bound of~\cref{eq:geotail}, we obtain that 
$$
 \Pr\left[X'\geq \E[X']+ \frac{s\alpha+\ell \alpha}{2}\right]\leq \left(1 - \frac{1}{1 + 2 \alpha/\lambda_0} \right)^{\frac{s\alpha+\ell \alpha}{4}},
$$
It is easy to check that $\left(1 - \tfrac{1}{1 + 2 \alpha/\lambda_0} \right)^{\alpha}$ can be made smaller than any sufficiently small constant, just by choosing $\lambda_0$ sufficiently large. Thus it follows that 
\begin{align}\label{eq:rightbound}
\Pr[X'\geq s\alpha+\ell \alpha]\leq e^{-\lambda_1(s+\ell)},
\end{align} 
where we can make $\lambda_1=O(1)$ sufficiently large. However, we may have to use the first bound of~\cref{eq:geotail} and we investigate now which bound we obtain in this case. Relating back to~\Cref{thm:geotail}, we define $\mu_0=\E[X_i']\leq \alpha/\lambda_0$, $A=\left(1 + \tfrac{1}{2 \mu_{0}}\right) \log \left(1 + \tfrac{1}{2\mu_0} \right)$ and $t=\frac{1}{4\sigma^2}\frac{s\alpha+\ell \alpha}{2}$. We further define $\sigma^2=\Var[X']=(s+1)\Var[X_i']=(s+1)\mu_0(1+\mu_0)$. If $\mu_0\geq 1$, then $A\leq 1/\mu_0$ and $\sigma^2\leq (s+1)2\mu_0^2$, so that
$$
A\sigma^2\leq 2(s+1)\mu_0\leq \frac{2(s+1)\alpha}{\lambda_0}< \frac{s\alpha+\ell \alpha}{8}=t\sigma^2,
$$
by choosing $\lambda_0$ large enough. Thus, in this case we obtain the bound in~\cref{eq:rightbound}. If on the other hand $\mu_0<1$, then 
$$
t\geq \frac{s\alpha+\ell\alpha}{16(s+1)\mu_0}\geq \frac{\lambda_0(s+\ell)}{16(s+1)}\geq \lambda_2
$$
for a sufficiently large constant $\lambda_2$. Then also $W_0(t)$ can be made larger than any given constant, so we obtain that the bound of~\cref{eq:rightbound} holds in general.

We now sum over $s$ to obtain that 
$$
\Pr[A_\ell]\leq 1/n^{10}+ \sum_{s=0}^{s^*} \Pr[A_{\ell,s}] \leq  1/n^{10}+\sum_{s=0}^{s^*}e^{-\lambda_1(s+\ell)}\leq 1/n^{10}+e^{-\lambda \ell},
$$
where again $\lambda$ can be made sufficiently large. This completes the proof.
\end{proof}
With this lemma in hand we are ready to proceed with the proof of Theorem~\ref{thm:main-insertion1}. To guide the reader, we will start by providing a high level idea of how to obtain the result as follows. First of all, it will be helpful to recall in details how an insertion of a ball is handled using consistent hashing with bounded loads and virtual bins. When inserting a ball, $x$, we uniformly hash $x$ to a random point at a random level. Suppose that the hash value of $x$, $h(x)$, lies in the $i$'th level $i$ for some $i$. Starting at $h(x)$ we walk along  level $i$ until we arrive at a virtual bin. If the virtual bin is filled to its capacity with balls hashing to level $1,\dots, i$, we forward a ball from that bin at level $i$ (it could be $x$ but it could also be another ball that hashed to level $i$ of lower priority than $x$). We repeat the step, continuing to walk along level $i$ until we meet a new virtual bin. The first time we meet a virtual bin, $b$, which was not filled to its capacity with balls hashing to level $1,\dots,i$, we insert the forwarded ball and find the smallest level $j>i$ such that the virtual bin of $b$ at level $j$ received a ball at level $j$. If no such level exists, the insertion is completed. Otherwise $b$ has an overflow of one ball at level $j$, and we continue the insertion walking along level $j$ starting at $b$. Theorem~\ref{thm:main-insertion1} claims that the expected number of bins visited during this entire process is upper bounded by $O(1/f)$.

The idea of in our proof of Theorem $~\ref{thm:main-insertion1}$ is to split the bins visited during the insertion of $x$ into~\emph{epochs}. An epoch starts by visiting $\lceil 1/f \rceil$ virtual bins of the insertion (unless of course the insertion is completed before that many bins has been seen). The last of these $\lceil 1/f \rceil$ virtual bins lies at some level $i$ and we finish the epoch by completing the forwarding of balls needed at level $i$. At this point, we are either done with the insertion or we need to forward a ball from some virtual bin at some level $j>i$. The next epochs are similar; having finished epoch $a-1$, in epoch $a$, we visit $\lceil 1/f \rceil$ virtual bins. At this point, we will be at some level $\ell$ if we are not already done with the insertion. We then finish the part of the insertion which takes place at level $\ell$. Importantly, at the beginning of each epoch, we have just arrived at a virtual bin at a completely fresh level. 

The proof shows that during the first $\lceil 1/f \rceil$ steps of an epoch, the probability of finishing the insertion in each step is $1-\Omega(f)$. The intuition for this, is that when we reach a bin $b$ at some level, $i$, the probability that $b$ is far from full from other levels than $i$ can be showed to be $\Omega(f)$. Since the number of levels $k=O(1/\eps^2)$ is large, the contribution from level $i$ to $b$ only fills $b$ with probability $1-\Omega(1)$. 
Thus, the probability of not finishing the insertion during the first $\lceil 1/f \rceil$ steps of an epoch is $(1-\Omega(f))^{\lceil 1/f \rceil}=e^{-\Omega(1)}=1-\Omega(1)$. Now conditioning on not finishing the insertion during the first $\lceil 1/f \rceil$ steps of an epoch, we can still show that the expected number of bins visited during the rest of the epoch is $O(1/f)$. Letting $\mathcal{E}$ denote the event of finishing the insertion during the first $\lceil 1/f \rceil$ of an epoch and $T$, the total number of bins visited during the insertion, we have on a very high level that
\begin{align}\label{eq:recursion}
\mathbb{E}[T]\leq \Pr[\mathcal{E}]\lceil 1/f \rceil+\Pr[\mathcal{E}^c](O(1/f)+\mathbb{E}[T])=O(1/f)+\Pr[\mathcal{E}^c]\mathbb{E}[T]=O(1/f)+p\mathbb{E}[T],
\end{align} 
where $p=1-\Omega(1)$. Solving this equation, we find that $\mathbb{E}[T]=O(1/f)$. Here it should be noted that the recursive formula~\eqref{eq:recursion} is a bit too simplified. In our analysis, the $\mathbb{E}[T]$ on the left hand side and on the right hand side of~\eqref{eq:recursion} will not exactly be the same. The point is that after finishing epoch $a$, and being ready to start epoch $a+1$ at a new level $j$, we will know a bit more about the hashing of balls to level $1,\dots,j-1$ than we did before the beginning of epoch $a$. However, using Lemma~\ref{thm:worstcase}, we know that it is only a relatively small fraction of the system that we have any information about, and so we can argue that the expectation does not change much.

With this intuition in mind, our next goal is to obtain Theorem~\ref{thm:main-insertion1}.
\begin{proof}[Proof of Theorem~\ref{thm:main-insertion1}]
As described above, we partition the insertion into \emph{epochs} where an epoch consists of the following two steps.
\begin{enumerate}
\item We go through $\lceil 1/f \rceil$ bins of the insertion ending in a bin at some level $\ell$. 
\item We continue the insertion at level $\ell$ until we arrive at some bin $b$ which does not get full at level $\ell$.
\end{enumerate}
After step $2.$~we will have to continue the insertion on some level $j>i$ (if $b$ gets full at that level). Note that the insertion will complete during an epoch if along the way, we meet a bin which does not get full on either of levels  $1,\dots, k$. We will prove the following more technical claim which implies Theorem~\ref{thm:main-insertion1}.
\begin{claim}\label{claim:main}
Let $D>0$ be any constant and $0\leq t\leq D \log n$. Condition on the event that the insertion has been through $t$ epochs so far. Let $\mathcal{E}$ denote the event that we finish the insertion at one of the first $\lceil 1/f \rceil$ bins met during step 1. of  epoch $t+1$. Further define $R$ to be the random variable which counts the number of bins visited during step 2. of epoch $t+1$ (if the insertion completes before we get to step 2.~we put $R=0$). Then 
\begin{align}\label{eq:tool1}
\Pr[\mathcal{E}]\geq c,
\end{align}
for some universal constant $c>0$ (which does not depend on $D$), and 
\begin{align}\label{eq:tool2}
\mathbb{E}[R \mid \mathcal{E}^c]= O(1/f).
\end{align}
\end{claim}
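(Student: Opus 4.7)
The plan is to establish \cref{eq:tool1} and \cref{eq:tool2} using three tools developed earlier: Corollary~\ref{cor:farfromfull} (the fraction of far-from-full bins is $\Theta(f)$ with high probability), \Cref{thm:worstcase} (the length of a single run at a level is $O((\log 1/\delta)/f)$ with probability $1-\delta$), and \Cref{lemma:directballs} (the number of balls placed in or forwarded from a fixed bin at one level is $O(\alpha)$ with constant probability, where $\alpha=\lceil C\eps/2\rceil$). First I would observe that conditioning on the history of the first $t\le D\log n$ epochs exposes, by \Cref{thm:worstcase} combined with a union bound, only $O(\log^2(n)/f)$ virtual bins out of the $\Omega(km)$ in the system. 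Hence, conditionally on this history, Corollary~\ref{cor:farfromfull} and \Cref{thm:jakobs} continue to apply to the unexposed portion of the system up to a universal loss in constants.

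For \cref{eq:tool1}, let $b_1,\ldots,b_{\lceil 1/f\rceil}$ denote the bins visited during step~1 of epoch $t+1$, with $b_i$ lying at some level $\ell_i$. I would show that each $b_i$ is insertion-terminating (meaning the forwarded ball arriving at $b_i$ is absorbed without any overflow at any level) with conditional probability $\Omega(f)$. For this it suffices to check (a) the total load at $b_i$ from the $k-1$ levels other than $\ell_i$ is at most $C-\alpha$, which occurs with probability $\Omega(f)$ by applying Corollary~\ref{cor:farfromfull} to the restricted system, and (b) conditional on (a), the level-$\ell_i$ contribution to $b_i$ is strictly less than $\alpha$, which holds with constant probability by \Cref{lemma:directballs}. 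Approximate independence across the $\lceil 1/f\rceil$ bins---justified by the small exposed footprint---then yields
\[\Pr[\mathcal{E}^c]\le (1-\Omega(f))^{\lceil 1/f\rceil}+n^{-\gamma}=1-\Omega(1),\]
so $\Pr[\mathcal{E}]\ge c$ for a universal constant $c>0$.

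For \cref{eq:tool2}, step~2 of epoch $t+1$ walks along level $\ell$ starting from a bin $b$ that was filled at level $\ell$ during step~1, and stops at the first bin not filled at level $\ell$. Hence $R$ is bounded above by the length of the run at level $\ell$ containing $b$. Using the strengthened statement of \Cref{thm:worstcase} to absorb the start load carried by $b$,
\[\Pr[R\ge s] = \exp(-\Omega(sf)) + O(n^{-10})\qquad\text{for every } s\ge 1/f,\]
and integrating the tail gives $\E[R]=O(1/f)$. Combined with $\Pr[\mathcal{E}^c]=\Omega(1)$ and the exponential tail of $R$, this yields $\E[R\mid \mathcal{E}^c]\le \E[R]/\Pr[\mathcal{E}^c]=O(1/f)$.

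The main obstacle will be making the approximate-independence step in the proof of \cref{eq:tool1} fully rigorous: I need to show that conditioning on $b_1,\ldots,b_{i-1}$ each failing to terminate the insertion (and on the history of the previous $t$ epochs) perturbs the conditional probability that $b_i$ satisfies (a) and (b) by at most $o(f)$. The control will come from the small exposed footprint established via \Cref{thm:worstcase}, together with the sharp concentration of the number of far-from-full bins in \Cref{thm:jakobs}, which let the unrevealed part of the system be treated as essentially a fresh sample from the unconditional distribution.
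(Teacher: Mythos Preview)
Your high-level plan for \cref{eq:tool1} matches the paper's: show each of the $\lceil 1/f\rceil$ bins terminates the insertion with conditional probability $\Omega(f)$, by separating the level-$\ell$ contribution (handled via \Cref{lemma:directballs}) from the contribution of the other levels (handled via the concentration results of \Cref{sec:f-properties}). One refinement: for part (a) you invoke \Cref{cor:farfromfull}, but that corollary concerns the load after \emph{all} $k$ levels, whereas you need the load from the $k-1$ levels other than $\ell$. The paper instead uses \Cref{lem:contribution-last-levels} to compare $\sum_{i\neq\ell}Y_i$ to the ideal system, and only then appeals to the far-from-full bound. This is exactly the kind of detail that your ``approximate independence'' obstacle must address, so you are right to flag it.

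There are two genuine gaps in your argument for \cref{eq:tool2}. First, the inequality $\E[R\mid\mathcal{E}^c]\le \E[R]/\Pr[\mathcal{E}^c]$ is only useful if $\Pr[\mathcal{E}^c]=\Omega(1)$, which you assert but do not prove---and in fact it need not hold (for instance when $f=\Theta(1)$ the insertion may terminate at the first bin with probability close to $1$). The paper avoids this entirely by bounding $\Pr[R\ge s\mid \mathcal{E}^c]$ directly. Second, and more seriously, you cannot apply \Cref{thm:worstcase} to the bin $b$ at which step~2 begins: that theorem requires $b$ to be chosen based only on levels $1,\dots,\ell-1$, whereas here $b$ is the $\lceil 1/f\rceil$-th bin visited at level $\ell$ under the conditioning $\mathcal{E}^c$, which fixes a nontrivial portion of the level-$\ell$ hashing. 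The ``start load'' is not something $b$ itself carries; it is the total number of balls at the as-yet-unrevealed positions between the step-1 bins $b_0,\dots,b_a$, and this must be bounded \emph{before} one can contract the revealed history into a single super-bin and invoke \Cref{thm:worstcase}. The paper does this by noting there are at most $a\le 1/f$ such positions, each dominated by a geometric variable with mean $O(C/k)$, and then applying \Cref{thm:geotail} to show the total is at most $\lambda s f\alpha$ except with probability $\exp(-\Omega(sf))$. Only after this step does the strengthened \Cref{thm:worstcase} apply.
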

Before proving the claim, we argue how the desired result follows. First of all, choosing $D= 2/c$, it follows from~\eqref{eq:tool1} that the probability of not finishing the insertion during the first $D \log n$ epochs is upper bounded by
$$
(1-c)^{D \log n}\leq \exp(-2\log n)\leq n^{-2}.
$$
Conditioned on this extremely low probability event, the expected time for the insertion is crudely and trivially upper bounded by $mk$, but $mk n^{-2} \ll 1$, so this has no influence on the expected number of bins visited during the insertion, as we will now formalize.  For $1\leq i \leq D\log n$, we let $X_i$ denote the expected number of bins visited during the insertion starting from epoch $i$. If the insertion finishes before epoch $i$, we let $X_{i}=0$. Let further $\mathcal{E}_i$ denote the probability of finishing the insertion during step 1. of epoch $i$. Finally, let $R_i$ denote the number of bins visited during step 2. of epoch $i$. Then, for any $0\leq i \leq D\log n$, it holds that 
$$
\mathbb{E}[X_i]\leq \Pr[\mathcal{E}_{i}] \cdot \lceil 1/f \rceil+\Pr[\mathcal{E}_{i}^c](\mathbb{E}[R_i \mid \mathcal{E}_i^c]+\mathbb{E}[X_{i+1}]).
$$
By the claim, $\Pr[\mathcal{E}_{i}^c]\leq 1-c$ and $\mathbb{E}[R_i \mid \mathcal{E}_i^c]=O(1/f)$, so we obtain that
$$
\mathbb{E}[X_i]\leq O(1/f)+(1-c)\mathbb{E}[X_{i+1}].
$$
Solving this recursion, we obtain that 
$$
\mathbb{E}[X_0]=O(1/f)+(1-c)^i \mathbb{E}[X_{i+1}],
$$
so putting $i =D\log n$, we obtain that $\mathbb{E}[X_0]=O(1/f)+n^{-2} \cdot \mathbb{E}[X_{C\log n+1}]=O(1/f)$. But $\mathbb{E}[X_0]$ is exactly the expected number of bins visited during an insertion.  It thus suffices to prove the claim which is the main technical challenge of the proof.
\begin{proof}[Proof of Claim~\ref{claim:main}]
We split the proof into the proofs of equations~\eqref{eq:tool1} and~\eqref{eq:tool2}.
\subsection*{Proof of Equation~\eqref{eq:tool1}}

It suffices to show that for each of the $\lceil 1/f \rceil$  bins visited during step 1. of the epoch, the probability of ending the insertion at that bin is $\Omega(f)$.
More formally, we let $\mathcal{A}_i$ denote the event that the $i$'th of these bins, $1\leq i \leq \lceil 1/f \rceil$ is still full, i.e., that we do not end the insertion at the $i$'th bin, and show that $\prb{\mathcal{A}_i} \le (1 - \Omega(f))^i + i m^{-1/2 + o(1)}$.
The probability of not completing the insertion during step 1.~of the epoch is then upper bounded by $(1-\Omega(f))^{\lceil 1/f \rceil} + \ceil{1/f} m^{-1/2 + o(1)} \leq (1-\Omega(f))^{\lceil 1/f \rceil} + o(1) \le e^{-\Omega(1)} := c$ which is the desired result.
Here we used that $1/f \le O(1/\eps) = m^{o(1)}$.

We will condition on $\mathcal{A}_{i - 1}$ so start by making the conditioning more precise by describing exactly how the bins met before the $i$'th bin of the epoch at the given level received enough ball to make them full. We then bound the probability of $\mathcal{A}_i$ conditioned on this history. So fix $i$ with $1\leq i\leq \lceil 1/f \rceil$. The conditioning on $\mathcal{A}_{i-1}$ means that we have already seen $i-1$ full bins during the epoch. Suppose that the $i$'th bin, call it $b$, is at some level $\ell$. We then in particular know that the number of bins we have already visited at level $\ell$ is at most $i-1\leq1/f $. Let $a\geq 0$ denote the number of bins already visited on level $\ell$. Going backwards from $b:=b_a$, we denote these bins $b_{a-1},\dots,b_0$.  Thus $b_0$ was the first bin ever visited at level $\ell$. Note that possibly $b_a=b_0$. The conditioning $\mathcal{A}_{i-1}$ especially implies that after level $\ell$, all bins $b_0,\dots,b_{a-1}$ got filled. We now describe how these bins got filled at level $\ell$ as follows (see also Figure~\ref{figure:generatingorder2} for an illustration of the process). Starting with $j=0$, if the remaining capacity of $b_0$ after levels $1,\dots,\ell-1$ is $C_0$, we go backwards until at some point we have met a set of bins of total remaining capacity $C^*$ and exactly $C^*+C_0$ balls for some $C^*$. After this sequence, we insert a question mark \textbf{?}. This sequence of bins and balls describes how $b_0$ received its $C_0$ balls, and the \textbf{?} indicates a yet unknown history. We next go backwards from $b_1$ which has remaining capacity $C_1$, say. If we arrive at $b_0$ before having seen $C_1$ balls get we simply skip past the history of how $b_0$ got fills and continue the process after the \textbf{?}. If we obtain the description of how $b_1$ got filled at level $\ell$ before reaching $b_0$, there might still be more balls hashing between $b_0$ and $b_1$ (but no bins). In this case we insert a question mark, \textbf{?}, after the sequence of balls leading up to $b_1$. More generally, for $j=1,\dots,a-1$, we go backwards from $b_j$ generating a sequence of balls. Whenever we reach a bin, we go back to the nearest \textbf{?} and start generating balls at that point until we find a new bin or are done with describing the filling of $b_j$ --- In the later case we insert a new \textbf{?}. The \textbf{?} before bin $b_0$ has a special status. If we ever reach it, and we still require $C_j$ balls to be filled, we go backwards until we have found a set of bins of total remaining capacity $C^*$ and exactly $C^*+C_0$ balls for some $C^*$. It should be remarked that there is nothing probabilistic going on here. We have simply explained a way to find the positions of a set of balls and bins which certify how bins $b_0,\dots,b_{a-1}$ got filled at level $\ell$. See Figure~\ref{figure:generatingorder2} for an example of how this description of how bins $b_0,\dots,b_{a-1}$ got filled at level $\ell$ can look.

\begin{figure}[h]
\centering
\includegraphics{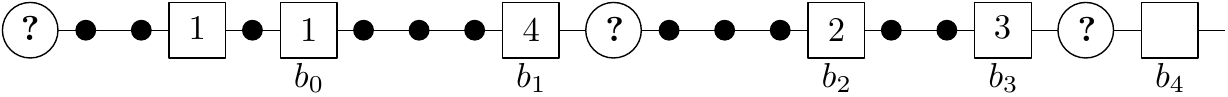}
\caption{The filling of bins $b_0,\dots,b_3$ at level $\ell$. The bins are represented as boxes an the numbers within them describes their remaining capacity at level $\ell$. The balls are represented as disks and the question marks \textbf{?} in circles.}
\label{figure:generatingorder2}
\end{figure}
 
We let $\mathcal{O}$ denote the event that bin $b$ receives more than $\lceil C\eps/2 \rceil$ bins from level $\ell$.
We also let $\mathcal{N}$ denote the event that $b$ receives at least $\frac{n}{m}(1+\lceil C\eps/2\rceil)$ balls from the levels different than $\ell$. 
We then get that,
\begin{align*}
    \prb{\mathcal{A}_i}
        \le \prb{\mathcal{A}_{i - 1} \wedge \left( \mathcal{O} \vee \mathcal{N} \right)}
        \le \prb{\mathcal{A}_{i - 1}}\prbcond{\mathcal{O}}{\mathcal{A}_{i - 1}}
            + \prb{\mathcal{A}_{i - 1} \wedge \mathcal{O}^c \wedge \mathcal{N}}
    \; .
\end{align*}
We will next show that $\prbcond{\mathcal{O}}{\mathcal{A}_{i - 1}} = p$, where $p=1-\Omega(1)$,
and 
\[
    \prb{\mathcal{A}_{i - 1} \wedge \mathcal{O}^c \wedge \mathcal{N}}
        \le \prb{\mathcal{A}_{i - 1}}\prbcond{\mathcal{O}^c}{\mathcal{A}_{i - 1}} (1 - c_0 f) + m^{-1/2 + o(1)}
\]
where $c_0=\Omega(1)$ is a universal constant.
This will then imply that,
\begin{align*}
    \prb{\mathcal{A}_i}
        \le \prb{\mathcal{A}_{i - 1}}(1 - (1 - p)c_0 f) + m^{-1/2 + o(1)}
        \le (1 - (1 - p)c_0 f)^i + i m^{-1/2 + o(1)}
    \; .
\end{align*}
We again split the proof into two parts.

\paragraph{Bounding $\prbcond{\mathcal{O}}{\mathcal{A}_{i - 1}}$:}
In the following we will omit the conditioning of $\mathcal{A}_{i - 1}$ from the notation to avoid clutter.
We have described how bins $b_0,\dots,b_{a-1}$ got filled at level $\ell$. This included a tail of balls behind each bin as well as some positions marked with \textbf{?}. Let $s+1$ be the number of such \textbf{?}-marks including the mark behind bin $b_0$. (See Figure~\ref{figure:generatingorder2}). Then $s\leq a$. Let $X_0$ denote the number of balls being forwarded to $b_a$ from the backmost \textbf{?} before $b_0$ and let $X_1,\dots,X_{s}$, denote the number of balls forwarded to $b_a$ from the remaining positions marked with a \textbf{?}. The number of balls, $n_\ell$ , hashing to level $\ell$ lies between $n/(2k)$ and $2n/k$ with probability $1-O(n^{-10})$ by a standard Chernoff bound. Moreover, the total number of bins lying in the history described so far is $s^*=O(\frac{\log n}{f})$ with probability $1-O(n^{-10})$, by Lemma~\ref{thm:worstcase} including those bins landing before $b_0$ in the description. Now conditioning on this history, for each $1\leq j \leq s$
$$
\E[X_j]\leq \frac{n_k}{m-s^*}=O(C/k).
$$
It follows that 
$$
\E\left[ \sum_{j=1}^{s}X_j\right]=O(sC/k)=O(C/(fk))=O(C/(\eps k)).
$$
If, we choose $k=O(1/\eps^2)$ sufficiently large, it in particular follows that $\E\left[ \sum_{j=1}^{s}X_j\right]\leq \lceil C\eps/2\rceil/20$. Thus, by Markov's inequality,
\begin{align}\label{eq:boundonmiddle}
\Pr\left[ \sum_{j=1}^{s}X_j\geq \lceil C\eps/2\rceil/2\right]\leq 1/10.
\end{align}
Next, we show that $\Pr[X^*\geq \lceil C\eps/2\rceil/2]\leq 1/10$. From this it will follow that, $\Pr[\mathcal{O}]\leq 1/5$ which is what we need. For bounding this probability, we may use Lemma~\ref{lemma:directballs}. To get into the setting of that lemma, we may simply contract the interval of the cyclic order from the most backwards \textbf{?} to $b_a$ and remove all unresolved $\textbf{?}$ in between except for the most backwards one. That the other places marked with \textbf{?} now cannot receive any balls only increases the probability that $X^*\geq t$ for any $t$. Now we are exactly in the setting of Lemma~\ref{lemma:directballs}, which we apply with $\ell=1/2$ to  conclude that if $k=O(1/\eps^2)$ is sufficiently large, then
$$
\Pr[X^*\geq \lceil C\eps/2\rceil/2]\leq 1/10.
$$
The reader may note that as an alternative to the reduction above (contracting the so far described history of how the bins $b_0,\dots,b_{a-1}$ received their balls), we may simply reprove Lemma~\ref{lemma:directballs} in this a tiny bit more complicated setting. The arguments would remain exactly the same. 

In conclusion, we have now argued that $\prbcond{\mathcal{O}}{\mathcal{A}_{i - 1}} \leq 1/5$.

\paragraph{Bounding $\prb{\mathcal{A}_{i - 1} \wedge \mathcal{O}^c \wedge \mathcal{N}}$:}
We start by defining notation which we used in \Cref{sec:f-concentration}.
Let $Y_{i}$ be the number of balls which land in bin $b$ or which are forwarded by bin $b$ on level $i$.
We define $Y_{< \ell} = \sum_{i < \ell} Y_i$ and $Y_{> \ell} = \sum_{i > \ell} Y_i$.
With this notation we get that 
\[
    \prb{\mathcal{A}_{i - 1} \wedge \mathcal{O}^c \wedge \mathcal{N}}
        = \prb{\mathcal{A}_{i - 1} \wedge \mathcal{O}^c \wedge Y_{< \ell} + Y_{> \ell} \ge \frac{n}{m}(1+\lceil C\eps/2\rceil)}
\]

We let let $\mathcal{L}_{\ell}$ be the sigma-algebra generated by the random choices on the first $\ell$ levels,
and $A_d$ will be the event as defined in \Cref{sec:f-concentration}.

We recall the simpler system from \Cref{sec:f-concentration} which we will compare to.
Let $\mathcal{Y}_{i}$ be the number of balls which land in bin $b$ or which are forwarded by bin $b$ on level $i$ in the simpler system.
We similarly define $\mathcal{Y}_{< \ell} = \sum_{i < \ell} \mathcal{Y}_i$ and $\mathcal{Y}_{> \ell} = \sum_{i > \ell} \mathcal{Y}_i$.

We will prove that,
\begin{align}\label{eq:contrib-other-levels}
    &\prb{\mathcal{A}_{i - 1} \wedge \mathcal{O}^c \wedge Y_{< \ell} + Y_{> \ell} \ge \frac{n}{m}(1+\lceil C\eps/2\rceil)}
        \\&\qquad\qquad\qquad\le \prb{\mathcal{A}_{i - 1} \wedge \mathcal{O}^c} \prb{ \mathcal{Y}_{< \ell} + \mathcal{Y}_{> \ell} \ge \frac{n}{m}(1+\lceil C\eps/2\rceil)} + m^{-1/2 + o(1)}
\end{align}
This will imply the result since 
\begin{align*}
    \prb{ \mathcal{Y}_{< \ell} + \mathcal{Y}_{> \ell} \ge \frac{n}{m}(1+\lceil C\eps/2\rceil)}
        \le \prb{\sum_{i = 1}^k \mathcal{Y}_{i} \ge \frac{n}{m}(1+\lceil C\eps/2\rceil)}
\end{align*}
Now using \Cref{eq:concentration-on-full} we get that $\prb{\sum_{i = 1}^k \mathcal{Y}_{i} \ge \frac{n}{m}(1+\lceil C\eps/2\rceil)} \le \prb{\sum_{i = 1}^k Y_{i} \ge \frac{n}{m}(1+\lceil C\eps/2\rceil)} + m^{-1/2 + o(1)}$,
and the discussion at the start of \Cref{sec:insertion-prelim} give us that $\prb{\sum_{i = 1}^k Y_{i} \ge \frac{n}{m}(1+\lceil C\eps/2\rceil)} \le 1 - c_0 f$.
Thus we just need to prove \cref{eq:contrib-other-levels}.

We start by noticing that,
\begin{align*}
    &\prb{\mathcal{A}_{i - 1} \wedge \mathcal{O}^c \wedge Y_{< } + Y_{> \ell} \ge \frac{n}{m}(1+\lceil C\eps/2\rceil)}
        \\&\qquad\qquad= \sum_{s = 0}^{\frac{n}{m}(1+\lceil C\eps/2\rceil) - 1} \prb{\mathcal{A}_{i - 1} \wedge \mathcal{O}^c \wedge Y_{< \ell} = s \wedge Y_{> \ell} \ge \frac{n}{m}(1+\ceil{ C\eps/2}) - s}
        \\&\qquad\qquad\qquad\qquad+ \prb{\mathcal{A}_{i - 1} \wedge \mathcal{O}^c \wedge Y_{< \ell} \ge \frac{n}{m}(1+\lceil C\eps/2\rceil)}
\end{align*}
We fix $0 \le s \le \frac{n}{m}(1+\lceil C\eps/2\rceil) - 1$ and get that,
\begin{align*}
    &\prb{\mathcal{A}_{i - 1} \wedge \mathcal{O}^c \wedge Y_{< \ell} = s \wedge Y_{> \ell} \ge \frac{n}{m}(1+\ceil{ C\eps/2}) - s}
        \\&\qquad\qquad\qquad= \ep{\indicator{\mathcal{A}_{i - 1} \wedge \mathcal{O}^c \wedge Y_{< \ell} = s} \prbcond{Y_{> \ell} \ge \frac{n}{m}(1+\ceil{ C\eps/2}) - s}{\mathcal{L}_l}}
\end{align*}
Now we use \Cref{lem:contribution-last-levels} and get that $\prbcond{Y_{> \ell} \ge \frac{n}{m}(1+\ceil{ C\eps/2}) - s}{\mathcal{L}_l} \le \prb{\mathcal{Y}_{> \ell} \ge \frac{n}{m}(1+\ceil{ C\eps/2}) - s} + k \indicator{A_{\ell}^c} + (1 + 2k) m^{-1/2 + o(1)}$.
Using this we get that,
\begin{align*}
    &\prb{\mathcal{A}_{i - 1} \wedge \mathcal{O}^c \wedge Y_{< \ell} + Y_{> \ell} \ge \frac{n}{m}(1+\lceil C\eps/2\rceil)}
        \\&\qquad\qquad\le \prb{\mathcal{A}_{i - 1} \wedge \mathcal{O}^c \wedge Y_{< \ell} + \mathcal{Y}_{> \ell} \ge \frac{n}{m}(1+\lceil C\eps/2\rceil)}
            \\&\qquad\qquad\qquad+ \sum_{s = 0}^{\frac{n}{m}(1+\lceil C\eps/2\rceil) - 1} \ep{\indicator{\mathcal{A}_{i - 1} \wedge \mathcal{O}^c \wedge Y_{< \ell} = s} \left( k \indicator{A_{\ell}^c} + (1 + 2k) m^{-1/2 + o(1)} \right) }
\end{align*}
Now we note that,
\begin{align*}
    \sum_{s = 0}^{\frac{n}{m}(1+\lceil C\eps/2\rceil) - 1} \ep{\indicator{\mathcal{A}_{i - 1} \wedge \mathcal{O}^c \wedge Y_{< \ell} = s} \left( k \indicator{A_{\ell}^c} + (1 + 2k) m^{-1/2 + o(1)} \right) }
        &\le k \prb{A_{\ell}^c} + (1 + 2k)m^{-1/2 + o(1)}
        \\&\le k m^{-\gamma} + (1 + 2k)m^{-1/2 + o(1)}
        \\&\le m^{-1/2 + o(1)}
\end{align*}
The second last inequality uses \Cref{eq:concentration-on-full} and last uses that $k = m^{o(1)}$.

We also want to also exchange $Y_{< \ell}$ with $\mathcal{Y}_{< \ell}$ and we will do this in similar fashion.
\begin{align*}
    &\prb{\mathcal{A}_{i - 1} \wedge \mathcal{O}^c \wedge Y_{< \ell} + \mathcal{Y}_{> \ell} \ge \frac{n}{m}(1+\lceil C\eps/2\rceil)}
        \\&\qquad\qquad\qquad= \sum_{s = 0}^{\frac{n}{m}(1+\lceil C\eps/2\rceil) - 1} \prb{\mathcal{Y}_{> \ell} = s} \prb{\mathcal{A}_{i - 1} \wedge \mathcal{O}^c \wedge Y_{< \ell} \ge \frac{n}{m}(1+\ceil{ C\eps/2}) - s}
        \\&\qquad\qquad\qquad\qquad\qquad\qquad+ \prb{\mathcal{A}_{i - 1} \wedge \mathcal{O}^c \wedge \mathcal{Y}_{> \ell} \ge \frac{n}{m}(1+\lceil C\eps/2\rceil)}
\end{align*}
Again we fix $s$ and get that,
\begin{align*}
    \prb{\mathcal{A}_{i - 1} \wedge \mathcal{O}^c \wedge Y_{< \ell} \ge \frac{n}{m}(1+\ceil{ C\eps/2}) - s}
        &= \prb{\mathcal{A}_{i - 1} \wedge \mathcal{O}^c} \prbcond{Y_{< \ell} \ge \frac{n}{m}(1+\ceil{ C\eps/2}) - s}{\mathcal{A}_{i - 1} \wedge \mathcal{O}^c}
        \\&\le \prb{\mathcal{A}_{i - 1} \wedge \mathcal{O}^c} \prbcond{Y_{< \ell} \ge \frac{n}{m}(1+\ceil{ C\eps/2}) - s}{\mathcal{A}_{i - 1} \wedge \mathcal{O}^c \wedge A_{\ell - 1}}
            \\&\qquad\qquad+ \prb{A_{\ell - 1}^c}
\end{align*}
By \Cref{eq:concentration-on-full} we know that $\prb{A_{\ell - 1}^c} \le m^{-\gamma}$.
Now similarly to $Y_{< \ell}$ we define $Y_{< \ell}^{(j)}$ to be the number of balls which lands in $j$ or which are forwarded by bin $j$ on levels before level $\ell$.
We know that $b$ is chosen uniformly from the set $[m] \setminus \set{b_0, \ldots, b_{a - 1}}$ so if we fix the first $\ell - 1$ then the probability that $Y_{< \ell} \ge \frac{n}{m}(1+\ceil{ C\eps/2}) - s$ is equal to
\begin{align*}
    \frac{\sum_{j \in [m] \setminus \set{b_0, \ldots, b_{a - 1}}} \indicator{Y_{< \ell}^{(j)} \ge \frac{n}{m}(1+\ceil{ C\eps/2}) - s} }{m - a}
\end{align*}
Since we condition on $A_{\ell - 1}$ then we have that,
\begin{align*}
    \abs{\frac{\sum_{j \in [m]} \indicator{Y_{< \ell}^{(j)} \ge \frac{n}{m}(1+\ceil{ C\eps/2}) - s} }{m} - \prb{\mathcal{Y}_{< \ell} \ge \frac{n}{m}(1+\ceil{ C\eps/2}) - s}} \le m^{-1/2 + o(1)}
\end{align*}
This implies that,
\begin{align*}
    \abs{\frac{\sum_{j \in [m] \setminus \set{b_0, \ldots, b_{a - 1}}} \indicator{Y_{< \ell}^{(j)} \ge \frac{n}{m}(1+\ceil{ C\eps/2}) - s} }{m - a} - \prb{\mathcal{Y}_{< \ell} \ge \frac{n}{m}(1+\ceil{ C\eps/2}) - s}}
        \le \frac{m}{m - a}  m^{-1/2 + o(1)} + \frac{a}{m - a}
\end{align*}
Now we use \Cref{thm:worstcase1} to get that $a \le O(\log(m)/\eps) = m^{o(1)}$ with probability $1 - m^{-\gamma}$.
Here we use that $1/\eps = m^{o(1)}$.
Combining this we get that,
\begin{align*}
    &\prbcond{Y_{< \ell} \ge \frac{n}{m}(1+\ceil{ C\eps/2}) - s}{\mathcal{A}_{i - 1} \wedge \mathcal{O}^c \wedge A_{\ell - 1}} 
        \\&\qquad\qquad\le \prb{\mathcal{Y}_{< \ell} \ge \frac{n}{m}(1+\ceil{ C\eps/2}) - s} + \frac{m}{m - m^{o(1)}}  m^{-1/2 + o(1)} + \frac{m^{o(1)}}{m - m^{o(1)}} + m^{-\gamma}
        \\&\qquad\qquad\le \prb{\mathcal{Y}_{< \ell} \ge \frac{n}{m}(1+\ceil{ C\eps/2}) - s} +  m^{-1/2 + o(1)}
\end{align*}
We then get that,
\begin{align*}
    \prb{\mathcal{A}_{i - 1} \wedge \mathcal{O}^c \wedge Y_{< \ell} \ge \frac{n}{m}(1+\ceil{ C\eps/2}) - s}
        &= \prb{\mathcal{A}_{i - 1} \wedge \mathcal{O}^c} \prbcond{Y_{< \ell} \ge \frac{n}{m}(1+\ceil{ C\eps/2}) - s}{\mathcal{A}_{i - 1} \wedge \mathcal{O}^c}
        \\&\le \prb{\mathcal{A}_{i - 1} \wedge \mathcal{O}^c} \left( \prb{\mathcal{Y}_{< \ell} \ge \frac{n}{m}(1+\ceil{ C\eps/2}) - s} +  m^{-1/2 + o(1)} \right)
            \\&\qquad\qquad+ m^{-\gamma}
        \\&\le \prb{\mathcal{A}_{i - 1} \wedge \mathcal{O}^c} \prb{\mathcal{Y}_{< \ell} \ge \frac{n}{m}(1+\ceil{ C\eps/2}) - s} +  m^{-1/2 + o(1)}
\end{align*}
Using this we get that,
\begin{align*}
    &\prb{\mathcal{A}_{i - 1} \wedge \mathcal{O}^c \wedge Y_{< \ell} + \mathcal{Y}_{> \ell} \ge \frac{n}{m}(1+\lceil C\eps/2\rceil)}
        \\&\qquad\qquad\le \prb{\mathcal{A}_{i - 1} \wedge \mathcal{O}^c} \prb{\mathcal{Y}_{< \ell} + \mathcal{Y}_{> \ell} \ge \frac{n}{m}(1+\lceil C\eps/2\rceil)}
            \\&\qquad\qquad\qquad+ \sum_{s = 0}^{\frac{n}{m}(1+\lceil C\eps/2\rceil) - 1} 
                \prb{\mathcal{Y}_{> \ell} = s} m^{-1/2 + o(1)}
        \\&\qquad\qquad\le \prb{\mathcal{A}_{i - 1} \wedge \mathcal{O}^c} \prb{\mathcal{Y}_{< \ell} + \mathcal{Y}_{> \ell} \ge \frac{n}{m}(1+\lceil C\eps/2\rceil)} + m^{-1/2 + o(1)}
\end{align*}
This finishes the proof \cref{eq:contrib-other-levels}.

This concludes the proof that equation~\eqref{eq:tool1} of the claim holds. 

\subsection*{Proof of Equation~\eqref{eq:tool2}}
We restate what we have to prove, namely that
$$
\mathbb{E}[R \mid \mathcal{E}^c]= O(1/f),
$$
Where $R$ is the number of bins visited during step 2. of epoch $t+1$ and $\mathcal{E}^c$ is the event that we did not finish the insertion during step 1. of epoch $t+1$. Let $b_0,\dots,b_a$ denote the  bins that we have visited so far at the level where we are currently at, call it $\ell$. All bins $b_0,\dots,b_a$ got filled from levels $1,\dots, \ell$, and as in the proof of equation \eqref{eq:tool1} of the claim, we may again describe the history of how the bins $b_0,\dots,b_a$ got filled to their capacity at level $\ell$. See Figure~\ref{figure:generatingorder3} for an example of such a history.

\begin{figure}[h]
\centering
\includegraphics{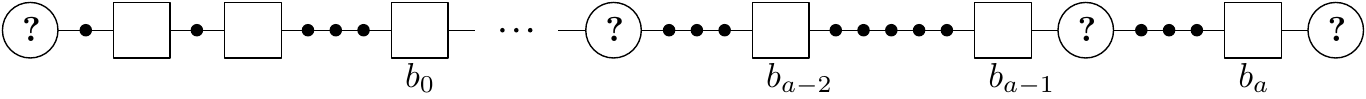}
\caption{An example of how the conditioning on $\mathcal{E}^c$ might look. Except for the \textbf{?} coming before $b_0$, the circled \textbf{?}'s, are parts of the cyclic order which has not yet been fixed, but which are known to consist solely of balls. The circled \textbf{?} appearing before $b_0$, which is the yet unknown history of how many balls $b_0$ are to further forward, has a special role. Indeed, this history does not have to consist solely of balls but can consist of a run of balls and bins such that all of the bins in the run gets filled at this level.}
\label{figure:generatingorder3}
\end{figure}
Let $s\geq1/f$. We wish to argue that the conditional probability 
\begin{align}\label{eq:cond-run-length}
\Pr[R\geq s \mid \mathcal{E}^c]=O(\exp(-\Omega(sf)))+O(n^{-10}).
\end{align}
Ignoring the unimportant $O(n^{-10})$ term, it will follow that 
$$
\mathbb{E}[R \mid \mathcal{E}^c]\leq 1/f+\sum_{i=0}^\infty \Pr[2^i /f  \leq R \leq 2^{i+1}/f]2^{i+1}/f=1/f+O\left( \frac{1}{f}\sum_{i=0}^\infty \exp(-\Omega(2^i))2^i\right)=O(1/f).
$$
and including the $O(n^{-10})$ term in the computation could only increase the bound with an additive $n^{-8}$, say, as we can here use the trivial bound on the length of a run of $mk$. Thus, this yields the desired result. For the bound on $\Pr[s\leq R\leq 2s \mid \mathcal{E}^c]$, it clearly suffices to assume that $s\geq c/f$ where $c=O(1)$ is a sufficiently large constant. 

We start by noting that with probability $1-O(n^{-10})$, the number of balls hashing to level $\ell$ is at most $2n/k$ which we assume to be the case in what follows. Let $q$ denote the number of places marked with \textbf{?} between $b_0$ and $b_a$ and let $X_1,\dots,X_q$ denote the number of balls landing at these positions. Then $q\leq a\leq 1/f$. Let $\alpha =\lceil C\eps/2 \rceil $. Let $A$ denote the event that $X_1+\dots+X_q \geq \lambda sf\alpha$, where $\lambda=\Omega(1)$ is a sufficiently small constant to be chosen later. We start by providing an upper bound on $\Pr[A]$. For this, we let $X=\sum_{i=1}^q X_i$ and note, like in the proof of Lemma~\ref{lemma:directballs}, that for each $i$, $X_i$ is dominated by a geometric variable with parameter $q$ where $q=\frac{n_\ell}{m+n_\ell}$. Here $n_\ell=2n/k$ is the upper bound on the number of ball hashing to level $\ell$. Furthermore, this claim holds even conditioning on the values of $(X_j)_{j< i}$. Let $s'=s\lambda$. Letting $Y_1,\dots,Y_{1/f}$ be independent such geometric variables and $Y=\sum_{i=1}^{1/f} Y_i$, we can thus upper bound 
$$
\Pr[A]\leq \Pr[Y\geq s'f\alpha].
$$
Note that 
$$
\E[Y_i]\leq \frac{n_{\ell}}{m}\leq \frac{2C}{k}\leq \frac{4\alpha}{\eps k}
$$
for $1\leq i\leq 1/f$, so that $\E[Y]\leq \frac{4\alpha}{\eps f k}\leq \alpha$, were the last inequality follows by assuming that $k=O(1/\eps^2)$ is sufficiently large. We may also assume that $s'f$ is larger than a sufficiently large constant, as described above, so we can upper bound 
$$
\Pr[A]\leq \Pr[Y\geq \E[Y]+s'f\alpha/2].
$$ 
By applying the bound~\cref{eq:geotail} of~\Cref{thm:geotail} similarly to how we did in the proof of~\Cref{lemma:directballs} it follows after some calculations that 
$$
\Pr[A]=\exp(-\Omega(sf)).
$$
Now condition on $A^c$ and let us focus on upper bounding $\Pr[R\geq  s \mid \mathcal{E}^c\cap A^c]$. For this, we apply~\eqref{thm:worstcase}. To get into the setting of that theorem, we contract the part of the history revealed so far between the back-most \textbf{?}-mark before $b_0$ and up til and including $b_a$ into a single \emph{unified} bin. By the conditioning on $A^c$, this unified bin comes with an extra start load of at most $\lambda sf\alpha$ balls, where we can choose $\lambda=\Omega(1)$ to be any sufficiently small constant. Thus, with the conditioning, we are exactly in the setting to apply~\Cref{thm:worstcase}, and we may thus bound
$$
\Pr[R\geq  s \mid \mathcal{E}^c\cap A^c]=\exp(-\Omega(sf)).
$$
It follows that 
$$
\Pr[R\geq  s \mid \mathcal{E}^c]\leq \Pr[A]+\Pr[R\geq  s \mid \mathcal{E}^c\cap A^c]=\exp(-\Omega(sf)),
$$
which is the desired. This completes the proof of~\Cref{claim:main}.
\end{proof}
As explained before the proof of~\Cref{claim:main}, this completes the proof of our theorem.
\end{proof}

\section{Insertions of Bins and Deletions of Balls and Bins}\label{sec:focus-on-insertions}
In this section, we prove the statements of~\Cref{thm:main1,thm:main} concerning the deletions of balls and insertions and deletions of bins. Combined with the results of~\Cref{sec:simple-improvement,sec:insertion,sec:faster-search}, this proves the two theorems in full. 

\paragraph{Deletions of Balls.} By the history independence, a deletion of a ball is symmetric to an insertion. The bins visited when deleting a ball $x$ are the same as the bins visited if $x$ had not been in the system and was inserted. Thus, we can upper bound the expected number of bins visited when deleting a ball by $O(1/\eps)$ for~\Cref{thm:main1} and $O(1/f)$ for~\Cref{thm:main}. This also upper bounds the number of balls moved in a deletions.

\paragraph{Deletions of Bins.} A deletion of a super bin is the same as reinserting the balls lying in that super bin. We claimed that  that the expected cost of deleting a super bin is $O(C/f)$ in~\Cref{thm:main}.  At first, this may seem completely obvious, since the cost of inserting a single ball is $O(1/f)$. However, this cost is for inserting a ball which is selected independently of the random choices of the hash function. Now, we are looking at the balls placed in a given super bin $b$, and those are highly dependent on the hash function. However, we do know that the expected average cost of all balls in the system is $O(1/f)$. Moreover, all bins are symmetric, so the bin $b$ behaves like a random bin amongst those in the system. Thanks to our load balancing, the balls are almost uniformly spread between the bins, so a random ball from a random bin is almost a uniformly random ball, so a random ball from $b$ has expected cost $O(1/f)$. There are at most $C$ balls in $b$ them, so the total expected cost is $O(C/f)$. A similar argument applies in the case of~\Cref{thm:main1}.

\paragraph{Insertions of Bins}
Again, by the history independence an insertion of a bin is symmetric to its deletion. The balls that are moved when inserting a bin are thus the same as if that bin was in the system but was deleted. Thus we can use the result for deletions of bins to conclude the bound of $O(C/f)$ on the number of balls moved when inserting a bin. A similar argument applies in the case of~\Cref{thm:main1}.

\section{Faster Searches Using the Level-Induced Priorities}\label{sec:faster-search}
In this section we make the calculation demonstrating that giving the balls random priorities, we obtain the better bounds on the number of bins visited during an insertion as claimed in~\Cref{thm:main1,thm:main}. This is not a new idea but is in fact an old trick~\cite{AK74:ordered-hash-table,knuth-vol3}. What we need to do is verify that applying it, with the particular formula for $f$ in~\cref{eq:formula-for-f}, we obtain the stated search times. In fact, what we require for the analysis is only the fact that if two balls hash to different levels, the ball hashing to the lower level has the highest priority of the two. Within a given level, the priorities can be arbitrary. This is important for the practical version of our scheme described in~\Cref{sec:implementation2} where the priorities are not uniformly random and independent of the hashing of balls, but where the hashing of the balls in fact determines the priorities, with higher hash values implying lower priorities. We start by arguing about the expected number of bins visited during a search as stated in~\Cref{thm:main1}.

\paragraph{Number of Bins Visited During a Search:~\Cref{thm:main1}.}
We encourage the reader to recall the setting described in the theorem. Define $X$ to be the number of bins visited during the search for some ball $x$. Importantly, if $x$ hashes to level $i$, then all virtual bins visited during the search of $x$ also lie on level $i$. For $i\in [k]$, we let $A_i$ denote the event that $x$ hashes to level $i$, so that $\Pr[A_i]=p_i$. By a standard Chernoff bound, the number of balls hashing to the first $i$ levels is $np_{\leq i}\pm m^{1/2+o(1)}=np_{\leq i}(1\pm m^{-1/2+o(1)})$, with probability $1-O(n^{-10})$, say. Here we used that $1/\eps=m^{o(1)}$. Condition on this event and define $n_{<i}$ to be the number of balls hashing to the first $i$ levels.  Finally letting $\eps_i$ be such that that $(1+\eps_i)n_{\leq i}/m=C$, we obtain from the part of~\Cref{thm:main1} concerning insertions (which was proved in~\Cref{sec:simple-improvement}) that $\E[X_i\mid A_i]=O\left(1/\eps_i\right)$. Moreover, $\Pr[A_i]\leq 2^{-i+1}$ for each $i$. It finally follows from the Chernoff bound above that $\eps_i\geq 1/2^i$, and so 
$$
\E[X]=\sum_{i\in [k]} \E[X \mid A_i]\Pr[A_i]=O(k)=O(\log 1/\eps)
$$
as desired.

\paragraph{Number of Bins Visited During a Search: \Cref{thm:main}.}
We now perform a similar calculation to the one above, in the more complicated setting of~\Cref{thm:main}. 
Let us for simplicity assume that the number of balls hashing to each level is exactly $n/k$. It is trivial to later remove this assumption. We also assume for simplicity that $k\geq 1/\eps^2$ is a power of $2$, $k=2^a$ for some $a$. Let $\ell=\lceil  \log (1/\eps)\rceil$ noting that $\ell \leq a$. We partition $[k]=I_0\cup \cdots I_\ell$, where $I_i=[2^a-2^{a-i-1}]\setminus [2^a-2^{a-i}]$ for $0\leq i\leq \ell-1$ and $I_\ell=[2^a] \setminus [2^a-2^{a-\ell}]$. Let $A_i$ be the event that the given ball to be searched $x$ hashes to some level in $I_i$, so that $\Pr[A_i]=2^{-i+1}$ for $0\leq i\leq \ell-1$ and $\Pr[A_\ell]=2^{-\ell}$. For $0\leq i\leq \ell$ we define $n_{\leq i}$ to be the number of balls hashing to some level in $I_0\cup\cdots\cup I_i$. Finally, let $\eps_i$ be such that $(1+\eps_i)n_{\leq i}/m=C$ and note that $\Pr[A_i]=\Theta(\eps_i)$.

We partition $[\ell+1]$ into three sets, $[\ell+1]=J_1\cup J_2\cup J_3$ where
$$
J_1=\{i\in [\ell+1]: C\leq \log 1/\eps_i\}, \quad I_2=\{i\in [\ell+1]: \log 1/\eps_i<C\leq \frac{1}{2\eps_i^2}\}, \quad \text{and} \quad I_3=\{i\in [\ell+1]: \frac{1}{2\eps_i^2}<C\}.
$$
It then follows from the part of~\Cref{thm:main} dealing with insertions (proved in~\Cref{sec:insertion}) that
$$
\E[X]=O\left(\sum_{i\in I_1} \frac{\Pr[A_i]}{\eps_i C}+\sum_{i\in I_2}\frac{\Pr[A_i]}{\eps_i \sqrt{C\log \left(\tfrac{1}{\eps_i\sqrt{C}}\right)}} +\sum_{i\in I_3} \Pr[A_i]\right)=O\left(1+\frac{|I_1|}{C}+\sum_{i\in I_2}\frac{1}{\sqrt{C\log \left(\tfrac{1}{\eps_i\sqrt{C}}\right)}}\right)
$$
We have the trivial bound $|I_1|\leq \ell+1 =O(\log 1/\eps)$. Moreover, for $i\in I_2$, it holds that
$$
e^{-C}\leq \eps_i\leq \sqrt{\frac{1}{2C}},
$$
and since $ \eps_i=\Theta( 2^{-i})$, it follows that 
$$
\sum_{i\in I_2}\frac{1}{\sqrt{C\log \left(\tfrac{1}{\eps_i\sqrt{C}}\right)}}=O\left(\frac{1}{\sqrt{C}}\sum_{i=1}^{O(C)}\frac{1}{\sqrt{i}}\right)=O(1).
$$ 
In conclusion, 
$$
\E[X]=O\left(1+\frac{\log 1/\eps}{C}\right),
$$
and splitting into the cases, $C\leq \log 1/\eps$ and $C<\log 1/\eps$, we obtain the desired result.

\section{The Practical Implementation.}\label{sec:practical-imp}
In this section we sketch why our results continue to holds when using the practical implementation described in~\Cref{sec:implementation2} even when the hashing is implemented using the practical mixed tabulation scheme from~\cite{DKRT15:k-part}.
Let us call the implementation from~\Cref{sec:implementation2} the \emph{practical implementation}.

We first discuss the practical implementation with fully random hashing. For this, recall the definition of a run (\Cref{def:run}).
Using a similar argumentation to the one used in the proof of~\Cref{thm:worstcase1}, it is easy to show that in this implementation, for any constant $\gamma=O(1)$, the maximal number of bins in a run is $O((\log n)/\eps)$ with probability $1-n^{-\gamma}$. Denote this high probability event $\mathcal{E}$. The number of balls lying in a run consisting of $\ell$ bins is trivially upper bounded by $C(\ell+1)$, so if $\mathcal{E}$ occurs, the maximal number of balls hashing to a fixed run is $O(C(\log n)/\eps)$. It follows that the number of balls that are forwarded past any given point is $O(C(\log n)/\eps)$. In particular for any level $i$, the number of balls that are forwarded from level $i$ to level $i+1$ is $O(C(\log n)/\eps)$ and the total number of such balls over all levels is $O(kC(\log n)/\eps)=m^{o(1)}$. One can now modify our inductive proof of~\Cref{thm:jakobs} to check that its statement remains valid even with the influence of these extra balls. Recall that in~\Cref{thm:jakobs}, $X_{i,j}$ denoted the number of bins with at most $j$ balls after the hashing of balls to levels $0,\dots,i-1$. Intuitively, in the inductive step, these $m^{o(1)}$ extra balls can only affect $m^{o(1)}$ bins which does not affect the high probability bound stating that $|X_{i,j}-\mu_{i,j}|\leq m^{1/2+o(1)}$. To exclude the bad event $\mathcal{E}^c$, we simply use a union bound and that $\mathcal{E}$ happened with very high probability. Once we have a version of~\Cref{thm:jakobs} which holds in the practical implementation, we can repeat the proof of~\Cref{thm:main-insertion1}, again using union bounds for the event that the insertion interacts with the run of size $m^{o(1)}$ entering the given level from below.  

Let us now discuss the implementation with mixed tabulation. A mixed tabulation hash function $h$ is defined using two  of the simple tabulation hash functions from~\cite{patrascu11charhash}, $h_1:\Sigma^c\to \Sigma^d$ and $h_2:\Sigma^{c+d} \to R$. Here $\Sigma $ is some character alphabet with $\Sigma^c=[u]$ and $c,d=O(1)$ are constants. Then for a key $x$, $h(x)=h_2(x,h_1(x))$. An important property of mixed tabulation, proved in~\cite{DKRT15:k-part}, is the following: Suppose $X$ is a set of keys, $p_1,\dots,p_b$ are output bit positions and $v_1,\dots,v_b$ are desired bit values. Let $Y$ be the set of keys $x\in  X$ for which the $p_i$'th output bit $h(x)_{p_i}=v_i$ for all $i$. If $\E[|Y|]\leq |\Sigma|/(1+\Omega(1))$, then the remaining output bits of the hash values in $Y$ are completely independent with probability $1-O(|\Sigma|^{1-\floor{d/2}})$. Another important property is that mixed tabulation obeys the same concentration bounds as simple tabulation on the number of balls landing in an interval~\cite{patrascu11charhash}.  

For the implementation with mixed tabulation, we use $k$ independent mixed tabulation functions, $h_1,\dots,h_k$, to distribute the virtual bins, and a single mixed tabulation function $h^*$ for the balls (independent of $h_1,\dots,h_k$). We moreover assume that $|\Sigma|=u^{1/c}=n^{\Omega(1)}$ which can be achieved using a standard universe reduction. To obtain our results using mixed tabulation, the idea is essentially the same as above. Again, we first need to prove an analogue of~\Cref{thm:jakobs}, and we would do this using induction on the level, bounding $|X_{i,j}-\mu_{i,j}|$ with high probability for each level $i$. To do this, we partition level $i$ into dyadic intervals where we expect at most $|\Sigma|/2$ balls or bins to hash. Then we can use the concentration bound from~\cite{patrascu11charhash} (which also holds for mixed tabulation) to obtain concentration on the number of bins of a given capacity from the previous levels hashing to each interval. Moreover, we can use the result of~\cite{DKRT15:k-part} to conclude that restricted to such an interval the hashing of balls and bins is fully random. Again, we can prove a version of~\Cref{thm:worstcase1} with mixed tabulation (by using that mixed tabulation provides concentration bounds) and conclude that the total number of balls that are forwarded from one interval to another is $O(C(\log n)/\eps)=m^{o(1)}=|\Sigma|^{o(1)}$. Essentially, the good distribution of the $X_{i-1,j}$ ensures that we also obtain a good distribution of the number of bins with each capacity in each of the intervals of level $i$ (using that the influence of the $|\Sigma|^{o(1)}$ balls passing between intervals can only affect $|\Sigma|^{o(1)}$ bins), and this gives a good distribution of the $X_{i,j}$. For this, it is important to be aware that there are now more intervals, essentially $n/|\Sigma|$, but since $|\Sigma|=n^{\Omega (1)}$, we still obtain that the total number of balls that are forwarded from one interval to another is $n^{1-\Omega(1)}$. The high probability bound we obtain on $|X_{i,j}-\mu_{i,j}|$  then instead takes the form $|X_{i,j}-\mu_{i,j}|=n^{1-\Omega(1)}$, but this still suffices for our purposes. Finally, we may prove a mixed tabulation version of~\Cref{thm:main-insertion1}, again using the fully random hashing within each interval and using union bounds to bound away the probability that we interact with the $|\Sigma|^{o(1)}$ balls that are forwarded between intervals. As such, showing that our results hold using mixed tabulation uses essentially the same ideas as is needed to show that the implementation in~\Cref{sec:implementation2} does, but with a finer partitioning into intervals. 

\section{Modifying the Analysis for Dynamically Changing Capacities}\label{sec:dyn-cap-analysis}
In this last short section, we describe how our analysis can still be carried through even with the dynamically changing capacities described in~\Cref{sec:dyn-load-cap}.  In the preceding sections, we assumed that the capacities of the bins were all equal to some integer $C$. However, in the setting of~\Cref{sec:dyn-load-cap} we are interested in the case where the \emph{total} capacity is $Cm$, with $m_1$ bins of capacity $\floor{C}$ and $m_2$ bins of capacity $\ceil{C}$. Thus $C$ is no longer assumed to be integral. This corresponds to all bins having the same capacity $\ceil{C}$, but where we include an extra $-1$'th level, where $m_1$ bins each receive a single artificial ball.

To analyse this new setting one can first observe that the proofs in~\Cref{sec:insertion} carry through without significant changes. Thus it is mainly in regards to the bounds on the fraction of non-full bins in~\Cref{sec:f-properties} that there is something to discuss. Recall, that we showed in~\Cref{sec:f-concentration} that the contribution of balls from the levels to a given random bin essentially behaves like a sum of geometric variables. With the terminology introduced in~\cite{aamand2021sums}, geometric variables are strongly monotone, and we could then apply the bound of that paper to estimate the point probabilities of this sum. Now Bernoulli variables are also strongly monotone, and so the bound in~\cite{aamand2021sums} can also be applied when some of the variables in the sum are Bernoulli. Now with the $-1$'th level described above, the number of balls landing in a random bin at the new lowest level is Bernoulli. Then the contribution to a random bin is essentially a sum of geometric variables and a single Bernoulli variable, and since the bound in~\cite{aamand2021sums} holds for such a sum, we can still use it for estimating the point probabilities of the number of balls in a bin. The remaining parts of the proof carries through almost unchanged.

\section*{Acknowledgement}
The authors wish to thank Noga Alon and Nick Wormald for helpful discussions. With Noga Alon, we studied sums of integer variables \cite{aamand2021sums}, including bounds needed for the analysis of this paper. In unpublished work on a the random graph $d$-process, Nick Wormald and Andrzej Ruci\'{n}ski also used the idea of analyzing balls in capacitated bins by throwing an appropriately larger number of balls into uncapacitated bins. They did not present an estimate on the number of non-full bins, as needed for this paper.

Research supported by grant 16582, Basic Algorithms Research Copenhagen (BARC), from the VILLUM Foundation.

{
\bibliographystyle{alpha} 
\bibliography{general}
}

\end{document}